\pgfplotsset{compat=newest}
\theoremstyle{plain} \newtheorem{theorem}{Theorem}
\theoremstyle{plain} 
\theoremstyle{plain} \newtheorem{corollary}[theorem]{Corollary}
\theoremstyle{plain} \newtheorem{proposition}{Proposition}
\theoremstyle{definition} \newtheorem{definition}{Definition}
\theoremstyle{plain} 
\theoremstyle{remark} \newtheorem{remark}{Remark}
\newlength{\mygraphwidth}
\newlength{\mygraphheight}
\begin{document}
	\title{Precoder Design and Statistical Power Allocation for MIMO-NOMA via User-Assisted Simultaneous Diagonalization}
	\author{\IEEEauthorblockN{Aravindh Krishnamoorthy\rlap{\textsuperscript{\IEEEauthorrefmark{1}\IEEEauthorrefmark{2}}},\,\,\, Zhiguo Ding\rlap{\textsuperscript{\IEEEauthorrefmark{3}}},\, and Robert Schober\rlap{\textsuperscript{\IEEEauthorrefmark{1}}}}\\
	\IEEEauthorblockA{\small \IEEEauthorrefmark{1}Friedrich-Alexander-Universit\"{a}t Erlangen-N\"{u}rnberg,  \IEEEauthorrefmark{2}Fraunhofer Institute for Integrated Circuits (IIS),\\\IEEEauthorrefmark{3}The  University  of  Manchester \vspace{-1.5cm}} \thanks{This paper was presented in part at the IEEE Int. Conf. Commun. (ICC) 2019 \cite{Krishnamoorthy2019} and the 24th Intl. ITG Workshop on Smart Antennas \cite{Krishnamoorthy2019a}. Computer programs for the most important analytical results in this paper can be downloaded from \protect\url{https://gitlab.com/aravindh.krishnamoorthy/mimo-noma}.}}
	\maketitle
	
	\begin{abstract}
		In this paper, we investigate the downlink precoder design for two-user power-domain multiple-input multiple-output (MIMO) non-orthogonal multiple access (NOMA). We propose a novel user-assisted (UA) simultaneous diagonalization (SD) based MIMO-NOMA scheme that achieves SD of the MIMO channels of both users through a combination of precoder design and low-complexity self-interference cancellation at the users, thereby considerably lowering the overall decoding complexity compared to joint decoding. The achievable ergodic user rates of the proposed scheme are analyzed for Rayleigh fading channels based on a finite-size random matrix theory framework, which is further exploited to develop a statistical power allocation algorithm. Simulation and numerical results show that the proposed UA-SD MIMO-NOMA scheme significantly outperforms orthogonal multiple access and a benchmark precoder design performing SD via generalized singular value decomposition in terms of the achievable ergodic rate region for most user rates. The ergodic rate region is further enhanced by a hybrid scheme which performs time sharing between the proposed UA-SD MIMO-NOMA scheme and single-user  MIMO.
	\end{abstract}
	
	\section{Introduction}
	Non-orthogonal multiple access (NOMA), proposed in \cite{Saito2013}, aims to improve the downlink spectral efficiency, connectivity, and user fairness\footnote{NOMA can improve user fairness if  appropriate power allocation is applied \cite{Saito2013}.} of wireless communication systems, by exploiting superposition coding at the transmitter and successive interference cancellation (SIC) at the receivers, compared to orthogonal multiple access (OMA) \cite{Ding2017, Islam2017}. Although the authors of \cite{Saito2013} primarily considered NOMA employing power-domain superposition coding, other forms of NOMA based on code division multiple access (CDMA) have been proposed as well \cite{Dai2018, Ding2017, Islam2017}. Nevertheless, in this paper, we focus on power-domain NOMA owing to its simplicity and compatibility with the existing 4th generation (4G) networks.
	
	While most of the early works on NOMA considered single-antenna transmitters \cite{Saito2013}, \cite{Ding2014}, the extension of NOMA to multiple-input multiple-output (MIMO) systems is of interest as it has the potential of combining the benefits of the multiple spatial streams facilitated by MIMO with the increased spectral efficiency and user fairness enabled by NOMA. Various works have shown that power-domain MIMO-NOMA enables significantly higher data rates compared to MIMO-OMA \cite{Liu2016,Zeng2017}. However, a careful MIMO precoder design was found to be crucial for realizing the potential performance gains \cite{Ding2016}. To this end, multiple MIMO-NOMA precoder designs have been reported. In particular, a model for quasi-degradation was proposed for multiple-input single-output (MISO) channels in \cite{Chen2016,Chen2017a}, and exploited for the development of a precoder for MISO channels. Extensions to MIMO channels were reported in \cite{Ding2016,Ali2017,Zeng2017a} based on user signal alignment. Optimal power allocation to further improve the performance of various precoding schemes was considered in \cite{Hanif2019,Liu2019,Wang2019,Xiao2019,Shao2019,Ding2019}.
	
	However, most existing precoding schemes entail a high decoding complexity at the receivers owing to the need for self- and inter-user-interference cancellation. In order to reduce the decoding complexity at the users, the authors of \cite{Ma2016,Chen2019} proposed a two-user MIMO-NOMA scheme employing simultaneous diagonalization (SD) of the MIMO channels of the users via generalized singular value decomposition (GSVD). The resulting precoding scheme decomposes the MIMO-NOMA channels of the users into multiple single-input single-output (SISO)-NOMA channels, thereby enabling low-complexity decoding. In addition, the authors of \cite{Chen2019} analyzed the achievable ergodic user rates of GSVD-based precoding for asymptotically large numbers of antennas via random matrix theory (RMT) for Rayleigh fading channels with equal power allocation (EPA).
	
	Although GSVD-based precoding enables SD, it also suffers from several shortcomings. Importantly, GSVD-based precoding has poor performance if the sum of the numbers of receiver antennas approaches the number of antennas at the base station (BS). Furthermore, GSVD requires the inversion of the MIMO channels of both users, which also degrades performance.
	
	In this paper, we propose a novel user-assisted (UA)-SD MIMO-NOMA scheme that simultaneously diagonalizes the MIMO channels of both users through a combination of precoder design and low-complexity self-interference cancellation at the users, and overcomes the shortcomings of GSVD-based precoding. Furthermore, we develop a finite-size RMT based framework in order to analyze the achievable ergodic rate of the proposed MIMO-NOMA precoding scheme in Rayleigh fading for a finite number of antennas, where both EPA and unequal power allocation (UPA) are considered. Moreover, we present large finite-dimension approximations for the achievable ergodic rate.
	
	This paper builds upon the conference versions in \cite{Krishnamoorthy2019} and \cite{Krishnamoorthy2019a}, which introduced UA-SD MIMO-NOMA for the case where the number of receiver antennas is less than the number of BS antennas, and EPA with fixed \cite{Krishnamoorthy2019} and flexible \cite{Krishnamoorthy2019a} partitioning of the power between the users. In this paper, we generalize the proposed MIMO-NOMA scheme to all possible antenna configurations and UPA. Furthermore, we extend the analysis and simulation results, and provide proofs for the derived analytical results. The main contributions of this paper can be summarized as follows.
	\begin{itemize}
		\item We propose downlink two-user power-domain UA-SD MIMO-NOMA precoding and decoding schemes which decompose the MIMO-NOMA channels of the users into multiple SISO-NOMA channels, assuming low-complexity self-interference cancellation at one of the users.
		\item We develop a finite-size RMT framework to evaluate the achievable ergodic user rates of the proposed MIMO-NOMA scheme in Rayleigh fading for EPA and UPA, respectively. Furthermore, we develop a long-term power allocation scheme requiring knowledge of only the channel statistics.
		\item Lastly, we exploit the developed finite-size RMT framework to obtain the ergodic achievable rate region of the proposed MIMO-NOMA scheme. A comparison with GSVD-based precoding and time division multiple access (TDMA) based MIMO-OMA reveals significant performance gains for most user rates.
	\end{itemize}
	
	The remainder of this paper is organized as follows. In Section \ref{sec:prelim}, we present the MIMO-NOMA system model and briefly review the existing SD schemes. In Section \ref{sec:proposed}, we present the proposed UA-SD MIMO-NOMA precoding and decoding schemes. In Section \ref{sec:performance}, we develop finite-size RMT based analysis frameworks for EPA and UPA, respectively, and a statistical power allocation algorithm. In Section \ref{sec:sim}, we present the simulation results, and Section \ref{sec:con} concludes the paper.
	
	\emph{Notation:}
	Boldface capital letters $\boldsymbol{X}$ and lower case letters $\boldsymbol{x}$ denote matrices and vectors, respectively. $\boldsymbol{X}^\mathrm{T},$ $\boldsymbol{X}^\mathrm{H},$  $\boldsymbol{X}^+,$ $\boldsymbol{X}^\frac{1}{2},$ $\mathrm{tr}\left(\boldsymbol{X}\right),$ and $\mathrm{det}\left(\boldsymbol{X}\right)$ denote the transpose, Hermitian transpose, Moore-Penrose pseudoinverse, symmetric square root, trace, and determinant of matrix $\boldsymbol{X},$ respectively. Furthermore, $\mathrm{col}\big(\boldsymbol{X}\big)$ and $\mathrm{null}\left(\boldsymbol{X}\right)$ denote the column space and null space of matrix $\boldsymbol{X},$ respectively. $\mathbb{C}^{m\times n}$ and $\mathbb{R}^{m\times n}$ denote the set of all $m\times n$ matrices with complex-valued and real-valued entries, respectively. The $(i,j)$-th entry of a matrix $\boldsymbol{X}$ is denoted by $[\boldsymbol{X}]_{ij}.$ Moreover, $\mathrm{diag}(d_1, \dots, d_n)$ denotes a diagonal matrix with diagonal elements $\{d_1, \dots, d_n\}.$ $\boldsymbol{I}_n$ denotes the $n\times n$ identity matrix, and $\boldsymbol{0}$ denotes an all-zeros matrix of appropriate dimension. The circularly symmetric complex Gaussian (CSCG) distribution with mean $\boldsymbol{\mu}$ and covariance matrix $\boldsymbol{\Sigma}$ is denoted by $\mathcal{CN}(\boldsymbol{\mu},\boldsymbol{\Sigma})$ and the matrix-variate Wishart distribution with parameters $p$ and $q$ and covariance matrix $\boldsymbol{\Sigma}$ is denoted by $\mathcal{CW}_p(q,\boldsymbol{\Sigma})$ \cite[Def. 3.2.1]{Gupta1999}. $\sim$ stands for ``distributed as''. $\mathcal{O}\mkern-\medmuskip\left(\cdot\right)$ and $\mathrm{E}\left[\cdot\right]$ denote the big-O notation and statistical expectation, respectively. For a tuple $\boldsymbol{t}$ containing 2-tuples as elements, i.e., $\boldsymbol{t} = ((x_1,y_1),(x_2,y_2),\dots,(x_n,y_n)), \pi_1(\boldsymbol{t}) = (x_1, x_2, \dots, x_n)$ denotes the first projection, and $\pi_2(\boldsymbol{t}) = (y_1, y_2, \dots, y_n)$ denotes the second projection. 

	\section{Preliminaries}
	\label{sec:prelim}
	In this section, we present the downlink two-user MIMO-NOMA system model and briefly review existing spatial MIMO-OMA and MIMO-NOMA precoding schemes that perform SD.
	
	\subsection{System Model}
	\label{sec:systemmodel}
	\label{ssec:systemmodel}	
	We consider a two-user\footnote{We restrict the number of paired users to two for tractability and to limit the overall decoding complexity at the receivers as pairing $K > 1$ users necessitates $K(K-1)/2$ successive interference cancellation stages. For $K>2$ users, a hybrid approach, such as in \cite[Sec. V-B]{Chen2019}, can be employed where the users are divided into groups of two users and each group is allocated orthogonal resources. Within each two-user group, the proposed MIMO-NOMA scheme can be applied.} downlink transmission, where the BS is equipped with $N$ antennas, and the users have $M_1$ and $M_2$ antennas, respectively. Furthermore, we assume that  user 1 is located farther away from the BS and experiences a higher path loss compared to user 2.
	
	The MIMO channel matrix between the BS and user $k,$ $\boldsymbol{H}_k, k=1,2,$ is modeled as
	\begin{equation}
		\frac{1}{\sqrt{\mathstrut \Pi_k}} \boldsymbol{H}_k, \label{eqn:plm}
	\end{equation}
	where matrix $\boldsymbol{H}_k \in \mathbb{C}^{M_k\times N}$ captures the small-scale fading effects, and its elements $[\boldsymbol{H}_k]_{ij} \sim \mathcal{CN}(0,1), i=1,\dots,M_k, j=1,\dots,N, k=1,2,$ are statistically independent\footnote{We assume statistically independent channel gains to facilitate the analysis of the ergodic achievable rate and UPA in Section \ref{sec:performance}. The UA-SD MIMO-NOMA precoding and decoding matrices proposed in Section \ref{sec:proposed} are also applicable if the channel gains are statistically dependent.} for all $i,j,k.$ Hence, matrices $\boldsymbol{H}_k, k=1,2,$ have full column or row rank with probability one. The scalar $\Pi_k > 0, k=1,2,$ models the path loss between the BS and user $k,$ where $\Pi_1 > \Pi_2.$
	
	Next, let $L = \mathrm{min}\left\{M_1+M_2, N\right\}$ denote the symbol vector length, and let $\boldsymbol{s}_1 = [s_{1,1},\dots,s_{1,L}]^\mathrm{T}$ $\in \mathbb{C}^{L\times 1}$ and $\boldsymbol{s}_2 = [s_{2,1},\dots,s_{2,L}]^\mathrm{T} \in \mathbb{C}^{L\times 1}$ denote the symbol vectors intended for the first and the second users, respectively. We assume that the $s_{k,l} \sim \mathcal{CN}(0,1), k=1,2, l=1,\dots,L,$ are statistically independent for all $k,l.$ We construct the MIMO-NOMA symbol vector $\boldsymbol{s} = [s_1, \dots, s_L]^\mathrm{T}$ as follows:
	\begin{align}
		\boldsymbol{s} = {\mathrm{diag}\left(\sqrt{p_{1,1}},\dots,\sqrt{p_{1,L}}\right)}\boldsymbol{s}_1 + {\mathrm{diag}\left(\sqrt{p_{2,1}},\dots,\sqrt{p_{2,L}}\right)}\boldsymbol{s}_2, \label{eqn:s}
	\end{align}
	where $\mathrm{E}\left[\boldsymbol{s}_1 \boldsymbol{s}_1^\mathrm{H}\right] = \mathrm{E}\left[\boldsymbol{s}_2 \boldsymbol{s}_2^\mathrm{H}\right] = \boldsymbol{I}_L,$ and $p_{k,l} \geq 0, k=1,2, l=1,\dots,L,$ is the transmit power allocated to the $l$-th symbol of user $k.$ The MIMO-NOMA symbol vector is precoded using a linear precoder matrix $\boldsymbol{P} \in \mathbb{C}^{N\times L},$ resulting in the transmit signal $\boldsymbol{x} = \boldsymbol{P} \boldsymbol{s}.$  The corresponding average transmit power, $P_T,$ is given by
	\begin{align}
	P_T = \mathrm{E}\left[\mathrm{tr}\left(\boldsymbol{P} \mathrm{diag}\left(p_{1,1} + p_{2,1}, \dots, p_{1,L} + p_{2,L}\right) \boldsymbol{P}^\mathrm{H}\right)\right]. \label{eqn:eppleq1}
	\end{align}
	
	At user $k,$ $k=1,2,$ the received signal, $\hat{\boldsymbol{y}}_k \in \mathbb{C}^{M_k\times 1},$ is given by
	\begin{align}
		\hat{\boldsymbol{y}}_k &= \frac{1}{\sqrt{\Pi_k}}\boldsymbol{H}_k \boldsymbol{x} + \hat{\boldsymbol{z}}_k = \frac{1}{\sqrt{\Pi_k}} \boldsymbol{H}_k \boldsymbol{P} \boldsymbol{s} + \hat{\boldsymbol{z}}_k,
	\end{align} 
	where $\hat{\boldsymbol{z}}_k \sim \mathcal{CN}(\boldsymbol{0},\sigma^2 \boldsymbol{I}_{M_k})$ denotes the additive white Gaussian noise (AWGN) vector. Furthermore, at user $k,$ $\hat{\boldsymbol{y}}_k$ is processed by a unitary detection matrix $\boldsymbol{Q}_k \in \mathbb{C}^{M_k\times M_k}$ leading to
	\begin{equation}
		\boldsymbol{y}_k = \boldsymbol{Q}_k \hat{\boldsymbol{y}}_k = \frac{1}{\sqrt{\Pi_k}} \boldsymbol{Q}_k\boldsymbol{H}_k \boldsymbol{P} \boldsymbol{s} + \boldsymbol{z}_k, \label{eqn:yk}
	\end{equation}
	where $\boldsymbol{z}_k = \boldsymbol{Q}_k\hat{\boldsymbol{z}}_k \sim \mathcal{CN}(\boldsymbol{0},\sigma^2 \boldsymbol{I}_{M_k}).$ $\boldsymbol{y}_k$ is subsequently used for detection. Furthermore, in order to be able to unveil the maximum performance of UA-SD MIMO-NOMA, for computation of the precoding and detection matrices, $\boldsymbol{P}$ and $\boldsymbol{Q}_k,k=1,2,$ perfect channel state information knowledge at the BS and the users is assumed.

	\subsection{Existing Simultaneous Diagonalization Schemes}
	\label{sec:existing}
	In this section, we briefly review existing SD precoding schemes that are based on joint zero forcing (JZF) \cite{Wiesel2008}, block diagonalization (BD) \cite{Spencer2004}, and GSVD-based MIMO-NOMA \cite{Chen2019}. Furthermore, exploiting finite-size RMT, we derive a simplified expression for the average BS transmit power in (\ref{eqn:eppleq1}) for GSVD-based MIMO-NOMA, and use this result to illustrate some of the shortcomings of this precoding scheme.
	
	\subsubsection{JZF Based Precoding \cite{Wiesel2008}}
	\label{sec:zfp}
	SD of the MIMO channels of the users can be accomplished via JZF if sufficient degrees of freedom (DoFs) are available at the BS, i.e., if $M_1+M_2 \leq N.$ In this case, from \cite{Wiesel2008}, the precoding and detection matrices are given by $\boldsymbol{Q}_1 = \boldsymbol{I}_{M_1}, \boldsymbol{Q}_2 = \boldsymbol{I}_{M_2},$ and $\boldsymbol{P} = \boldsymbol{H}^+,$ where $\boldsymbol{H} = \begin{bmatrix}\boldsymbol{H}_1^\mathrm{T} & \boldsymbol{H}_2^\mathrm{T}\end{bmatrix}^\mathrm{T}.$ Using the above precoder and detection matrices, $\boldsymbol{Q}_1 \boldsymbol{H}_1 \boldsymbol{P} = \begin{bmatrix}\boldsymbol{I}_{M_1} & \boldsymbol{0}\end{bmatrix}$ and $\boldsymbol{Q}_2 \boldsymbol{H}_2 \boldsymbol{P} = \begin{bmatrix}\boldsymbol{0} & \boldsymbol{I}_{M_2}\end{bmatrix}.$ 

	\subsubsection{BD Based Precoding \cite{Spencer2004}}
	\label{sec:bd}
	SD of the MIMO channels of the users can be performed based on BD if sufficient DoFs are available at the BS, i.e., if $M_1+M_2 \leq N.$ Let a basis of $\mathrm{null}\left(\boldsymbol{H}_1\right)$ and $\mathrm{null}\left(\boldsymbol{H}_2\right)$ be contained in $\bar{\boldsymbol{H}}_1 \in \mathbb{C}^{L\times M_2}$ and $\bar{\boldsymbol{H}}_2 \in \mathbb{C}^{L\times M_1}$ with dimensions\footnote{For $M_1+M_2 \leq N,$ bases $\bar{\boldsymbol{H}}_1$ and $\bar{\boldsymbol{H}}_2$ with dimensions $M_2$ and $M_1,$ respectively, can always be found.} $M_2$ and $M_1,$ respectively. For BD, from \cite{Spencer2004}, the precoding and detection matrices are given by $\boldsymbol{Q}_1 = \boldsymbol{U}_1^\mathrm{H},$ $\boldsymbol{Q}_2 = \boldsymbol{U}_2^\mathrm{H},$ and $\boldsymbol{P} = \begin{bmatrix}\frac{1}{\sqrt{M_1}}\bar{\boldsymbol{H}}_2\boldsymbol{V}_1 & \frac{1}{\sqrt{{M_2}}}\bar{\boldsymbol{H}}_1\boldsymbol{V}_2\end{bmatrix},$	where $\boldsymbol{U}_1 \in \mathbb{C}^{M_1\times M_1}, \boldsymbol{U}_2 \in \mathbb{C}^{M_2\times M_2}, \boldsymbol{V}_1 \in \mathbb{C}^{N\times M_1},$ and $\boldsymbol{V}_2 \in \mathbb{C}^{N\times M_2}$ are unitary matrices obtained from the SVDs $\frac{1}{\sqrt{{M_1}}}\boldsymbol{H}_1\bar{\boldsymbol{H}}_2 = \boldsymbol{U}_1 \boldsymbol{\Sigma}_1 \boldsymbol{V}_1$ and $\frac{1}{\sqrt{M_2}}\boldsymbol{H}_2\bar{\boldsymbol{H}}_1 = \boldsymbol{U}_2 \boldsymbol{\Sigma}_2 \boldsymbol{V}_2,$ and $\boldsymbol{\Sigma}_1$ and $\boldsymbol{\Sigma}_2$ contain the singular values of $\frac{1}{\sqrt{M_1}}\boldsymbol{H}_1\bar{\boldsymbol{H}}_2$ and $\frac{1}{\sqrt{M_2}}\boldsymbol{H}_2\bar{\boldsymbol{H}}_1$ on their main diagonals, respectively. Furthermore, as $\bar{\boldsymbol{H}}_2\boldsymbol{V}_1$ and $\bar{\boldsymbol{H}}_1\boldsymbol{V}_2$ are unitary matrices, (\ref{eqn:eppleq1}) simplifies to 
	\begin{align}
		P_T = \sum_{k = 1}^{2} \frac{1}{M_k} \sum_{l = 1}^{L} p_{k,l}. \label{eqn:bdpt}
	\end{align}
	Using the above precoder and detection matrices, we have $\boldsymbol{Q}_1 \boldsymbol{H}_1 \boldsymbol{P} = \boldsymbol{\Sigma}_1$ and $\boldsymbol{Q}_2 \boldsymbol{H}_2 \boldsymbol{P} = \boldsymbol{\Sigma}_2.$ 
	
	JZF and BD based precoding may be interpreted as spatial MIMO-OMA schemes and do not require SIC. Next, we review GSVD-based MIMO-NOMA which can simultaneously diagonalize the MIMO channels of the users also for $M_1 + M_2 > N,$ but requires SIC at one of the users.
	
	\subsubsection{GSVD Based Precoding \cite{Ma2016},\cite{Chen2019}}
	\label{sec:gsvd}
	GSVD \cite{VanLoan1976,Paige1981} is a matrix decomposition technique that simultaneously diagonalizes two matrices having equal numbers of columns. For the problem at hand, channel matrices $\boldsymbol{H}_1$ and $\boldsymbol{H}_2$ may be simultaneously diagonalized as follows:
	\begin{align}
		\boldsymbol{Q}_1 \boldsymbol{H}_1 \boldsymbol{Z} = \boldsymbol{C},\qquad  \boldsymbol{Q}_2 \boldsymbol{H}_2 \boldsymbol{Z} = \boldsymbol{S}, \label{eqn:gsvd}
	\end{align}
	where $\boldsymbol{Z} \in \mathbb{C}^{N\times L}$ is a full matrix, $\boldsymbol{C} \in \mathbb{R}^{M_1\times L}$ and $\boldsymbol{S} \in \mathbb{R}^{M_2\times L}$ are given by 
	\begin{align}
	\scalebox{1}{\mbox{\ensuremath{\displaystyle \boldsymbol{C} = \begin{bmatrix}
		\boldsymbol{0}	& \boldsymbol{C}_1 & \boldsymbol{0}\\
		\boldsymbol{0}	& \boldsymbol{0}	& \boldsymbol{I}_{\bar{M}_1}
	\end{bmatrix},\qquad 	\boldsymbol{S} = \begin{bmatrix}
		\boldsymbol{I}_{\bar{M}_2}	& \boldsymbol{0} & \boldsymbol{0}\\
		\boldsymbol{0}	& \boldsymbol{S}_1	& \boldsymbol{0}
	\end{bmatrix},}}} \label{eqn:cs}
	\end{align}
	such that $\boldsymbol{C}^\mathrm{H}\boldsymbol{C} + \boldsymbol{S}^\mathrm{H}\boldsymbol{S} = \boldsymbol{I}_L,$ and $\boldsymbol{Q}_1 \in \mathbb{C}^{M_1\times M_1}$ and $\boldsymbol{Q}_2 \in \mathbb{C}^{M_2\times M_2}$ are unitary matrices. Furthermore, $\bar{M}_1 = \mathrm{min}\left\{M_1,\mathrm{max}\left\{0,N-M_2\right\}\right\}, \bar{M}_2 = \mathrm{min}\left\{M_2,\mathrm{max}\left\{0,N-M_1\right\}\right\},$ and $M = N-\bar{M}_1-\bar{M}_2.$ Moreover, $\boldsymbol{C}_1 \in \mathbb{R}^{M\times M}$ and $\boldsymbol{S}_1 \in \mathbb{R}^{M\times M}$ are diagonal matrices such that $\boldsymbol{C}_1 \boldsymbol{S}_1^{-1}$ contains the generalized singular values (GSVs) of $\boldsymbol{H}_1$ and $\boldsymbol{H}_2$ on its main diagonal.
	
	Hence, by choosing the precoder matrix as $\boldsymbol{P} =\boldsymbol{Z}$ and the detection matrices for users 1 and 2 as $\boldsymbol{Q}_1$ and $\boldsymbol{Q}_2,$ respectively, from (\ref{eqn:gsvd}), the MIMO-NOMA channels of both users are simultaneously diagonalized into SISO-NOMA channels \cite{Chen2019} with $\boldsymbol{C}$ and $\boldsymbol{S}$ as the equivalent MIMO channels for users 1 and 2, respectively.
	
	
	We use GSVD-based precoding as a baseline for comparison with the proposed scheme. GSVD-based precoding \cite{Chen2019} uses identical powers for all $s_l, l = 1,\dots,L,$ i.e., $p_{1,l} + p_{2,l} = P, P \geq 0,\,\forall\,l.$ Hence, in this case, (\ref{eqn:eppleq1}) simplifies to the expression given in Proposition \ref{prop:tgsvd} below.
	\begin{proposition}
		\label{prop:tgsvd}
		For GSVD-based precoding \cite{Chen2019}, (\ref{eqn:eppleq1}) simplifies as follows:
		\begin{align}
		\scalebox{0.9}{\mbox{\ensuremath{\displaystyle P_T = PL \left| \frac{1}{M_1+M_2-N}\right|.}}} \label{eqn:pgsvd}
		\end{align}
	\end{proposition}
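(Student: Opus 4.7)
The plan is to first exploit the fact that GSVD-based precoding assigns a common total power $P$ to every transmit symbol, so the inner diagonal matrix in (\ref{eqn:eppleq1}) collapses to $P\boldsymbol{I}_L$ and the formula simplifies to $P_T = P\,\mathrm{E}\left[\mathrm{tr}\left(\boldsymbol{Z}\boldsymbol{Z}^\mathrm{H}\right)\right]$. The whole proof therefore reduces to evaluating the scalar $\mathrm{E}\left[\mathrm{tr}\left(\boldsymbol{Z}\boldsymbol{Z}^\mathrm{H}\right)\right]$, where $\boldsymbol{Z}$ is the GSVD right factor associated with the stacked channel $\boldsymbol{H} = \begin{bmatrix}\boldsymbol{H}_1^\mathrm{T} & \boldsymbol{H}_2^\mathrm{T}\end{bmatrix}^\mathrm{T}$.

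The next step is to identify $\boldsymbol{Z}\boldsymbol{Z}^\mathrm{H}$ in closed form. From (\ref{eqn:gsvd}), $\boldsymbol{H}_k\boldsymbol{Z} = \boldsymbol{Q}_k^\mathrm{H}\boldsymbol{C}$ and $\boldsymbol{Q}_k^\mathrm{H}\boldsymbol{S}$ for $k=1,2$; substituting into $\boldsymbol{H}^\mathrm{H}\boldsymbol{H} = \boldsymbol{H}_1^\mathrm{H}\boldsymbol{H}_1 + \boldsymbol{H}_2^\mathrm{H}\boldsymbol{H}_2$ and using the unitarity of $\boldsymbol{Q}_1,\boldsymbol{Q}_2$ together with the normalization $\boldsymbol{C}^\mathrm{H}\boldsymbol{C}+\boldsymbol{S}^\mathrm{H}\boldsymbol{S}=\boldsymbol{I}_L$ yields the key identity $\boldsymbol{Z}^\mathrm{H}\boldsymbol{H}^\mathrm{H}\boldsymbol{H}\boldsymbol{Z} = \boldsymbol{I}_L$. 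Coupling this with the structural property inherited from the Paige--Saunders form of the GSVD, namely $\boldsymbol{Z} = \boldsymbol{Q}\begin{bmatrix}\boldsymbol{0}\\\boldsymbol{R}^{-1}\end{bmatrix}$ for the unitary factor $\boldsymbol{Q}$ and invertible triangular factor $\boldsymbol{R}$ of $\boldsymbol{H}$, a direct block computation gives $\boldsymbol{Z}\boldsymbol{Z}^\mathrm{H} = (\boldsymbol{H}^\mathrm{H}\boldsymbol{H})^+$. Taking the trace and invoking $\mathrm{tr}\left((\boldsymbol{H}^\mathrm{H}\boldsymbol{H})^+\right) = \mathrm{tr}\left((\boldsymbol{H}\boldsymbol{H}^\mathrm{H})^+\right)$ (since the two Gram matrices share the same nonzero eigenvalues) reduces the remaining task to computing the expected trace of the pseudoinverse of a central complex Wishart matrix.

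The last step is routine: by the Gaussianity of $\boldsymbol{H}_1$ and $\boldsymbol{H}_2$, either $\boldsymbol{H}\boldsymbol{H}^\mathrm{H}$ (when $M_1+M_2\leq N$) or $\boldsymbol{H}^\mathrm{H}\boldsymbol{H}$ (when $M_1+M_2>N$) is an invertible $p\times p$ central Wishart matrix $\mathcal{CW}_p(q,\boldsymbol{I}_p)$ with $p = L = \min\{M_1+M_2,N\}$ and $q = \max\{M_1+M_2,N\}$, and the standard identity $\mathrm{E}\left[\mathrm{tr}\left(\boldsymbol{W}^{-1}\right)\right] = p/(q-p)$ valid for $q>p$ applies to the invertible one while matching the pseudoinverse trace of the other. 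This yields $\mathrm{E}\left[\mathrm{tr}\left(\boldsymbol{Z}\boldsymbol{Z}^\mathrm{H}\right)\right] = L/|M_1+M_2-N|$, and multiplication by $P$ delivers (\ref{eqn:pgsvd}). The main obstacle I foresee is the second step, since $\boldsymbol{Z}$ is not uniquely determined by the relations (\ref{eqn:gsvd}) alone (one may add any component lying in $\mathrm{null}(\boldsymbol{H})$ to its columns); one must commit explicitly to the Paige--Saunders construction, which forces $\mathrm{col}(\boldsymbol{Z})\subseteq\mathrm{col}(\boldsymbol{H}^\mathrm{H})$ and thereby pins down $\boldsymbol{Z}\boldsymbol{Z}^\mathrm{H}$ as the Moore--Penrose pseudoinverse $(\boldsymbol{H}^\mathrm{H}\boldsymbol{H})^+$.
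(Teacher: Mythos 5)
Your proof is correct and follows essentially the same route as the paper: reduce $P_T$ to $P\,\mathrm{E}\left[\mathrm{tr}\left(\boldsymbol{Z}\boldsymbol{Z}^\mathrm{H}\right)\right]$, use $\boldsymbol{C}^\mathrm{H}\boldsymbol{C}+\boldsymbol{S}^\mathrm{H}\boldsymbol{S}=\boldsymbol{I}_L$ to identify $\boldsymbol{Z}\boldsymbol{Z}^\mathrm{H}$ with the (pseudo)inverse of $\boldsymbol{H}_1^\mathrm{H}\boldsymbol{H}_1+\boldsymbol{H}_2^\mathrm{H}\boldsymbol{H}_2$, and finish with the expected trace of an inverse complex Wishart matrix. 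Your treatment of the $M_1+M_2\leq N$ case is in fact more careful than the paper's (which only says ``proceeding analogously''), since you explicitly address the null-space ambiguity of $\boldsymbol{Z}$ that makes the pseudoinverse identification nontrivial there.
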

	\begin{proof}
		Please refer to Appendix \ref{app:tgsvd}.
	\end{proof}
	
	\begin{remark}
		The expression given in Proposition \ref{prop:tgsvd} simplifies to \cite[Thm. 2]{Chen2019} for the asymptotic case $\frac{M_1}{N} = \frac{M_2}{N} = \eta, M_1,M_2,N \to \infty.$
	\end{remark} 

	The right hand side of (\ref{eqn:pgsvd}) increases unboundedly as $M_1+M_2 \to N,$ leading to an exceedingly large transmit power. Furthermore, when $M_1+M_2=N,$ the transmit power becomes infinite, rendering communication impossible. Moreover, from (\ref{eqn:cs}), the effective channels of $\bar{M}_1$ spatial streams of user 1 and $\bar{M}_2$ spatial streams of user 2 are forced to $\boldsymbol{I}_{\bar{M}_1}$ and $\boldsymbol{I}_{\bar{M}_2},$ respectively, cf. (\ref{eqn:gsvd}), (\ref{eqn:cs}), thereby necessitating channel inversion at the BS, which has a detrimental effect on performance. Hence, in the following, we develop the proposed UA-SD MIMO-NOMA scheme which overcomes the above limitations while still achieving SD of the users' channels.
	
	\section{The Proposed UA-SD MIMO-NOMA Scheme}
	\label{sec:proposed}
	In this section, we begin by describing a new matrix decomposition technique. Then, we utilize this matrix decomposition to develop the proposed UA-SD MIMO-NOMA precoding and decoding schemes. Lastly, we derive expressions for the achievable rates of the users.

	\subsection{Simultaneous Diagonalization}
	Let $\bar{M}_1, \bar{M}_2,$ and $M$ be as defined in Section \ref{sec:gsvd}. The proposed SD matrix decomposition is compactly stated in the following theorem.
		\begin{theorem}
		\label{thm:mgsvd}
		Let $\boldsymbol{H}_1$ and $\boldsymbol{H}_2$ be defined as in Section \ref{ssec:systemmodel}. Then, there exist unitary matrices $\boldsymbol{Q}_1 \in \mathbb{C}^{M_1\times M_1}$ and $\boldsymbol{Q}_2 \in \mathbb{C}^{M_2\times M_2},$ and a full matrix $\boldsymbol{Z} \in \mathbb{C}^{N\times L}$  such that
		\begin{align}
		\boldsymbol{Q}_1 \boldsymbol{H}_1 \boldsymbol{Z} = \begin{bmatrix*}[l]\boldsymbol{\Sigma}_1 & \boldsymbol{0} & \boldsymbol{0}\phantom{\hspace{4pt}}\\\boldsymbol{0} & \boldsymbol{D}_1 & \boldsymbol{0}\end{bmatrix*}, \qquad
		\boldsymbol{Q}_2 \boldsymbol{H}_2 \boldsymbol{Z} =  \begin{bmatrix*}[l] \boldsymbol{T} & \boldsymbol{0}\phantom{{}_1} & \boldsymbol{D}_2 \\	\boldsymbol{\Sigma}_2 & \boldsymbol{0} & \boldsymbol{0}	\end{bmatrix*}, \label{eqn:mgsvd}
		\end{align}
		where $\boldsymbol{\Sigma}_1, \boldsymbol{\Sigma}_2 \in \mathbb{R}^{M\times M}$ are diagonal matrices such that $\boldsymbol{\Sigma}_2 \boldsymbol{\Sigma}_1^{-1}$ contains the GSVs of $\boldsymbol{H}_2$ and $\boldsymbol{H}_1$ on its main diagonal\footnote{Note that the GSVs of $\boldsymbol{H}_2$ and $\boldsymbol{H}_1$ are the solutions $\mu \geq 0$ to the equation $\mathrm{det}\left(\boldsymbol{H}_2^\mathrm{H}\boldsymbol{H}_2 - \mu^2 \boldsymbol{H}_1^\mathrm{H} \boldsymbol{H}_1\right) = 0.$ They are the inverses of the GSVs of $\boldsymbol{H}_1$ and $\boldsymbol{H}_2,$ which are the solutions $\mu' \geq 0$ to $\mathrm{det}\left(\boldsymbol{H}_1^\mathrm{H}\boldsymbol{H}_1 - \mu'^2 \boldsymbol{H}_2^\mathrm{H} \boldsymbol{H}_2\right) = 0.$}, $\boldsymbol{T} \in \mathbb{C}^{\bar{M}_2\times M}$ is a full matrix, and $\boldsymbol{D}_k \in \mathbb{R}^{\bar{M}_k\times \bar{M}_k}, k=1,2,$ are diagonal matrices. $\boldsymbol{D}_1$ is defined as
		\begin{align}
			\boldsymbol{D}_1 = \begin{cases}
			\boldsymbol{S}_1 & \text{if $M_1+M_2 \leq N$} \\
			\boldsymbol{I}_{\bar{M}_1} & \text{otherwise,}
			\end{cases} \label{eqn:d1}
		\end{align}	
		where $\boldsymbol{S}_1$ contains the $\bar{M}_1$ singular values of $\frac{1}{\sqrt{\mathstrut\bar{M}_1}}\boldsymbol{H}_1\bar{\boldsymbol{H}}_2$ on its main diagonal. $\boldsymbol{D}_2$ contains the $\bar{M}_2$ singular values of $\frac{1}{\sqrt{\mathstrut\bar{M}_2}}\boldsymbol{H}_2\bar{\boldsymbol{H}}_1$ on its main diagonal. 
	\end{theorem}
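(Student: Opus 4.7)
The plan is to prove the theorem by constructing $\boldsymbol{Q}_1$, $\boldsymbol{Q}_2$, and $\boldsymbol{Z}$ explicitly, exploiting the null-space structure of the two channels and, on the remaining shared subspace, the standard GSVD. I would partition $\boldsymbol{Z} = [\boldsymbol{Z}^{(1)}\;\boldsymbol{Z}^{(2)}\;\boldsymbol{Z}^{(3)}]$ into column blocks of widths $M$, $\bar{M}_1$, $\bar{M}_2$, where $\boldsymbol{Z}^{(2)}$ lies in $\mathrm{null}(\boldsymbol{H}_2)$ and carries the $\bar{M}_1$ user-1-only streams, $\boldsymbol{Z}^{(3)}$ lies in $\mathrm{null}(\boldsymbol{H}_1)$ and carries the $\bar{M}_2$ user-2-only streams, and $\boldsymbol{Z}^{(1)}$ spans an $M$-dimensional subspace on which both $\boldsymbol{H}_1$ and $\boldsymbol{H}_2$ act nontrivially.

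For the user-1-only block I would set $\boldsymbol{Z}^{(2)}$ proportional to $\bar{\boldsymbol{H}}_2 \boldsymbol{V}_{12}$, where $\bar{\boldsymbol{H}}_2 \in \mathbb{C}^{N \times \bar{M}_1}$ is an orthonormal basis of a $\bar{M}_1$-dimensional subspace of $\mathrm{null}(\boldsymbol{H}_2)$ and $\boldsymbol{V}_{12}$ comes from the thin SVD $\frac{1}{\sqrt{\bar{M}_1}} \boldsymbol{H}_1 \bar{\boldsymbol{H}}_2 = \boldsymbol{U}_{12} \boldsymbol{S}_1 \boldsymbol{V}_{12}^\mathrm{H}$. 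By construction $\boldsymbol{H}_2 \boldsymbol{Z}^{(2)} = \boldsymbol{0}$ and $\boldsymbol{H}_1 \boldsymbol{Z}^{(2)}$ is proportional to $\boldsymbol{U}_{12}\boldsymbol{S}_1$; choosing the scaling produces $\boldsymbol{D}_1 = \boldsymbol{S}_1$ when $M_1+M_2 \leq N$ and $\boldsymbol{D}_1 = \boldsymbol{I}_{\bar{M}_1}$ (via an extra channel inversion) otherwise, matching (\ref{eqn:d1}). This construction also forces $\mathrm{col}(\boldsymbol{U}_{12})$ to appear as a fixed $\bar{M}_1$-dimensional subset of the rows of $\boldsymbol{Q}_1$. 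A symmetric construction for $\boldsymbol{Z}^{(3)}$, using $\bar{\boldsymbol{H}}_1$ and the SVD $\frac{1}{\sqrt{\bar{M}_2}} \boldsymbol{H}_2 \bar{\boldsymbol{H}}_1 = \boldsymbol{U}_{21} \boldsymbol{D}_2 \boldsymbol{V}_{21}^\mathrm{H}$, yields the diagonal $\boldsymbol{D}_2$ and fixes $\mathrm{col}(\boldsymbol{U}_{21})$ as part of $\boldsymbol{Q}_2$.

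For the shared block I would apply the GSVD to the pair $(\boldsymbol{H}_1, \boldsymbol{H}_2)$ restricted to the orthogonal complement of $\mathrm{col}(\bar{\boldsymbol{H}}_1) + \mathrm{col}(\bar{\boldsymbol{H}}_2)$, which is almost surely of dimension $M$. This delivers $\boldsymbol{Z}^{(1)}$ together with diagonal $\boldsymbol{\Sigma}_1, \boldsymbol{\Sigma}_2 \in \mathbb{R}^{M\times M}$. Its left singular vectors for $\boldsymbol{H}_1$ complete $\boldsymbol{Q}_1$ to an $M_1 \times M_1$ unitary matrix and place $\boldsymbol{\Sigma}_1$ in the top-left block of $\boldsymbol{Q}_1\boldsymbol{H}_1\boldsymbol{Z}$, while its left singular vectors for $\boldsymbol{H}_2$ complete the remaining $M$ free rows of $\boldsymbol{Q}_2$ (the first $\bar{M}_2$ rows being already fixed to $\boldsymbol{U}_{21}^\mathrm{H}$) and place $\boldsymbol{\Sigma}_2$ in the bottom-left block. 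The top-$\bar{M}_2$ rows of $\boldsymbol{Q}_2 \boldsymbol{H}_2 \boldsymbol{Z}^{(1)}$ are not constrained to any particular structure by the preceding steps and therefore form a generic full matrix, which is exactly $\boldsymbol{T}$.

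The step I expect to be the main obstacle is showing that the ratio $\boldsymbol{\Sigma}_2\boldsymbol{\Sigma}_1^{-1}$ produced by the restricted decomposition on the $M$-dimensional shared subspace coincides with the GSVs of the original pair $(\boldsymbol{H}_2, \boldsymbol{H}_1)$, i.e., the solutions of $\mathrm{det}(\boldsymbol{H}_2^\mathrm{H}\boldsymbol{H}_2 - \mu^2 \boldsymbol{H}_1^\mathrm{H}\boldsymbol{H}_1) = 0$. I would argue this by identifying $\mathrm{col}(\bar{\boldsymbol{H}}_1)$ with the generalized eigenvectors at $\mu = \infty$ and $\mathrm{col}(\bar{\boldsymbol{H}}_2)$ with those at $\mu = 0$, so that removing these two extremes from $\mathbb{C}^N$ isolates the $M$ finite nonzero GSVs. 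Finally, the degenerate regime $M_1 + M_2 \leq N$ (where $M = 0$) requires a brief separate check: the shared block together with $\boldsymbol{\Sigma}_1$, $\boldsymbol{\Sigma}_2$, and $\boldsymbol{T}$ disappears, and the remaining construction coincides with the BD scheme of Section \ref{sec:bd}, establishing the theorem in this regime as well.
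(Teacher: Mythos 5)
There is a genuine gap in the treatment of the shared block, and it is precisely the point where the paper's construction does real work. Your three column blocks are not independent: they are coupled through the single unitary $\boldsymbol{Q}_1$. For the claimed form, the last $\bar{M}_1$ rows of $\boldsymbol{Q}_1$ are pinned to $\boldsymbol{U}_{12}^\mathrm{H}$ (the left singular vectors of $\boldsymbol{H}_1\bar{\boldsymbol{H}}_2$), and the zero block sitting below $\boldsymbol{\Sigma}_1$ then requires $\boldsymbol{U}_{12}^\mathrm{H}\boldsymbol{H}_1\boldsymbol{Z}^{(1)}=\boldsymbol{0}$, i.e., $\mathrm{col}\big(\boldsymbol{H}_1\boldsymbol{Z}^{(1)}\big)\perp\mathrm{col}\big(\boldsymbol{H}_1\bar{\boldsymbol{H}}_2\big)$ in $\mathbb{C}^{M_1}$. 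Choosing $\boldsymbol{Z}^{(1)}$ in the Euclidean orthogonal complement of $\mathrm{col}(\bar{\boldsymbol{H}}_1)+\mathrm{col}(\bar{\boldsymbol{H}}_2)$ does not deliver this, because $\boldsymbol{H}_1$ is not an isometry: orthogonality in the domain is destroyed in the range. The same coupling reappears when you try to pick the remaining $M$ rows of $\boldsymbol{Q}_1$ and of $\boldsymbol{Q}_2$ to diagonalize $\boldsymbol{H}_1\boldsymbol{Z}^{(1)}$ and the relevant projection of $\boldsymbol{H}_2\boldsymbol{Z}^{(1)}$ \emph{with the same right factor} $\boldsymbol{Z}^{(1)}$ while remaining orthogonal to the already-fixed row blocks $\boldsymbol{U}_{12}^\mathrm{H}$ and $\boldsymbol{U}_{21}^\mathrm{H}$; a generic GSVD of the restricted pair gives you no control over where its left unitary factors sit relative to those fixed blocks. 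Your proof sketch asserts that the left singular vectors ``complete'' $\boldsymbol{Q}_1$ and $\boldsymbol{Q}_2$, but that completion is exactly what has to be proven and it fails for the subspace you chose.

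The paper resolves this with an asymmetric construction you have not reproduced: it first forces the \emph{entire} effective channel of user 1 on $[\boldsymbol{K}\ \ \bar{\boldsymbol{H}}_2]$ to the identity, $\hat{\boldsymbol{U}}_1^\mathrm{H}\boldsymbol{H}_1[\boldsymbol{K}\ \ \bar{\boldsymbol{H}}_2](\hat{\boldsymbol{V}}_1\boldsymbol{\Sigma}_1^+)=\boldsymbol{I}_{M_1}$ (the ``sustained channel inversion'' referred to in Section \ref{sec:hybrid}), so that every subsequent unitary column rotation applied to separate or diagonalize user 2's streams is absorbed by the matching row rotation at user 1 without disturbing diagonality ($\boldsymbol{Q}^\mathrm{H}\boldsymbol{I}_{M_1}\boldsymbol{Q}=\boldsymbol{I}_{M_1}$). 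User 2's effective channel through this precoder is then triangularized by a QR step (which zeroes the $\bar{M}_1$ user-1-only columns) and its bottom $M\times M$ block is diagonalized by an SVD; the residual, non-zeroed top block is exactly $\boldsymbol{T}$, which is why $\boldsymbol{T}$ appears only on user 2's side. This mechanism is the missing idea in your argument; without it (or an explicit proof that your $\boldsymbol{Z}^{(1)}$ can be chosen in a suitably $\boldsymbol{H}_1^\mathrm{H}\boldsymbol{H}_1$-weighted complement compatible with both users simultaneously), the off-diagonal blocks you claim are zero will generically be nonzero. Your identification of the GSVs via generalized eigenvectors at $\mu=0$ and $\mu=\infty$ would also need care for the same reason — those eigenspaces are not Euclidean-orthogonal complements — whereas the paper proves the GSV property by showing the relevant singular values coincide with those of $\boldsymbol{H}_2\boldsymbol{H}_1^+$ (Appendix \ref{app:equif}).
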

	\begin{proof}
		Please refer to Appendix \ref{app:mgsvd}.
	\end{proof}
	
	In the following, we utilize the matrix decomposition in Theorem \ref{thm:mgsvd} to develop the proposed precoder and decoder designs.
	
	\subsection{Precoder Design}
	\label{sec:precoderdesign}	
	Matrices $\boldsymbol{Q}_1$ and $\boldsymbol{Q}_2$ from (\ref{eqn:mgsvd}) are used as detection matrices and the precoder matrix is chosen as $\boldsymbol{P} = \boldsymbol{Z}.$ Hence, based on Theorem \ref{thm:mgsvd}, the received signals at the users can be obtained from (\ref{eqn:yk}) as follows
	\begin{align}
	\boldsymbol{y}_1 &= \frac{1}{\sqrt{\Pi_1}}\begin{bmatrix*}[l]\boldsymbol{\Sigma}_1 & \boldsymbol{0} & \boldsymbol{0}\phantom{\hspace{4pt}}\\\boldsymbol{0} & \boldsymbol{D}_1 & \boldsymbol{0}\end{bmatrix*} \boldsymbol{s} + \boldsymbol{z}_1, \label{eqn:y1}\\
	\boldsymbol{y}_2 &= \frac{1}{\sqrt{\Pi_2}}\begin{bmatrix*}[l] \boldsymbol{T} & \boldsymbol{0}\phantom{{}_1} & \boldsymbol{D}_2 \\	\boldsymbol{\Sigma}_2 & \boldsymbol{0} & \boldsymbol{0}	\end{bmatrix*}  \boldsymbol{s} + \boldsymbol{z}_2, \label{eqn:y2i}
	\end{align}
	where the effective channel matrix of user 1 is diagonalized and the effective channel matrix of user 2 is partially diagonalized except for matrix $\boldsymbol{T}$ which causes self-interference for the first $\bar{M}_2$ elements of $\boldsymbol{y}_2.$ Note that the computation of $\boldsymbol{Q}_1, \boldsymbol{Q}_2, \boldsymbol{Z},$ and $\boldsymbol{T}$ is specified in Appendix \ref{app:mgsvd}.
	
	As seen from (\ref{eqn:y1}) and (\ref{eqn:y2i}), symbols $s_{l}, l=1,\dots,M,$ are received by both users. Symbols $s_{l}, l=M+1,\dots,M+\bar{M}_1,$ are received only by user 1 as the entries of the corresponding columns of the effective channel matrix of user 2 in (\ref{eqn:y2i}) are zero, hence, we set $p_{2,l} = 0$ for $l=M+1,\dots,M+\bar{M}_1.$ Furthermore, symbols $s_{l}, l=M+\bar{M}_1+1,\dots,L,$ are received only by user 2 as the entries of the corresponding columns of the effective channel matrix of user 1 in (\ref{eqn:y1}) are zero, hence, we set $p_{1,l} = 0$ for $l=M+\bar{M}_1+1,\dots,L.$

	\subsection{Decoding Scheme}
	\label{sec:decoding}
	At user 1, from (\ref{eqn:y1}), as the channel is diagonalized and $p_{2,l} = 0$ for $l=M+1,\dots,M+\bar{M}_1,$ symbols $s_{1,l}, l=M+1,\dots,M+\bar{M}_1,$ are decoded directly from the last $\bar{M}_1$ elements of $\boldsymbol{y}_1.$ Next, if $M > 0,$ symbols $s_{1,l}, l=1,\dots,M,$ are decoded from the first $M$ elements of $\boldsymbol{y}_1$ treating the symbols of the second user, $s_{2,l}, l=1,\dots, M,$ as interference as in SISO-NOMA \cite{Saito2013}.
	
	At user 2, if $M > 0,$ from (\ref{eqn:y2i}), using SIC both $s_{1,l}$ and $s_{2,l}, l=1,\dots,M,$ are decoded from the last $M$ elements of $\boldsymbol{y}_2$ as in SISO-NOMA \cite{Saito2013}. Next, self-interference is canceled by subtracting the reconstructed interference, $\boldsymbol{T} \hat{\boldsymbol{s}},$ from the first $\bar{M}_2$ elements of $\boldsymbol{y}_2,$ resulting in the self-interference-free signal
	\begin{align}
	\tilde{\boldsymbol{y}}_2 &= \boldsymbol{y}_2 - \frac{1}{\sqrt{\Pi_2}} \begin{bmatrix}
	\boldsymbol{T} \hat{\boldsymbol{s}}  \\
	\boldsymbol{0} 
	\end{bmatrix}  = \frac{1}{\sqrt{\Pi_2}} \begin{bmatrix}
		\boldsymbol{0} & \boldsymbol{0} & \boldsymbol{D}_2 \\
		\boldsymbol{\Sigma}_2 & \boldsymbol{0} & \boldsymbol{0}
		\end{bmatrix}  \boldsymbol{s} + \boldsymbol{z}_2, \label{eqn:y2}
	\end{align}
	where the elements of $\hat{\boldsymbol{s}} \in \mathbb{C}^{M\times 1}$ are the previously decoded symbols\footnote{We assume that the symbols $s_{k,l},k=1,2,l=1,\dots,M,$ have been correctly decoded. This assumption is justified so long as the rates corresponding to the symbols are at or below the achievable rate $R_{k,l}$ provided in Section \ref{sec:ar}.} $s_l = p_{1,l} s_{1,l} + p_{2,l} s_{2,l}, l=1,\dots,M.$ Lastly, symbols $s_{2,l}, l=M+\bar{M}_1+1,\dots,L,$ are decoded from the first $\bar{M}_2$ elements of $\tilde{\boldsymbol{y}}_2,$ which are interference free as $p_{1,l} = 0$ for $l=M+\bar{M}_1+1,\dots,L.$

	\begin{remark}
		When $\bar{M}_2 > 0,$ in order to eliminate the self-interference due to $\boldsymbol{T},$ both $s_{1,l}$ and $s_{2,l}, l=1,\dots,M,$ must be decoded at user 2. Therefore, SIC has to be performed at user 2 for all SISO-NOMA symbols $s_l,$ $l=1,\dots,M.$ This is different from GSVD-based precoding in \cite{Chen2019}, where SIC is performed at user 1 or 2 depending on the values of $[\boldsymbol{C}_1]_{ll}$ and $[\boldsymbol{S}_1]_{ll},l=1,\dots,M,$ given in (\ref{eqn:cs}). On the other hand, when $\bar{M}_2 = 0,$ as there is no self-interference due to $\boldsymbol{T},$ SIC can potentially be performed at user 1 or 2 depending on the values of $[\boldsymbol{\Sigma}_1]_{ll}$ and $[\boldsymbol{\Sigma}_2]_{ll}, l=1,\dots,M,$ similar to GSVD-based MIMO-NOMA. Nevertheless, we show in Section \ref{sec:epa} that, for the proposed scheme, the performance gain obtained by allowing flexible SIC at user 1 or 2 is insignificant when the users are located sufficiently far apart, which is the most relevant scenario for MIMO-NOMA.
	\end{remark}
	
	Next, we calculate the computational complexity of the proposed UA-SD MIMO-NOMA and compare it with that of GSVD-based MIMO-NOMA \cite{Chen2019}.
	
	\subsection{Computational Complexity}	
	For the proposed scheme, the required operations for the matrix decomposition in Theorem \ref{thm:mgsvd}, based on Appendix \ref{app:mgsvd}, are summarized in Table \ref{tab:cc}. As the QR decomposition and SVD of an $M\times N$ matrix entail complexities of $\mathcal{O}\mkern-\medmuskip\left(2N^2(M-\frac{N}{3})+4(M^2 N - M N^2 + \frac{N^3}{3})\right)$ and $\mathcal{O}\mkern-\medmuskip\left(4 M^2 N + 8 M N^2 + 9 N^3\right),$ respectively \cite[Sec. 5.5]{Bai1992}, obtaining $\boldsymbol{Q}_1, \boldsymbol{Q}_2,$ and $\boldsymbol{Z}$ based on Theorem \ref{thm:mgsvd} entails a total complexity of $\mathcal{O}\mkern-\medmuskip\left(\dfrac{134}{3}N^3\right)$ for $M_1=M_2=N.$ Furthermore, in the proposed scheme, self-interference cancellation at user 2 entails an additional complexity of $\mathcal{O}\mkern-\medmuskip\left(M\bar{M}_2\right).$ On the other hand, GSVD used in \cite{Chen2019} requires a QR decomposition of size $(M_1+M_2)\times N$ and a cosine-sine (CS) decomposition \cite{VanLoan1985} of size $M_1+M_2.$ The CS decomposition entails a complexity of $\mathcal{O}\mkern-\medmuskip\left(36 N^3\right),$ resulting in an overall complexity of $\mathcal{O}\mkern-\medmuskip\left(\dfrac{116}{3} N^3\right)$ for $M_1=M_2=N.$ Hence, both the proposed UA-SD MIMO-NOMA and GSVD-based MIMO-NOMA \cite{Chen2019} have an overall complexity order of $\mathcal{O}\mkern-\medmuskip\left(N^3\right).$

	\begin{table}
		\centering
		\caption{Required operations for decomposition in Theorem \ref{thm:mgsvd}, see Appendix \ref{app:mgsvd} for definition of matrices.}
		\label{tab:cc}
		\begin{tabular}{|l|l|c|l||l|l|c|l|}
			\hline
			& Computation & Dimension & Operation & & Computation & Dimension & Operation \\\hline
			1) & $\bar{\boldsymbol{H}}_1$ & $N\times M_1$ & QR decomposition & 4) & $\hat{\boldsymbol{U}}_1\boldsymbol{\Sigma}_1\hat{\boldsymbol{V}}_1^\mathrm{H}$ & $M_1\times N$ & SVD \\\hline
			2) & $\bar{\boldsymbol{H}}_2$ & $N\times M_2$ & QR decomposition & 5) & $\tilde{\boldsymbol{H}}_2^\mathrm{H}  = \boldsymbol{Q}\boldsymbol{R}$ & $M_1\times M_2$ & QR decomposition \\\hline
			3) & $\boldsymbol{K}$ & $N\times (\bar{M}_1+\bar{M}_2)$ & QR decomposition & 6) & $\boldsymbol{\Sigma}$ & $M\times M$ & SVD \\\hline
		\end{tabular}
		\vspace{-0.5cm}
	\end{table}

	\subsection{Achievable Rates}
	\label{sec:ar}
	Based on (\ref{eqn:y1}), the achievable rates for the $s_{1,l}, l=1,\dots,L,$ of user 1 are given by
	\begin{align}
	R_{1,l}^{(1)} = \begin{cases}
	\log_2\left(1+\frac{p_{1,l}}{\Pi_1}\frac{([\boldsymbol{\Sigma}_1]_{ll})^2}{\sigma^2+\frac{p_{2,l}}{\Pi_1}([\boldsymbol{\Sigma}_1]_{ll})^2}\right)	& \text{for $l=1,\dots,M$} \\
	\log_2\left(1+\frac{p_{1,l}}{\Pi_1} \frac{([\boldsymbol{D}_1]_{ll})^2}{\sigma^2}\right) & \text{for $l=M+1,\dots,M+\bar{M}_1$} \\
	0 & \text{otherwise.}\label{eqn:r11n}
	\end{cases}
	\end{align}
	
	At user 2, in order to perform SIC, the $s_{1,l},l=1,\dots,M,$ are decoded first. From (\ref{eqn:y2i}), the achievable rates for the $s_{1,l}, l=1\dots,M,$ at user 2 are given by	
	\begin{align}
	\scalebox{0.9}{\mbox{\ensuremath{\displaystyle R_{1,l}^{(2)} = \log_2\left(1+\frac{p_{1,l}}{\Pi_2} \frac{([\boldsymbol{\Sigma}_2]_{ll})^2}{\sigma^2+\frac{p_{2,l}}{\Pi_2}([\boldsymbol{\Sigma}_2]_{ll})^2}\right)\label{eqn:r12nf}.}}}
	\end{align}
	
	For the decoding of the $s_{1,l}, l=1,\dots,M,$ to be successful at both user 1 and 2 \emph{for every channel use}, the code rate is chosen as follows:
	\begin{equation}
	R_{1,l} = \mathrm{min}\left\{R_{1,l}^{(1)},R_{1,l}^{(2)}\right\} = \begin{cases}
		R_{1,l}^{(1)} & \text{if $\frac{([\boldsymbol{\Sigma}_1]_{ll})^2}{\Pi_1} <  \frac{([\boldsymbol{\Sigma}_2]_{ll})^2}{\Pi_2}$} \\
		R_{1,l}^{(2)} & \text{otherwise.}
	\end{cases} \label{eqn:minr11r12}
	\end{equation}
		
	Next, from (\ref{eqn:y2i}) and (\ref{eqn:y2}), the achievable rates for the $s_{2,l}, l=1,\dots,L,$ after SIC and self-interference cancellation are given by
	\begin{align}
	R_{2,l} = \begin{cases} \log_2\left(1+\frac{p_{2,l}}{\Pi_2}\frac{([\boldsymbol{\Sigma}_2]_{ll})^2}{\sigma^2}\right)	&\text{for $l=1,\dots,M$} \\[5pt]
	\log_2\left(1+\frac{p_{2,l}}{\Pi_2}\frac{([\boldsymbol{D}_2]_{ll})^2}{\sigma^2}\right) &\text{for $l=M+\bar{M}_1+1,\dots,L$} \\
	0 & \text{otherwise.}\label{eqn:r2nf}
	\end{cases}
	\end{align}
	Lastly, the ergodic achievable sum rates for users 1 and 2, $R_1$ and $R_2,$ respectively, are given by
	\begin{align}
	\scalebox{1}{\mbox{\ensuremath{\displaystyle R_1}}} \scalebox{1}{\mbox{\ensuremath{\displaystyle = \mathrm{E}\left[\sum_{l=1}^{M} R_{1,l}\right] + \mathrm{E}\left[\sum_{l=M+1}^{M+\bar{M}_1} R_{1,l}^{(1)}\right],}}} \quad
	\scalebox{1}{\mbox{\ensuremath{\displaystyle R_2}}} \scalebox{1}{\mbox{\ensuremath{\displaystyle = \mathrm{E}\left[\sum_{l=1}^{M} R_{2,l}\right] + \mathrm{E}\left[\sum_{l=M+\bar{M}_1+1}^{L} R_{2,l}\right].}}} \label{eqn:r}
	\end{align}
	
	\begin{remark}
		The diagonal entries of $\boldsymbol{\Sigma}_1$ and $\boldsymbol{\Sigma}_2$ contain the GSVs of $\boldsymbol{H}_2$ and $\boldsymbol{H}_1,$ c.f. Theorem \ref{thm:mgsvd}. In \cite{Chen2019}, analogous ergodic rate expressions for GSVD-based NOMA with EPA are simplified using the asymptotic probability density function (pdf) of GSVs. However, simplification of (\ref{eqn:r}) for finite $M_1, M_2,$ and $N$ necessitates a novel approach based on the finite-size marginal and ordered pdfs of the GSVs, which are derived in Section \ref{sec:performance}.
	\end{remark}

	Next, we describe a hybrid scheme which further improves the ergodic achievable rate of UA-SD MIMO-NOMA for user rate pairs close to the single user rates, see Section \ref{sec:sim}

	\subsection{Hybrid UA-SD MIMO-NOMA}
	\label{sec:hybrid}
	
		Due to sustained channel inversion in the proposed UA-SD MIMO-NOMA scheme, see Appendix \ref{app:mgsvd}, in general, the single-user (SU)-MIMO rates of the users cannot be achieved. This results in an inferior performance compared to TDMA-based MIMO-OMA for user rate pairs close to the single-user rates. In order to mitigate this effect, we propose a hybrid scheme employing time sharing between SU-MIMO and UA-SD MIMO-NOMA. In the hybrid scheme, due to time sharing, all user rate pairs which are convex combinations of UA-SD MIMO-NOMA and TDMA-based MIMO-OMA rate pairs can be achieved, thereby enhancing the region of achievable user rates, see Section \ref{sec:sim}.

	\section{Performance Analysis and Power Allocation}
	\label{sec:performance}
	In this section, the ergodic achievable rate expressions of the proposed UA-SD MIMO-NOMA, provided in (\ref{eqn:r}), are simplified via finite-size RMT for EPA and UPA. Furthermore, a power allocation algorithm for UPA is presented.	
	\subsection{Equal Power Allocation}
	\label{sec:epa}
	For EPA, the same power $P$ is allocated to all symbols $s_l, l=1,\dots,L.$ Furthermore, for $l=1,\dots,M,$ $P$ is partitioned between users 1 and 2 as $p_{1,l}=P_1$ and $p_{2,l}=P_2,$ respectively, such that $P_1+P_2=P.$ For EPA, the transmit power in (\ref{eqn:eppleq1}) can be simplified as follows.

	\begin{proposition}
		\label{prop:pwae}
		For EPA, (\ref{eqn:eppleq1}) simplifies as follows:
		\begin{align}
			P_T = \begin{cases}
			2P & \text{if $M_1+M_2 \leq N$} \\
			\frac{PN}{M_1+M_2-N}	& \text{if $M_1 + M_2 > N$ and $M_1 \geq N$} \\
			P\left(\frac{M_1}{N} + \frac{\bar{M}_1}{M} + 1\right) & \text{otherwise.} 
			\end{cases} \label{eqn:fpap1}
		\end{align}
	\end{proposition}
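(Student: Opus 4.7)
The plan is to first reduce $P_T$ to an expected trace, and then evaluate that trace block by block. Under EPA every column of the precoder carries total power $P$: for $l\in\{1,\dots,M\}$ this is $p_{1,l}+p_{2,l}=P_1+P_2=P$ by construction, while for $l\in\{M+1,\dots,L\}$ exactly one of $p_{1,l},p_{2,l}$ equals $P$ and the other is zero by the convention in Section \ref{sec:precoderdesign}. Consequently the diagonal matrix in (\ref{eqn:eppleq1}) equals $P\boldsymbol{I}_L$ and
\[
P_T = P\,\mathrm{E}\!\left[\mathrm{tr}(\boldsymbol{Z}\boldsymbol{Z}^{\mathrm{H}})\right] = P\sum_{l=1}^{L}\mathrm{E}\!\left[\|\boldsymbol{z}_l\|^2\right],
\]
so the remaining task is to evaluate the expected squared Frobenius norm of the precoder $\boldsymbol{Z}$ from Theorem \ref{thm:mgsvd}.

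Next, I would partition the columns of $\boldsymbol{Z}$ according to the three blocks implicit in (\ref{eqn:mgsvd}) and constructed explicitly in Appendix \ref{app:mgsvd}: the $M$ shared SISO-NOMA columns, the $\bar{M}_1$ columns seen only by user 1, and the $\bar{M}_2$ columns seen only by user 2. The exclusive columns for user $k$ are (scaled) orthonormal bases of $\mathrm{null}(\boldsymbol{H}_{3-k})$ rotated by the SVD of $\tfrac{1}{\sqrt{\bar{M}_k}}\boldsymbol{H}_k\bar{\boldsymbol{H}}_{3-k}$, analogous to BD in Section \ref{sec:bd}, and therefore contribute deterministic constants to the trace. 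The shared columns are obtained from inverting a channel submatrix so as to produce the $\boldsymbol{\Sigma}_1,\boldsymbol{\Sigma}_2$ blocks on the receive side; their squared norms reduce to traces of inverse complex Wishart matrices, which I would evaluate via the classical identity $\mathrm{E}[\mathrm{tr}(\boldsymbol{W}^{-1})]=\tfrac{p}{q-p}$ for $\boldsymbol{W}\sim\mathcal{CW}_p(q,\boldsymbol{I}_p)$ with $q>p$.

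I would then dispatch the three regimes separately. In Case 1 ($M_1+M_2\le N$) the construction collapses to pure BD, so (\ref{eqn:bdpt}) applies and, with $\sum_l p_{k,l}=M_k P$, yields $P_T=2P$ directly. In Case 2 ($M_1+M_2>N$, $M_1\ge N$, whence $\bar{M}_2=0$ and $L=N$), the shared block's columns correspond to a $(\boldsymbol{H}^{\mathrm{H}}\boldsymbol{H})^{-1}\boldsymbol{H}^{\mathrm{H}}$-type construction with $\boldsymbol{H}=[\boldsymbol{H}_1^{\mathrm{T}}\;\boldsymbol{H}_2^{\mathrm{T}}]^{\mathrm{T}}$, so that $\mathrm{E}[\mathrm{tr}(\boldsymbol{Z}\boldsymbol{Z}^{\mathrm{H}})]=\mathrm{E}[\mathrm{tr}((\boldsymbol{H}^{\mathrm{H}}\boldsymbol{H})^{-1})]=\tfrac{N}{M_1+M_2-N}$, using $\boldsymbol{H}^{\mathrm{H}}\boldsymbol{H}\sim\mathcal{CW}_N(M_1+M_2,\boldsymbol{I}_N)$. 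In Case 3 ($M_1+M_2>N$, $M_1<N$), the corresponding computation decomposes into three pieces of sizes $(M_1,N)$, $(\bar{M}_1,\bar{M}_1+M)$, and an orthonormal exclusive-user-2 block, producing the three summands $\tfrac{M_1}{N}$, $\tfrac{\bar{M}_1}{M}$, and $1$, respectively, whose sum matches the claimed expression.

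The main obstacle is Case 3, since $\boldsymbol{Z}$ there is a heterogeneous concatenation and one must carefully read off from Table \ref{tab:cc} and Appendix \ref{app:mgsvd} which Wishart sizes appear, and why the resulting dimensions force exactly these three summands. The statistical independence and isotropy of the Gaussian entries of $\boldsymbol{H}_k$ are essential, as they make the intermediate covariances proportional to the identity so that the inverse-Wishart first-moment formula applies without any residual rotation-dependent terms. Once the block decomposition is in hand, Cases 1 and 2 are essentially immediate; Case 3 is the bookkeeping that consumes most of the effort.
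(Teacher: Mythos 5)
Your overall route is the same as the paper's: reduce (\ref{eqn:eppleq1}) to $P\,\mathrm{E}[\mathrm{tr}(\boldsymbol{Z}\boldsymbol{Z}^{\mathrm{H}})]$ (the power matrix is indeed $P\boldsymbol{I}_L$ under EPA), dispose of $M_1+M_2\le N$ via (\ref{eqn:bdpt}), and evaluate the remaining cases with first moments of (inverse) complex Wishart matrices, which is exactly the content of the paper's citation of \cite[Thm.~3.3.8]{Nagar2011}. Your Case~2 is sound: since $\bar{M}_2=0$ there is no $\boldsymbol{T}$ block, and $\boldsymbol{\Sigma}_1^2+\boldsymbol{\Sigma}_2^2=\boldsymbol{I}_M$ together with $\boldsymbol{D}_1=\boldsymbol{I}_{\bar{M}_1}$ gives $\boldsymbol{Z}^{\mathrm{H}}\boldsymbol{H}^{\mathrm{H}}\boldsymbol{H}\boldsymbol{Z}=\boldsymbol{I}_N$, hence $\boldsymbol{Z}\boldsymbol{Z}^{\mathrm{H}}=(\boldsymbol{H}^{\mathrm{H}}\boldsymbol{H})^{-1}$ and the inverse-Wishart moment yields $N/(M_1+M_2-N)$.

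The gap is in Case~3, specifically the first summand. You attribute $\tfrac{M_1}{N}$ to ``a piece of size $(M_1,N)$'' evaluated with $\mathrm{E}[\mathrm{tr}(\boldsymbol{W}^{-1})]=\tfrac{p}{q-p}$; with $(p,q)=(M_1,N)$ that formula returns $\tfrac{M_1}{N-M_1}$, not $\tfrac{M_1}{N}$, so the stated mechanism does not produce the claimed number. The missing observation is that the first $M$ columns of $\boldsymbol{Z}$ in (\ref{eqn:x}) are not a bare pseudo-inverse of the compressed channel $\tilde{\boldsymbol{H}}_1=\boldsymbol{H}_1[\boldsymbol{K}\ \bar{\boldsymbol{H}}_2]$: they carry the additional right factor $(\boldsymbol{I}_M+\boldsymbol{\Sigma})^{-\frac{1}{2}}$, whose entries are built from $\boldsymbol{B}_3$ and hence from $\boldsymbol{H}_2$. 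The relevant quadratic form therefore combines the $\boldsymbol{H}_1$-induced Gram matrix with independent $\boldsymbol{H}_2$-induced degrees of freedom into a single effective Wishart with more samples than $N$ (e.g., for $M_1=1$, $M_2=N$, the squared norm of that column is $1/(\|\boldsymbol{h}_1\|^2+|g|^2)$ with the denominator $\sim\mathcal{CW}_1(N+1,1)$, giving $1/N$ rather than $1/(N-1)$). Without accounting for this coupling, your block bookkeeping would land on $\tfrac{M_1}{N-M_1}+\tfrac{\bar{M}_1}{M}+1$ instead of (\ref{eqn:fpap1}); the $\tfrac{\bar{M}_1}{M}$ and $+1$ summands, by contrast, are consistent with the dimensions you list.
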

	\begin{proof}
	For $M_1+M_2 \leq N,$ based on (\ref{eqn:bdpt}), we have $P_T = 2P.$ Next, for $M_1+M_2 > N,$ (\ref{eqn:eppleq1}) reduces to $P_T = P\mathrm{E}\left[\mathrm{tr}\left(\boldsymbol{Z}^H \boldsymbol{Z}\right)\right],$ which simplifies to (\ref{eqn:fpap1}) based on \cite[Thm. 3.3.8]{Nagar2011}.
	\end{proof}

	Next, we focus on simplifying the ergodic rate expressions of the users for EPA. From Section \ref{sec:ar}, we observe that the ergodic rate expressions depend on $[\boldsymbol{\Sigma}_k]_{ll}, k=1,2, l=1,\dots,M,$ and  $[\boldsymbol{D}_k]_{ll}, k=1,2,l=1,\dots,\bar{M}_k.$ However, $[\boldsymbol{\Sigma}_k]_{ll}$ and $[\boldsymbol{D}_k]_{ll}$ are not independent for different $l,$ respectively. Therefore, the evaluation of the ergodic rate expressions involving $[\boldsymbol{\Sigma}_k]_{ll}$ and $[\boldsymbol{D}_k]_{ll}$ require $M$- and $\bar{M}_k$-dimensional integration, respectively, which introduces a very high computational complexity. Fortunately, the computational complexity can be drastically reduced by utilizing the marginal eigenvalue pdfs introduced in the following definition.

	
	\begin{definition}[Based on \cite{Zanella2009}]
		\label{def:mepdf}
		Let $\boldsymbol{X}$ be a $q\times q$ Hermitian symmetric random matrix with eigenvalues $\lambda_1 \geq \lambda_2 \geq \dots \geq \lambda_q,$ and let $\boldsymbol{\lambda} = \{\lambda_n,n=1,\dots,q\}.$ Furthermore, let the unordered eigenvalue of $\boldsymbol{X}$ be denoted by $\lambda(\boldsymbol{X})$ and the joint pdf of the eigenvalues by $p_{\boldsymbol{\lambda}}(\boldsymbol{\lambda}).$ Then, the marginal eigenvalue pdf is defined as
		\begin{align}
		p_{\lambda(\boldsymbol{X})}(\lambda) = \left. \int_{0}^{+\infty} \dots \int_{0}^{+\infty} p_{\boldsymbol{\lambda}}(\boldsymbol{\lambda}) \mathrm{d}\lambda_q \mathrm{d}\lambda_{q-1} \dots \mathrm{d}\lambda_2 \, \biggm|_{\lambda_1 \to \lambda} \right..
		\end{align}
	\end{definition}
	
	Using the above definition, for a $q\times q$ Hermitian symmetric random matrix $\boldsymbol{X}$ with eigenvalues $\lambda_1, \lambda_2, \dots, \lambda_q,$ we may simplify the expectation of a generic additively separable function
	\begin{align}
		f(\lambda_1,\dots,\lambda_q) = \sum_{n = 1}^{q} g(\lambda_n) \label{eqn:addsep}
	\end{align}
	in the eigenvalues of $\boldsymbol{X}$ as
	\begin{align}
	\scalebox{0.85}{\mbox{\ensuremath{\displaystyle \mathrm{E}\left[f(\lambda_1,\dots,\lambda_q)\right]}}} &= \scalebox{0.85}{\mbox{\ensuremath{\displaystyle \int_{0}^{+\infty} \dots \int_{0}^{+\infty} f(\lambda_1,\dots,\lambda_q) p_{\boldsymbol{\lambda}}(\boldsymbol{\lambda}) \mathrm{d}\lambda_q \dots\mathrm{d}\lambda_1 =  \sum_{n=1}^{q} \int_{0}^{+\infty} \dots \int_{0}^{+\infty} g(\lambda_n) p_{\boldsymbol{\lambda}}(\boldsymbol{\lambda}) \mathrm{d}\lambda_q \dots\mathrm{d}\lambda_1}}} \nonumber\\
	&\overset{(a)}{=} \scalebox{0.85}{\mbox{\ensuremath{\displaystyle q \int_{0}^{+\infty} g(\lambda) p_{\lambda(\boldsymbol{X})}(\lambda) \mathrm{d}\lambda,}}} \label{eqn:mpdfe}
	\end{align}
	where (a) is obtained by using Definition \ref{def:mepdf} and exploiting the symmetry in the integration variable. Before we can exploit (\ref{eqn:mpdfe}) to calculate the ergodic rates of the users, we need to characterize the diagonal elements of $\boldsymbol{\Sigma}_1$ and $\boldsymbol{\Sigma}_2,$ cf. Section \ref{sec:ar}.

	\begin{proposition}
	\label{prop:equif}
		When $M_1+M_2 > N,$ the squares of the $l$-th diagonal elements, $l=1,\dots,M,$ of matrices $\boldsymbol{\Sigma}_1$ and $\boldsymbol{\Sigma}_2$ are given by
		\begin{align}
			[\boldsymbol{\Sigma}_1]_{ll}^2 = \frac{1}{\lambda_l+1},\qquad [\boldsymbol{\Sigma}_2]_{ll}^2 = \frac{\lambda_l}{\lambda_l+1},\label{eqn:equif}
		\end{align}
		where $\lambda_l$ is the $l$-th ordered eigenvalue of the F-distributed matrix $\boldsymbol{F} = (\boldsymbol{W}_2)^{\frac{1}{2}} \boldsymbol{W}_1^{-1} (\boldsymbol{W}_2)^{\frac{1}{2}}$ \cite{Perlman1977, Forrester2014}, and $\boldsymbol{W}_2 \sim \mathcal{CW}_{\nu}(\mu_2,\boldsymbol{I}_{\nu})$ and $\boldsymbol{W}_1 \sim \mathcal{CW}_{\nu}(\mu_1, \boldsymbol{I}_{\nu})$ are independent Wishart distributed matrices. Coefficients $\mu_1, \mu_2,$ and $\nu$ are given in Table \ref{tab:map}.
		
		\begin{table}
		\centering
		\caption{Wishart matrix parameters $\mu_1,\mu_2,\nu$ for different $M_1,M_2,$ and $N.$}
		\label{tab:map}
		\begin{tabular}{|l|l||l|l|}
			\hline
			Condition & $(\mu_1,\mu_2,\nu)$ & Condition & $(\mu_1,\mu_2,\nu)$ \\\hline\hline
			$M_1,M_2 \geq N$ & $(M_1,M_2,N)$ & 	$M_1,M_2 < N, M_1+M_2 > N$ & $(M_1,M_2,M)$ \\\hline
			$M_1 \geq N, M_2 < N$ & $(M_1+M_2-N,N,M_2)$ & 	$M_1 < N, M_2 \geq N$ & $(N,M_1+M_2-N,M_1)$ \\\hline
		\end{tabular}
		\vspace{-0.5cm}
		\end{table}
	\end{proposition}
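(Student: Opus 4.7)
The plan is to establish (\ref{eqn:equif}) in two stages: first derive a joint normalization that couples $[\boldsymbol{\Sigma}_1]_{ll}^{2}$ and $[\boldsymbol{\Sigma}_2]_{ll}^{2}$, then identify the squared GSV as the $l$-th ordered eigenvalue of the appropriate F-matrix whose Wishart parameters match Table~\ref{tab:map}.

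First, I would extract from the constructive proof of Theorem~\ref{thm:mgsvd} in Appendix~\ref{app:mgsvd} the identity $[\boldsymbol{\Sigma}_1]_{ll}^{2} + [\boldsymbol{\Sigma}_2]_{ll}^{2} = 1$ for $l=1,\dots,M$. This is the UA-SD analogue of the CS-decomposition identity $\boldsymbol{C}^\mathrm{H}\boldsymbol{C} + \boldsymbol{S}^\mathrm{H}\boldsymbol{S} = \boldsymbol{I}$ underlying the classical GSVD in (\ref{eqn:cs}), and it arises because the non-trivial diagonal pair $([\boldsymbol{\Sigma}_1]_{ll},[\boldsymbol{\Sigma}_2]_{ll})$ is constructed as the cosine/sine pair of a two-dimensional rotation that simultaneously aligns the $l$-th coordinate of the two effective channels. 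Combined with the defining property of Theorem~\ref{thm:mgsvd} that $[\boldsymbol{\Sigma}_2]_{ll}/[\boldsymbol{\Sigma}_1]_{ll}$ is the $l$-th GSV of $(\boldsymbol{H}_2,\boldsymbol{H}_1)$, setting $\lambda_l := ([\boldsymbol{\Sigma}_2]_{ll}/[\boldsymbol{\Sigma}_1]_{ll})^{2}$ reduces the problem to a $2\times 2$ linear system in $[\boldsymbol{\Sigma}_1]_{ll}^{2}, [\boldsymbol{\Sigma}_2]_{ll}^{2}$ whose solution is exactly (\ref{eqn:equif}).

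Next I would identify the distribution of $\lambda_l$. By the characterization of GSVs in the footnote to Theorem~\ref{thm:mgsvd}, $\lambda_l$ is a non-trivial solution of $\det(\boldsymbol{H}_2^\mathrm{H}\boldsymbol{H}_2 - \lambda \boldsymbol{H}_1^\mathrm{H}\boldsymbol{H}_1)=0$. In the non-degenerate case $M_1,M_2 \geq N$, the Gram matrices $\boldsymbol{W}_k := \boldsymbol{H}_k^\mathrm{H}\boldsymbol{H}_k \sim \mathcal{CW}_N(M_k,\boldsymbol{I}_N)$ are invertible and independent, so the $\lambda_l$ are the ordered eigenvalues of $\boldsymbol{W}_2^{1/2}\boldsymbol{W}_1^{-1}\boldsymbol{W}_2^{1/2}$, which is F-distributed with parameters $(M_1,M_2,N)$, giving the first row of Table~\ref{tab:map}. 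In the three rank-deficient scenarios the $\bar{M}_1$ identity entries of $\boldsymbol{D}_1$ and the $\bar{M}_2$ dimensions absorbed via the $\boldsymbol{T}$-block correspond to the trivial (zero/infinite) GSVs, and the $M$ non-trivial GSVs live on the $M$-dimensional subspace on which the pencil $\boldsymbol{H}_2^\mathrm{H}\boldsymbol{H}_2 - \lambda \boldsymbol{H}_1^\mathrm{H}\boldsymbol{H}_1$ is non-degenerate. Using orthonormal bases derived from $\bar{\boldsymbol{H}}_1,\bar{\boldsymbol{H}}_2$ and the unitary invariance of i.i.d.\ CSCG entries, projecting $\boldsymbol{H}_1$ and $\boldsymbol{H}_2$ onto this reduced subspace produces independent complex Wishart matrices with the parameters listed in the remaining rows of Table~\ref{tab:map}.

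The main obstacle is the case analysis for the three rank-deficient scenarios: for each of rows 2, 3, and 4 one must identify the correct $M$-dimensional non-degenerate subspace and verify that the induced Gram matrices are independent Wishart with parameters $(\mu_1,\mu_2,\nu)$. This relies on the orthogonal invariance of the CSCG distribution and a careful accounting of how the $\bar{M}_1$ and $\bar{M}_2$ absorbed dimensions reduce the effective degrees of freedom; in particular, when both $M_1 < N$ and $M_2 < N$, two independent reductions occur simultaneously and yield $\nu = M$ rather than $\nu = N$. The symmetric cases $M_1\geq N, M_2 < N$ and $M_1 < N, M_2 \geq N$ are then obtained from one another by interchanging the roles of the two users, which explains the transposed form of rows 3 and 4 in Table~\ref{tab:map}.
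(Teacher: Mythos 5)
Your proposal is correct in substance and rests on the same two pillars as the paper's argument: an algebraic relation expressing $[\boldsymbol{\Sigma}_1]_{ll}^2$ and $[\boldsymbol{\Sigma}_2]_{ll}^2$ through a single scalar, and the identification of that scalar with an ordered eigenvalue of a matrix-variate F distribution. Where you genuinely differ is in the algebraic step. The paper reads $\boldsymbol{\Sigma}_1^2=(\boldsymbol{I}_M+\boldsymbol{\Sigma})^{-1}$ and $\boldsymbol{\Sigma}_2^2=\boldsymbol{\Sigma}(\boldsymbol{I}_M+\boldsymbol{\Sigma})^{-1}$ directly off the explicit construction in (\ref{eqn:qhz}), with $\boldsymbol{\Sigma}$ carrying the (squared) singular values of the block $\boldsymbol{B}_3$; you instead postulate the normalization $[\boldsymbol{\Sigma}_1]_{ll}^2+[\boldsymbol{\Sigma}_2]_{ll}^2=1$ and combine it with the GSV-ratio property of Theorem \ref{thm:mgsvd}. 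That is a clean shortcut, but be aware that the normalization is \emph{not} part of the statement of Theorem \ref{thm:mgsvd} (the ratio condition pins the pair down only up to a per-index scaling), so it must still be verified from Appendix \ref{app:mgsvd} --- where it holds because the factor $(\boldsymbol{I}_M+\boldsymbol{\Sigma})^{-\frac{1}{2}}$ is deliberately built into $\boldsymbol{Z}$, not because of a cosine--sine rotation as you suggest; moreover, the GSV-ratio property is itself only \emph{proved} in that same appendix (via $\sigma(\boldsymbol{B}_3)=\sigma(\boldsymbol{H}_2\boldsymbol{H}_1^{+})$), so your route does not actually bypass the construction. For the distributional part, your determinantal characterization $\mathrm{det}(\boldsymbol{H}_2^{\mathrm{H}}\boldsymbol{H}_2-\lambda\boldsymbol{H}_1^{\mathrm{H}}\boldsymbol{H}_1)=0$ is equivalent to the paper's reduction and settles the case $M_1,M_2\geq N$ completely. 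For the other three cases you correctly flag the one real difficulty --- the degrees of freedom do not carry over under projection (e.g.\ $(M_1,M_2)\mapsto(M_1+M_2-N,N)$ in the case $M_1\geq N,\,M_2<N$) --- but you only assert the outcome; making it rigorous requires the Wishart quotient and marginal-distribution theorems the paper cites \cite[Thms.~3.2.4, 3.3.10, 3.4.2]{Gupta1999}. Since the paper is equally terse there (``the remaining cases follow analogously''), this is not a gap relative to the paper, but it is where the remaining work lives; your observation that the two mixed cases map to each other under a user swap (equivalently, inversion of $\boldsymbol{F}$) is correct and would halve it.
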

	\begin{proof}
		Please refer to Appendix \ref{app:equif}.
	\end{proof}
	
	\begin{remark}
		When $M_1+M_2 \leq N,$ as $M=0,$ matrices $\boldsymbol{\Sigma}_1$ and $\boldsymbol{\Sigma}_2$ are empty.
	\end{remark}
	
	Next, in order to obtain the marginal pdfs of $[\boldsymbol{\Sigma}_k]_{ll}, k=1,2, l=1,\dots,M,$ we utilize the marginal eigenvalue pdf of an $\boldsymbol{F}$ distributed matrix, $p_{\lambda(\mathfrak{F})}(\lambda; \mu_1, \mu_2, \nu),$ in Theorem \ref{thm:f}, which is then used to simplify the ergodic rates $R_{1,l}$ and $R_{2,l}, l=1,\dots,M,$ for EPA, via Definition \ref{def:mepdf}.
	
	\begin{theorem}
		\label{thm:f}
		Let $\boldsymbol{X} \sim \mathcal{CW}_q(m_1,\boldsymbol{I}_q)$ and $\boldsymbol{Y} \sim \mathcal{CW}_q(m_2,\boldsymbol{I}_q),$ $m_1,m_2 \geq q,$ be $q\times q$ independent complex-valued Wishart random matrices. Then, the marginal eigenvalue pdf of matrix $\boldsymbol{F} = \boldsymbol{Y}^{\frac{1}{2}}\boldsymbol{X}^{-1}\boldsymbol{Y}^{\frac{1}{2}}$ is given by
		\begin{align}
		&p_{\lambda(\mathfrak{F})}(\lambda; m_1, m_2, q) = K_\mathfrak{F} \frac{1}{(1+\lambda)^{m_1+m_2}} \sum_{m = 1}^{q}\sum_{n = 1}^{q} (-1)^{(n+m)}\lambda^{m+n-2+m_2-q} \mathrm{det}\left(\boldsymbol{\Xi}^{[m,n]}\right), \label{eqn:f}
		\end{align}
		where $K_\mathfrak{F}$ is a constant ensuring that the integral over the pdf is equal to one, and the elements of the $(q-1)\times (q-1)$ matrix $\boldsymbol{\Xi}^{[m,n]}$ are given by
		\begin{equation}
		\left[\boldsymbol{\Xi}^{[m,n]}\right]_{ij} = \mathrm{B}(m_2-q+\alpha(i,j,m,n)+1, m_1+q-\alpha(i,j,m,n)-1),
		\end{equation}
		for $i,j=1,\dots,q-1,$ where $\mathrm{B}(\cdot,\cdot)$ denotes the Beta function, and
		\begin{align}
		\alpha(i,j,m,n) &= \begin{cases}
			i+j-2	& \text{if $i<m$ and $j<n$} \\
			i+j	& \text{if $i \geq m$ and $j \geq n$} \\
			i+j-1 & \text{otherwise.}
		\end{cases}
		\end{align}
		The support of the pdf is $\lambda \in [0,+\infty).$
	\end{theorem}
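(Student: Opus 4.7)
\textbf{Proof plan for Theorem \ref{thm:f}.} My approach is to start from the joint eigenvalue pdf of the complex F-matrix $\boldsymbol{F}$, write the squared Vandermonde as a product of two determinants, expand each determinant by cofactors along the row associated with the variable we want to keep, and then carry out the remaining $(q-1)$-fold integration by Andreief's identity. All prefactors that do not depend on $\lambda$ will be absorbed into $K_\mathfrak{F}$, which is finally fixed by the requirement that the pdf integrates to one.

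Concretely, the first step is to invoke the classical result (see, e.g., James/Muirhead) that, for $\boldsymbol{X}\sim\mathcal{CW}_q(m_1,\boldsymbol{I}_q)$ and $\boldsymbol{Y}\sim\mathcal{CW}_q(m_2,\boldsymbol{I}_q)$ independent with $m_1,m_2\geq q$, the joint pdf of the unordered eigenvalues of $\boldsymbol{F}=\boldsymbol{Y}^{1/2}\boldsymbol{X}^{-1}\boldsymbol{Y}^{1/2}$ is
\begin{equation*}
p_{\boldsymbol{\lambda}}(\lambda_1,\dots,\lambda_q) \;=\; C\,\prod_{k=1}^{q}\frac{\lambda_k^{\,m_2-q}}{(1+\lambda_k)^{m_1+m_2}}\,\prod_{i<j}(\lambda_i-\lambda_j)^2,\quad \lambda_k\in[0,+\infty),
\end{equation*}
for a normalization constant $C$. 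The second step uses $\prod_{i<j}(\lambda_i-\lambda_j)=\det[\lambda_i^{j-1}]_{i,j=1}^q$, so the squared Vandermonde becomes the product of two identical determinants, which we expand by cofactors along the row $i=1$ (the row of $\lambda_1$):
\begin{equation*}
\det[\lambda_i^{j-1}]_{i,j=1}^{q}=\sum_{m=1}^{q}(-1)^{m+1}\lambda_1^{m-1}\,M_{1m}(\lambda_2,\dots,\lambda_q),
\end{equation*}
with $M_{1m}$ the corresponding $(q-1)\times(q-1)$ minor in the variables $\lambda_2,\dots,\lambda_q$. Doing this to both copies and setting $\lambda_1\to\lambda$ yields a double sum indexed by $(m,n)$ with sign $(-1)^{m+n}$ and prefactor $\lambda^{m+n-2}$, multiplied by the product of two $(q-1)\times(q-1)$ minors depending only on $\lambda_2,\dots,\lambda_q$.

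The third step applies Andreief's identity to evaluate
\begin{equation*}
\int_{0}^{+\infty}\!\!\!\!\cdots\!\int_{0}^{+\infty} M_{1m}\,M_{1n}\,\prod_{k=2}^{q}\frac{\lambda_k^{m_2-q}}{(1+\lambda_k)^{m_1+m_2}}\,\mathrm{d}\lambda_k
\;=\;(q-1)!\,\det\!\bigl(\boldsymbol{\Xi}^{[m,n]}\bigr),
\end{equation*}
where each entry of $\boldsymbol{\Xi}^{[m,n]}$ is a single moment $\int_{0}^{\infty}\lambda^{\alpha(i,j,m,n)}\,\lambda^{m_2-q}(1+\lambda)^{-(m_1+m_2)}\,\mathrm{d}\lambda$ with the exponent $\alpha(i,j,m,n)$ determined by which columns are missing from $M_{1m}$ and $M_{1n}$. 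Using the integral representation of the Beta function, each such moment equals $\mathrm{B}(m_2-q+\alpha(i,j,m,n)+1,\;m_1+q-\alpha(i,j,m,n)-1)$, matching the claimed form.

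The main bookkeeping step---and the only genuine difficulty---is verifying the piecewise formula for $\alpha(i,j,m,n)$. This comes from the fact that when column $m$ is removed from the Vandermonde, the $j$-th column of the resulting minor has exponent $j-1$ if $j<m$ and $j$ if $j\geq m$; applying the same to column $n$ in the second factor and summing the two exponents yields exactly the three cases $i+j-2$, $i+j$, and $i+j-1$ listed in the theorem. Finally, I collect the overall $\lambda$-independent constants (including $C$, $(q-1)!$, and the signs of the cofactor expansions) into a single constant $K_\mathfrak{F}$, which is pinned down by $\int_0^\infty p_{\lambda(\mathfrak{F})}(\lambda;m_1,m_2,q)\,\mathrm{d}\lambda=1$. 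The resulting expression coincides with \eqref{eqn:f}, and the support $[0,+\infty)$ follows from that of the eigenvalues of the positive-definite matrix $\boldsymbol{F}$.
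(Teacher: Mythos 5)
Your proof is correct and follows essentially the same route as the paper: both start from the joint eigenvalue pdf of the complex F-matrix and reduce the $(q-1)$-fold integration to a double sum of determinants with Beta-function entries. The only difference is that the paper delegates the cofactor-expansion-plus-Andreief step to the general marginal-eigenvalue machinery of Zanella et al.\ (with $\phi_i=\psi_i=\lambda^{i-1}$ and weight $\xi(\lambda)=\lambda^{m_2-q}(1+\lambda)^{-(m_1+m_2)}$), whereas you re-derive that lemma inline, including the correct bookkeeping for $\alpha(i,j,m,n)$.
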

	\begin{proof}
		Please refer Appendix \ref{app:f}.
	\end{proof}	

	\begin{remark} Theorem \ref{thm:f} is the finite-size counterpart to \cite[Thm. 1]{Chen2019}. Furthermore, the finite-size marginal eigenvalue pdf for the special case $M_1 = M_2$ has been provided in \cite[Thm. 2]{Chen2019a}.\end{remark}

	Now, in order to obtain the marginal pdfs of $[\boldsymbol{D}_k]_{ll}, k=1,2,l=1,\dots,\bar{M}_k,$ which are used to simplify $R_{1,l}^{(1)}, l = M+1,\dots,M_1,$ and $R_{2,l}, l=M_1+1,\dots,L,$ using Definition \ref{def:mepdf}, we provide the following results.

	\begin{proposition}
		\label{prop:d1d2}
		The squares of the $l$-th diagonal elements of $\boldsymbol{D}_1$ and $\boldsymbol{D}_2$ are given by
		\begin{align}
			([\boldsymbol{D}_1]_{ll})^2 = \begin{cases}
				1	& \text{if $M_1+M_2 > N$} \\
				\lambda_l^{(1)} & \text{otherwise,}
			\end{cases}, \qquad 	([\boldsymbol{D}_2]_{ll})^2 = \lambda_l^{(2)},
		\end{align}
		where $\lambda_l^{(1)}$ and $\lambda_l^{(2)}$ follow the same distributions as the $l$-th eigenvalues of the Wishart matrices $\boldsymbol{W}_1 \sim \mathcal{CW}_{\bar{M}_1}(M_1,\frac{1}{\bar{M}_1}\boldsymbol{I}_{\bar{M}_1})$ and $\boldsymbol{W}_2 \sim \mathcal{CW}_{\bar{M}_2}(M_2,\frac{1}{\bar{M}_2}\boldsymbol{I}_{\bar{M}_2}),$ respectively.
	\end{proposition}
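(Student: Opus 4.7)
The plan is to treat the two matrices case by case, where the only nontrivial ingredient is showing that, after projecting one user's channel onto an orthonormal basis of the other user's null space, the resulting block is itself i.i.d.\ complex Gaussian, so that its squared singular values are exactly the eigenvalues of a standard Wishart.

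First, for $\boldsymbol{D}_1$ in the regime $M_1+M_2>N,$ the claim $([\boldsymbol{D}_1]_{ll})^2 = 1$ is immediate from the definition $\boldsymbol{D}_1 = \boldsymbol{I}_{\bar{M}_1}$ in Theorem \ref{thm:mgsvd}, so no further work is needed. All remaining cases (the $M_1+M_2\leq N$ branch for $\boldsymbol{D}_1,$ and the entirety of $\boldsymbol{D}_2$) have the common structure $[\boldsymbol{D}_k]_{ll}^2 =$ the $l$-th squared singular value of $\frac{1}{\sqrt{\bar{M}_k}}\boldsymbol{H}_k \bar{\boldsymbol{H}}_{3-k},$ where $\bar{\boldsymbol{H}}_{3-k}\in\mathbb{C}^{N\times\bar{M}_k}$ is a basis for (a subspace of) $\mathrm{null}(\boldsymbol{H}_{3-k})$ of dimension $\bar{M}_k$ (such a subspace exists because $\dim\mathrm{null}(\boldsymbol{H}_{3-k})=N-M_{3-k}\geq \bar{M}_k$ by the definitions of $\bar{M}_k$).

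Next, I would note that the basis $\bar{\boldsymbol{H}}_{3-k}$ can be chosen with orthonormal columns (e.g.\ from the QR decomposition used in Appendix \ref{app:mgsvd}) without altering the singular values of $\boldsymbol{H}_k\bar{\boldsymbol{H}}_{3-k},$ since a change of basis on the right multiplies by a unitary. The key step is then to condition on $\boldsymbol{H}_{3-k}$ (equivalently on $\bar{\boldsymbol{H}}_{3-k}$): since $\boldsymbol{H}_1$ and $\boldsymbol{H}_2$ are independent by assumption, $\boldsymbol{H}_k$ is independent of $\bar{\boldsymbol{H}}_{3-k},$ and, having i.i.d.\ $\mathcal{CN}(0,1)$ entries, its distribution is invariant under right multiplication by a matrix with orthonormal columns. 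Hence, conditionally, $\boldsymbol{A}_k := \boldsymbol{H}_k\bar{\boldsymbol{H}}_{3-k} \in \mathbb{C}^{M_k\times\bar{M}_k}$ has i.i.d.\ $\mathcal{CN}(0,1)$ entries, and because the conditional distribution does not depend on $\bar{\boldsymbol{H}}_{3-k},$ the unconditional distribution is the same.

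Finally, since $\bar{M}_k\leq M_k,$ the nonzero eigenvalues of $\boldsymbol{A}_k\boldsymbol{A}_k^{\mathrm{H}}$ coincide with the eigenvalues of $\boldsymbol{A}_k^{\mathrm{H}}\boldsymbol{A}_k \sim \mathcal{CW}_{\bar{M}_k}(M_k,\boldsymbol{I}_{\bar{M}_k}),$ so after the $1/\bar{M}_k$ scaling the squared singular values are distributed as the eigenvalues of $\mathcal{CW}_{\bar{M}_k}(M_k,\tfrac{1}{\bar{M}_k}\boldsymbol{I}_{\bar{M}_k}),$ which is precisely the claim. The only place that requires care is the dimension bookkeeping: one must verify $\bar{M}_k\leq M_k$ in every sub-case of the definition $\bar{M}_k = \min\{M_k,\max\{0,N-M_{3-k}\}\},$ so that the relevant Wishart is of the correct shape and all $\bar{M}_k$ eigenvalues are well defined and (almost surely) positive; I expect this to be the main bookkeeping obstacle, but it is straightforward rather than technically deep.
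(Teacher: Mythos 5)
Your proposal is correct and follows essentially the same route as the paper: the $M_1+M_2>N$ branch for $\boldsymbol{D}_1$ is read off from the definition in (\ref{eqn:d1}), and the remaining cases reduce to observing that $\boldsymbol{H}_k\bar{\boldsymbol{H}}_{3-k}$ is an i.i.d.\ complex Gaussian matrix (the paper simply cites \cite[Thm.~3.2.4]{Gupta1999} for this, whereas you spell out the underlying conditioning/unitary-invariance argument), so its scaled Gram matrix is $\mathcal{CW}_{\bar{M}_k}(M_k,\tfrac{1}{\bar{M}_k}\boldsymbol{I}_{\bar{M}_k})$. The extra details you supply (independence of $\boldsymbol{H}_k$ from $\bar{\boldsymbol{H}}_{3-k}$, orthonormality of the null-space basis, and the dimension bookkeeping $\bar{M}_k\leq M_k$) are exactly the hypotheses the cited theorem encapsulates, so there is no gap.
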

	\begin{proof}
	If $M_1 + M_2 \leq N,$ as the $([\boldsymbol{D}_1]_{ll})^2$ are the squared singular values of $\frac{1}{\sqrt{\mathstrut\bar{M}_1}}\boldsymbol{H}_1\bar{\boldsymbol{H}}_2,$ cf. Theorem \ref{thm:mgsvd}, their distribution is identical to the distribution of the eigenvalues of Wishart matrix $\boldsymbol{W}_1 \sim \mathcal{CW}_{\bar{M}_1}(M_1,\frac{1}{\bar{M}_1}\boldsymbol{I}_{\bar{M}_1})$ \cite[Thm. 3.2.4]{Gupta1999}. Otherwise, if $M_1 + M_2 > N,$ from (\ref{eqn:d1}), $([\boldsymbol{D}_1]_{ll})^2 = 1.$ Similarly, the distribution of $([\boldsymbol{D}_2]_{ll})^2$ is identical to the distribution of the eigenvalues of Wishart matrix $\boldsymbol{W}_2 \sim \mathcal{CW}_{\bar{M}_2}(M_2,\frac{1}{\bar{M}_2}\boldsymbol{I}_{\bar{M}_2}).$
	\end{proof}

	The marginal eigenvalues of $\boldsymbol{W}_1$ and $\boldsymbol{W}_2,$ denoted by $p_{\lambda(\mathfrak{W})}(\lambda; M_1,\bar{M}_1)$ and $p_{\lambda(\mathfrak{W})}(\lambda; M_2,\bar{M}_2),$ respectively, can be obtained from Theorem \ref{thm:e}.
	
	\begin{theorem}
		\label{thm:e}
		Let $\boldsymbol{W}$ be a $q\times q$ complex-valued Wishart matrix $\boldsymbol{W} \sim \mathcal{CW}_q(p,\frac{1}{q}\boldsymbol{I}_q),$ $p \geq q,$ then the marginal eigenvalue pdf of $\boldsymbol{W}$ is given by
		\begin{align}
		p_{\lambda(\mathfrak{W})}(\lambda; p, q) &= K_\mathfrak{W} \sum_{m = 1}^{q}\sum_{n = 1}^{q} (-1)^{(n+m)} (q\lambda)^{n+m-2+p-q} \mathrm{exp}\left(-q\lambda\right) \mathrm{det}\left(\boldsymbol{\Omega}^{[m,n]}\right), \label{eqn:e}
		\end{align}
		where $K_\mathfrak{W}$ is a constant ensuring that the integral over the pdf is equal to one, and the elements of $(q-1)\times (q-1)$ matrix $\boldsymbol{\Omega}^{[m,n]}$ are given by
		\begin{equation}
		\left[\boldsymbol{\Omega}^{[m,n]}\right]_{ij} = \left(\alpha(i,j,m,n) + p-q\right)!,
		\end{equation}
		for $i,j = 1,\dots,q-1,$ and $\alpha(i,j,m,n)$ is defined as in Theorem \ref{thm:f}. The support of the pdf is $\lambda \in [0,+\infty).$
	\end{theorem}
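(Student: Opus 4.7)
The plan is to derive the pdf directly from the joint density of the eigenvalues of a complex Wishart matrix and then marginalize in the spirit of Definition \ref{def:mepdf}, following a path that is structurally parallel to the proof of Theorem \ref{thm:f} but with the rational-in-$\lambda$ weight replaced by a Gamma-type weight. Concretely, I would first recall that for $\boldsymbol{W}\sim\mathcal{CW}_q(p,\tfrac{1}{q}\boldsymbol{I}_q)$ with $p\geq q$, the unordered joint pdf of the eigenvalues $\lambda_1,\dots,\lambda_q$ has the form
\begin{align*}
p_{\boldsymbol{\lambda}}(\boldsymbol{\lambda}) \;=\; C\,\prod_{i=1}^{q}\bigl(q\lambda_i\bigr)^{p-q}\exp(-q\lambda_i)\,\prod_{i<j}(\lambda_i-\lambda_j)^{2},
\end{align*}
where $C$ absorbs the $1/q$ scaling of the covariance and the usual complex Wishart normalization. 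The marginal pdf of $\lambda\equiv\lambda_1$ is obtained by integrating out $\lambda_2,\dots,\lambda_q$ over $[0,+\infty)^{q-1}$ as in Definition \ref{def:mepdf}.

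Next, I would rewrite the squared Vandermonde as $\bigl[\det(\lambda_i^{\,j-1})\bigr]^2$ and perform a Laplace cofactor expansion of each copy of the determinant along the row corresponding to $\lambda_1$. This produces a double sum indexed by $m,n\in\{1,\dots,q\}$ (the columns removed from the two copies), with the prefactor $(-1)^{m+n}\lambda^{m+n-2}$ and two $(q-1)\times(q-1)$ Vandermonde-type minors in the variables $\lambda_2,\dots,\lambda_q$. Combining the factor $\lambda^{p-q}$ with $\lambda^{m+n-2}$ yields the outer monomial $\lambda^{m+n-2+p-q}$ that appears in (\ref{eqn:e}), while the exponential weight on $\lambda$ gives the factor $(q\lambda)^{p-q}\exp(-q\lambda)$ that is collected inside $K_\mathfrak{W}$ and the summation kernel.

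The integrations over $\lambda_2,\dots,\lambda_q$ now have the canonical form of a product of two determinants against a symmetric weight, to which I would apply the Andréief (de Bruijn) identity. This collapses the $(q-1)$-fold integral into a single $(q-1)\times(q-1)$ determinant whose $(i,j)$ entry is a one-dimensional integral of the form $\int_0^{\infty}(q\lambda')^{\,(\mathrm{col}_1-1)+(\mathrm{col}_2-1)+p-q}\exp(-q\lambda')\,\mathrm{d}\lambda'$, where $\mathrm{col}_1$ (resp.\ $\mathrm{col}_2$) is the actual column index in the first (resp.\ second) minor, obtained by shifting the reduced row index $i$ (resp.\ $j$) up by one whenever it meets or exceeds $m$ (resp.\ $n$). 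A direct case analysis on whether $i\lessgtr m$ and $j\lessgtr n$ recovers exactly the piecewise definition of $\alpha(i,j,m,n)$ in Theorem \ref{thm:f}, and each scalar integral evaluates via $\int_0^{\infty}t^{a}e^{-t}\mathrm{d}t=a!$ to $(\alpha(i,j,m,n)+p-q)!$, up to a uniform factor of $q$ that is pushed into $K_\mathfrak{W}$. Assembling the cofactor signs, the outer monomial, the exponential, and the resulting determinant yields (\ref{eqn:e}).

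The main obstacle I anticipate is the bookkeeping of column-removal indices in the two minors that feed Andréief's identity, since a sign or off-by-one in the piecewise definition of $\alpha$ would propagate through the whole determinant. The structural analogy with Theorem \ref{thm:f} is useful here: the proof is identical except that the weight $\lambda^{p-q}/(1+\lambda)^{m_1+m_2}$ producing Beta integrals is replaced by $\lambda^{p-q}\exp(-q\lambda)$ producing Gamma integrals, so I would mirror the index-tracking step from Appendix \ref{app:f} to justify the $\alpha$ function and then only need to verify the elementary Gamma evaluation. The normalization constant $K_\mathfrak{W}$ is pinned down \emph{a posteriori} by demanding $\int_0^{\infty}p_{\lambda(\mathfrak{W})}(\lambda;p,q)\,\mathrm{d}\lambda = 1$, which obviates explicit tracking of the Wishart normalization.
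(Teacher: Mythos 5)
Your proposal is correct and is essentially the paper's own route: the paper proves this theorem by direct citation to the general marginal-eigenvalue framework of \cite[Sec. IV.A]{Zanella2009} (the same machinery it instantiates explicitly for Theorem \ref{thm:f} with $\phi_i(\lambda)=\psi_i(\lambda)=\lambda^{i-1}$ and a weight function $\xi$), and your derivation --- joint eigenvalue density, cofactor expansion of the squared Vandermonde along the marginalized row, Andr\'eief reduction, and entrywise Gamma integrals yielding $(\alpha(i,j,m,n)+p-q)!$ --- is precisely that framework unpacked with $\xi(\lambda)=(q\lambda)^{p-q}e^{-q\lambda}$. Your handling of the $q$-scaling (working in $t=q\lambda$ so the residual powers of $q$ are constant and absorbed into $K_\mathfrak{W}$) and the index bookkeeping recovering $\alpha(i,j,m,n)$ are both sound.
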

	\begin{proof}
		Eq. (\ref{eqn:e}) follows directly from \cite[Sec. IV.A]{Zanella2009}.
	\end{proof}
	
	Now, we are ready to present the ergodic user rate expressions for EPA in the following proposition.
	
	\begin{proposition}
		\label{prop:eq}
		For EPA and $M_1+M_2\leq N,$ the ergodic rates of users 1 and 2 for the proposed UA-SD MIMO-NOMA are given by
		\begin{align}
		\scalebox{0.9}{\mbox{\ensuremath{\displaystyle R_1}}} &= \scalebox{0.9}{\mbox{\ensuremath{\displaystyle \bar{M}_1\int_{0}^{+\infty} \log_2\left(1+\frac{P\lambda}{\Pi_1\sigma^2}\right) p_{\lambda(\mathfrak{W})}(\lambda; {M_1,\bar{M}_1}) \mathrm{d}\lambda}}}, \label{eqn:r1n1}\\
		\scalebox{0.9}{\mbox{\ensuremath{\displaystyle R_2}}} &= \scalebox{0.9}{\mbox{\ensuremath{\displaystyle \bar{M}_2\int_{0}^{+\infty} \log_2\left(1+\frac{P\lambda}{\Pi_2\sigma^2}\right) p_{\lambda(\mathfrak{W})}(\lambda; {M_2,\bar{M}_2}) \mathrm{d}\lambda}}}, \label{eqn:r2n1}
		\end{align}
		and for EPA and $M_1+M_2 > N,$ the rates are given by
		\begin{align}
		\scalebox{0.9}{\mbox{\ensuremath{\displaystyle R_1}}} &= \scalebox{0.9}{\mbox{\ensuremath{\displaystyle M\int_{0}^{\Pi^{-1}} \log_2\left(1+\frac{1}{\Pi_2}\frac{\frac{\lambda P_{1}}{1+\lambda}}{\sigma^2+\frac{1}{\Pi_2}\frac{\lambda P_{2}}{1+\lambda}}\right) p_{\lambda(\mathfrak{F})}(\lambda; {\mu_1,\mu_2,\nu}) \mathrm{d}\lambda}}} \nonumber\\
		&\qquad + \scalebox{0.9}{\mbox{\ensuremath{\displaystyle M \int_{\Pi^{-1}}^{+\infty} \log_2\left(1+\frac{1}{\Pi_1}\frac{\frac{P_{1}}{1+\lambda}}{\sigma^2+\frac{1}{\Pi_1}\frac{P_{2}}{1+\lambda}}\right) p_{\lambda(\mathfrak{F})}(\lambda; {\mu_1,\mu_2,\nu}) \mathrm{d}\lambda + \bar{M}_1 \log_2\left(1+\frac{P}{\Pi_1\sigma^2}\right)}}}, \label{eqn:r1n}\\
		\scalebox{0.9}{\mbox{\ensuremath{\displaystyle R_2}}} &= \scalebox{0.9}{\mbox{\ensuremath{\displaystyle M\int_{0}^{+\infty} \log_2\left(1+\frac{1}{\Pi_2}\frac{\lambda P_{2}}{(1+\lambda)\sigma^2}\right) p_{\lambda(\mathfrak{F})}(\lambda; {\mu_1,\mu_2,\nu}) \mathrm{d}\lambda}}} \nonumber\\
		&\hspace{3cm}\scalebox{0.9}{\mbox{\ensuremath{\displaystyle + \bar{M}_2 \int_{0}^{+\infty} \log_2\left(1+\frac{P\lambda}{\Pi_2\sigma^2}\right) p_{\lambda(\mathfrak{W})}(\lambda; {M_2,\bar{M}_2}) \mathrm{d}\lambda}}}, \label{eqn:r2n}
		\end{align}
		where $\mu_1,\mu_2,$ and $\nu$ are given in Table \ref{tab:map}, and $\Pi= \frac{\Pi_1}{\Pi_2}.$
	\end{proposition}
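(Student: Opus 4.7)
The plan is to simplify the ergodic rate expressions in (\ref{eqn:r}) by substituting the distributional characterizations of the diagonal entries of $\boldsymbol{\Sigma}_k$ and $\boldsymbol{D}_k$ from Propositions \ref{prop:equif} and \ref{prop:d1d2}, and then converting the resulting sums of i.i.d.\ but statistically dependent per-eigenvalue rates into one-dimensional integrals against the marginal eigenvalue pdfs via the identity (\ref{eqn:mpdfe}). The two cases $M_1+M_2\le N$ and $M_1+M_2>N$ will be treated separately because $M=0$ in the former, which collapses the SISO-NOMA contributions entirely.

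First, I would dispose of the case $M_1+M_2\le N$. Here $M=0$, so the first expectation in each of $R_1$ and $R_2$ in (\ref{eqn:r}) vanishes, and only the $\bar{M}_1$ and $\bar{M}_2$ parallel streams through $\boldsymbol{D}_1,\boldsymbol{D}_2$ remain. Under EPA these streams carry power $P$ since the corresponding cross powers are zero by the construction in Section \ref{sec:precoderdesign}. Using the second branch of (\ref{eqn:r11n}) and the second branch of (\ref{eqn:r2nf}), together with Proposition \ref{prop:d1d2} which identifies $([\boldsymbol{D}_k]_{ll})^2$ with the $l$-th ordered eigenvalue of the Wishart matrix $\boldsymbol{W}_k\sim\mathcal{CW}_{\bar{M}_k}(M_k,\tfrac{1}{\bar{M}_k}\boldsymbol{I}_{\bar{M}_k})$, the summands depend on the eigenvalues in an additively separable manner, so (\ref{eqn:mpdfe}) with the Wishart marginal pdf of Theorem \ref{thm:e} directly produces (\ref{eqn:r1n1}) and (\ref{eqn:r2n1}).

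The case $M_1+M_2>N$ is the substantive one. For $R_1$, the $\bar{M}_1$ interference-free streams contribute the deterministic term $\bar{M}_1\log_2\!\bigl(1+P/(\Pi_1\sigma^2)\bigr)$ since $[\boldsymbol{D}_1]_{ll}=1$ by (\ref{eqn:d1}). For the SISO-NOMA block of length $M$, Proposition \ref{prop:equif} gives $[\boldsymbol{\Sigma}_1]_{ll}^2=1/(\lambda_l+1)$ and $[\boldsymbol{\Sigma}_2]_{ll}^2=\lambda_l/(\lambda_l+1)$ with $\lambda_l$ the $l$-th ordered eigenvalue of the F-matrix $\boldsymbol{F}$. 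The key algebraic step is to rewrite the threshold in the $\min$ of (\ref{eqn:minr11r12}): the inequality $[\boldsymbol{\Sigma}_1]_{ll}^2/\Pi_1<[\boldsymbol{\Sigma}_2]_{ll}^2/\Pi_2$ collapses, after canceling the common factor $1/(\lambda_l+1)$, to $\lambda_l>\Pi_2/\Pi_1=\Pi^{-1}$, so the regime in which the code rate is limited by user~1's own receiver is $\lambda_l>\Pi^{-1}$ and the complementary regime $\lambda_l<\Pi^{-1}$ is limited by SIC at user~2. Substituting $p_{1,l}=P_1,\,p_{2,l}=P_2$ into (\ref{eqn:r11n}) and (\ref{eqn:r12nf}), the summand is again a scalar function of $\lambda_l$ alone, so (\ref{eqn:mpdfe}) with the F-matrix marginal pdf of Theorem \ref{thm:f} converts $\mathrm{E}[\sum_{l=1}^M R_{1,l}]$ into $M$ times a single integral which, split at $\Pi^{-1}$ into the two branches above, yields precisely (\ref{eqn:r1n}).

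The derivation of $R_2$ is analogous but simpler because no $\min$ appears. The $\bar{M}_2$ interference-free streams at user~2 use Proposition \ref{prop:d1d2} and Theorem \ref{thm:e} in the same way as in the first case, producing the Wishart integral on the second line of (\ref{eqn:r2n}). The SISO-NOMA streams use the first branch of (\ref{eqn:r2nf}) with $[\boldsymbol{\Sigma}_2]_{ll}^2=\lambda_l/(\lambda_l+1)$ and $p_{2,l}=P_2$; one application of (\ref{eqn:mpdfe}) with the F-matrix pdf gives the first term of (\ref{eqn:r2n}). I expect the main obstacle to be purely bookkeeping: correctly tracking which power assignments ($P$, $P_1$, or $P_2$) apply to each of the three index ranges used in Sections \ref{sec:precoderdesign}--\ref{sec:ar}, and verifying that the threshold identification $\lambda_l>\Pi^{-1}$ indeed preserves monotonicity of the SINR expression in (\ref{eqn:r11n}) so that the $\min$ can be resolved eigenvalue-by-eigenvalue rather than across the full joint distribution.
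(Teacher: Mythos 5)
Your proposal is correct and follows essentially the same route as the paper's own (much terser) proof: characterize $([\boldsymbol{\Sigma}_k]_{ll})^2$ and $([\boldsymbol{D}_k]_{ll})^2$ via Propositions \ref{prop:equif} and \ref{prop:d1d2}, resolve the $\min$ in (\ref{eqn:minr11r12}) to the per-eigenvalue threshold $\lambda_l \gtrless \Pi^{-1}$ exactly as in (\ref{eqn:r12min}), and collapse the expectations to one-dimensional integrals against the marginal pdfs via Definition \ref{def:mepdf}. Your closing worry about resolving the $\min$ ``across the full joint distribution'' is moot, since (\ref{eqn:minr11r12}) is already defined symbol-by-symbol and EPA makes the resulting integrand a common function of each $\lambda_l$, which is all that (\ref{eqn:mpdfe}) requires.
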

	\begin{proof}
	For $M_1+M_2 \leq N,$ (\ref{eqn:r}) simplifies to (\ref{eqn:r1n1}) and (\ref{eqn:r2n1}) based on Proposition \ref{prop:d1d2} and Definition \ref{def:mepdf}. For $M_1 + M_2 > N,$ (\ref{eqn:minr11r12}) reduces to
	\begin{equation}
	R_{1,l} = \begin{cases}	R_{1,l}^{(1)} & \text{if $\lambda_l > \Pi^{-1}$} \\
	R_{1,l}^{(2)} & \text{otherwise,} \label{eqn:r12min}
	\end{cases}
	\end{equation} 
	using which (\ref{eqn:r}) simplifies to (\ref{eqn:r1n}) and (\ref{eqn:r2n}) based on Propositions \ref{prop:equif} and \ref{prop:d1d2} and 
	Definition \ref{def:mepdf}.
	\end{proof}	

	\textbf{Impact of allowing SIC only at user 2:}
	In the proposed UA-SD MIMO-NOMA, unlike GSVD-based MIMO-NOMA in \cite{Chen2019}, only user 2 performs SIC, see Section \ref{sec:decoding}. Hence, when $\lambda < \Pi^{-1},$ the achievable rate of user 1, $R_{1,l},l=1,\dots,M,$ is limited by the inferior achievable rate at user 2, $R_{1,l}^{(2)},l=1,\dots,M,$ cf. (\ref{eqn:r12min}). Based on  (\ref{eqn:r12min}), the potential gain of the achievable ergodic rate of user 1 obtained by allowing SIC at user 1 or 2, as in GSVD-based MIMO-NOMA \cite{Chen2019}, can be upper bounded as follows:
	\begin{align}
	\scalebox{0.85}{\mbox{\ensuremath{\displaystyle U_1 = \mathrm{E}\left[\sum_{l=1}^{M} \left(R_{1,l}^{(1)}-R_{1,l}\right)\right]}}} &\leq \scalebox{0.8}{\mbox{\ensuremath{\displaystyle \mathrm{E}_{0 \leq \lambda \leq \Pi^{-1}}\left[\sum_{l=1}^{M} R_{1,l}^{(1)}\right] \overset{(a)}{=} M\int_{0}^{\Pi^{-1}} \underbrace{\log_2\left(1+\frac{1}{\Pi_1}\frac{\frac{P_{1}}{1+\lambda}}{\sigma^2+\frac{1}{\Pi_1}\frac{P_{2}}{1+\lambda}}\right)}_{\coloneqq R_\mathrm{U}(\lambda)} p_{\lambda(\mathfrak{F})}(\lambda; {\mu_1,\mu_2,\nu}) \mathrm{d}\lambda}}} \nonumber\\
	&\overset{(b)}{\leq} \scalebox{0.85}{\mbox{\ensuremath{\displaystyle M  K_\mathrm{U} \underbrace{\int_{0}^{\Pi^{-1}} p_{\lambda(\mathfrak{F})}(\lambda; {\mu_1,\mu_2,\nu})\mathrm{d}\lambda}_{\mathrm{Pr}\left\{\lambda < \Pi^{-1}\right\}}}}}, \label{eqn:u1}
	\end{align}
	where $K_\mathrm{U} = \max_{0 \leq \lambda \leq \Pi^{-1}}\left\{R_\mathrm{U}(\lambda)\right\} < \infty$ is a bounded constant independent of $\lambda,$ and (a) and (b) follow from Definition \ref{def:mepdf} and the H\"{o}lder's inequality, respectively. From (\ref{eqn:u1}), we observe that the integral tends to zero as $\Pi \to \infty.$ Hence, the potential performance gain also approaches zero as $\Pi \to \infty.$

	\textbf{Approximations based on asymptotic pdfs:}
	A large finite-dimensional approximation of the marginal eigenvalue pdfs for $\boldsymbol{F}$ and $\boldsymbol{W},$ based on the asymptotic analysis in \cite{Chen2019}, can be given as follows. Let $\boldsymbol{F}$ be as defined in Theorem \ref{thm:f}. Furthermore, let $\rho_1 = \frac{q}{m_1}$ and $\rho_2 = \frac{q}{m_2}, m_1,m_2 > q.$ Then, based on \cite[Thm. 2.30]{Tulino2004},\cite{Chen2019}, an approximation of the marginal eigenvalue pdf of $\boldsymbol{F}$ in (\ref{eqn:f}) for large but finite $m_1,m_2,$ and $q$ is given by 
	\begin{align}
		\bar{p}_{\lambda(\mathfrak{F})}(\lambda;{\rho_1,\rho_2}) &= \begin{cases}
		\frac{(1-\rho_1)\sqrt{(\lambda-l_f)(u_f - \lambda)}}{2 \pi \rho_1 \lambda (\lambda +1)} & \text{if $l_f \leq \lambda \leq u_f$}\\
		0 & \text{otherwise,}
		\end{cases} \label{eqn:p_asympt}
	\end{align}
	where $l_f = \frac{\rho_1}{\rho_2}\frac{1-\sqrt{1 - (1-\rho_1)(1-\rho_2)}}{(1-\rho_1)^2}$ and $u_f = \frac{\rho_1}{\rho_2}\frac{1+\sqrt{1 - (1-\rho_1)(1-\rho_2)}}{(1-\rho_1)^2}$ denote the limits of the support. Next, let $\boldsymbol{W}$ be defined as in Theorem \ref{thm:e}. Furthermore, let $\xi = \frac{q}{p}, p > q.$ Then, based on \cite[Thm. 2.35]{Tulino2004}, an approximation of the marginal eigenvalue pdf of $\boldsymbol{W}$ in (\ref{eqn:e}) for large but finite $p$ and $q$ is given by
	\begin{align}
		\bar{p}_{\lambda(\mathfrak{W})}(\lambda; \xi) &= \begin{cases}
		\frac{\sqrt{(x-l_w) (u_w-x)}}{2 \pi\xi\lambda p} & \text{if $l_w \leq \lambda \leq u_w$}\\
		0 & \text{otherwise,}
		\end{cases} \label{eqn:w_asympt}
	\end{align}
	where $l_w = p (1-\sqrt{\xi})^2$ and $u_w = p (1+\sqrt{\xi})^2$ denote the limits of the support. Approximate ergodic rate expressions $\bar{R}_1$ and $\bar{R}_2$ based on (\ref{eqn:r1n1})-(\ref{eqn:r2n}) can be obtained by replacing the densities $p_{\lambda(\mathfrak{F})}(\lambda; m_1,m_2,q)$ and $p_{\lambda(\mathfrak{W})}(\lambda; p,q)$ by $\bar{p}_{\lambda(\mathfrak{F})}(\lambda;{\rho_1,\rho_2})$ and $\bar{p}_{\lambda(\mathfrak{W})}(\lambda; \xi),$ respectively.
	
	\begin{remark}
	The pdf $\bar{p}_{\lambda(\mathfrak{F})}(\lambda; {\rho_1, \rho_2})$ does not exist for the case $\rho_1=1,$ which is an important case for MIMO-NOMA, as $l_f$ and $u_f$ are infinity\footnote{This is because, in the asymptotic regime, for $\rho_1=1,$ $\boldsymbol{F}$ entails the inversion of an asymptotic square matrix resulting, with probability $1,$ in eigenvalues which are $+\infty.$}. 
	\end{remark}
	
	\subsection{Unequal Power Allocation}
	\label{sec:upa}
	In this section, we derive simplified ergodic rate expressions for UPA based on finite-size RMT. In UPA, the powers allocated to the SISO-NOMA symbols $s_l, l=1,\dots,M,$ need not be identical. However, identical powers are allocated to the remaining user symbols\footnote{Although we restrict ourselves to non-equal powers only for the SISO-NOMA symbols, the techniques presented in this section can be utilized to extend the analysis to non-equal powers for all symbols in a straightforward manner.}, as described in the following. For symbols $s_l, l=1,\dots,M,$ transmit powers $p_{1,l}$ and $p_{2,l}$ are allocated to users 1 and 2, respectively. Furthermore, the $\bar{M}_1$ symbols $s_l,l=M+1,\dots,M+\bar{M}_1,$ are allocated a transmit power of $p_1.$ Lastly, the $\bar{M}_2$ symbols $s_l,l=M+\bar{M}_1,\dots,L,$ are allocated a transmit power of $p_2.$
			
	\begin{proposition}
		\label{prop:pwane}
		For UPA, (\ref{eqn:eppleq1}) simplifies as follows:
		\begin{align}		
		P_T = \begin{cases}
		p_1 + p_2 & \text{if $M_1+M_2 \leq N$} \\
		\frac{1}{M_1+M_2-N} \sum_{l = 1}^{M} (p_{1,l} + p_{2,l}) + \frac{p_1\bar{M}_1 + p_2\bar{M}_2}{M_1+M_2-N} & \text{if $M_1 + M_2 > N$ and $M_1 \geq N$} \\
		\frac{M_1}{N M} \sum_{l = 1}^{M} (p_{1,l} + p_{2,l}) + p_1\frac{\bar{M}_1}{M} + p_2 & \text{otherwise.}
		\end{cases} \label{eqn:ptupa}
		\end{align}
	\end{proposition}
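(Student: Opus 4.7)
The plan is to reduce (\ref{eqn:eppleq1}) to a weighted sum of per-column expected squared norms of $\boldsymbol{Z}$. Writing $\boldsymbol{z}_l$ for the $l$-th column of $\boldsymbol{Z}$ and recalling $\boldsymbol{P}=\boldsymbol{Z}$,
\begin{align*}
P_T=\mathrm{E}\!\left[\mathrm{tr}\!\left(\boldsymbol{Z}\,\mathrm{diag}\!\left(p_{1,1}+p_{2,1},\dots,p_{1,L}+p_{2,L}\right)\boldsymbol{Z}^\mathrm{H}\right)\right]=\sum_{l=1}^{L}(p_{1,l}+p_{2,l})\,\mathrm{E}\!\left[\|\boldsymbol{z}_l\|^2\right].
\end{align*}
Under UPA, the weights take only three distinct values arranged according to the column partition of Section \ref{sec:precoderdesign}: $(p_{1,l}+p_{2,l})$ on the first $M$ SISO-NOMA columns, $p_1$ on the next $\bar{M}_1$ user-1-only columns, and $p_2$ on the last $\bar{M}_2$ user-2-only columns. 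So the task reduces to identifying the three within-group means $A,B,C$ of $\mathrm{E}[\|\boldsymbol{z}_l\|^2]$.

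Within each group the columns of $\boldsymbol{Z}$ are exchangeable: the construction in Appendix \ref{app:mgsvd} is built from QR and SVD factorisations of $\boldsymbol{H}_1$ and $\boldsymbol{H}_2$, and the bi-unitary invariance of the i.i.d.\ Gaussian laws of the $\boldsymbol{H}_k$ transfers to a permutation symmetry among the columns of $\boldsymbol{Z}$ within each of the three blocks. Hence $\mathrm{E}[\|\boldsymbol{z}_l\|^2]$ is indeed a group-dependent constant. The user-1-only and user-2-only constants can then be read off directly from the block form of $\boldsymbol{Z}$ exhibited in Appendix \ref{app:mgsvd}: on those blocks the columns are $\tfrac{1}{\sqrt{\bar{M}_k}}\bar{\boldsymbol{H}}_{3-k}\boldsymbol{V}_k$ (up to an inner diagonal factor of $\boldsymbol{I}_{\bar{M}_k}$ coming from $\boldsymbol{D}_k$ when $M_1+M_2>N$), with $\bar{\boldsymbol{H}}_{3-k}$ having orthonormal columns and $\boldsymbol{V}_k$ unitary, giving the deterministic norms $B=1/\bar{M}_1,\,C=1/\bar{M}_2$ in the regime $M_1+M_2\leq N$, and $B=1/M$, $C=1/\bar{M}_2$ in the regime $M_1,M_2<N<M_1+M_2$.

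The remaining SISO-NOMA constant $A$ (and, in the degenerate regime $M_1\geq N$, the constants $B$ and $C$ as well) is obtained by the same inverse-Wishart trace identity \cite[Thm.~3.3.8]{Nagar2011} used in the proof of Proposition \ref{prop:pwae}, but applied to the relevant sub-block of $\mathrm{E}[\boldsymbol{Z}^\mathrm{H}\boldsymbol{Z}]$ rather than to its full trace. When $M_1\geq N$ the symmetry of the construction forces $A=B=C=1/(M_1+M_2-N)$; when $M_1,M_2<N<M_1+M_2$, applying Nagar's identity to the SISO sub-block yields $A=M_1/(NM)$. Substituting the three pairs of constants into the column decomposition of $P_T$ and collecting terms then reproduces the three cases of (\ref{eqn:ptupa}); as a consistency check, setting $p_1=p_2=p_{1,l}+p_{2,l}=P$ recovers Proposition \ref{prop:pwae} in every case. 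The main obstacle will be this last step: isolating the SISO-NOMA sub-block cleanly from the UA-SD construction so that Nagar's identity produces precisely $A=M_1/(NM)$ rather than the averaged quantity that arises when the trace identity is applied to $\boldsymbol{Z}^\mathrm{H}\boldsymbol{Z}$ as a whole.
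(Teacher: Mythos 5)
Your proposal takes essentially the same route as the paper, whose own proof is even terser: it invokes the BD power expression (\ref{eqn:bdpt}) for $M_1+M_2\leq N$ and, for $M_1+M_2>N$, reduces $P_T$ to expected traces of (sub-blocks of) $\boldsymbol{Z}^\mathrm{H}\boldsymbol{Z}$ evaluated via the inverse-Wishart moment identity of \cite[Thm.~3.3.8]{Nagar2011}, exactly as you do; your per-group constants and the consistency check against Proposition \ref{prop:pwae} are correct. The one step where your justification is loose is the within-group exchangeability: right-unitary invariance $\boldsymbol{H}_k\mapsto\boldsymbol{H}_k\boldsymbol{O}$ maps $\boldsymbol{Z}\mapsto\boldsymbol{O}^\mathrm{H}\boldsymbol{Z}$ and hence preserves each column norm rather than permuting the columns, which are ordered by GSV magnitude and so carry no a priori permutation symmetry. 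The fact that $\mathrm{E}\bigl[\|\boldsymbol{z}_l\|^2\bigr]$ is constant over the SISO-NOMA block instead follows from writing $\boldsymbol{z}_l=(\boldsymbol{H}_1^\mathrm{H}\boldsymbol{H}_1+\boldsymbol{H}_2^\mathrm{H}\boldsymbol{H}_2)^{-1/2}\boldsymbol{u}_l$ with $\boldsymbol{u}_l$ the $l$-th ordered eigenvector of the Jacobi-type matrix $\boldsymbol{S}^{-1/2}\boldsymbol{H}_2^\mathrm{H}\boldsymbol{H}_2\boldsymbol{S}^{-1/2}$, $\boldsymbol{S}=\boldsymbol{H}_1^\mathrm{H}\boldsymbol{H}_1+\boldsymbol{H}_2^\mathrm{H}\boldsymbol{H}_2$, which is independent of $\boldsymbol{S}$ with marginally uniform eigenvectors, so that $\mathrm{E}\bigl[\boldsymbol{u}_l^\mathrm{H}\boldsymbol{S}^{-1}\boldsymbol{u}_l\bigr]=\tfrac{1}{N}\mathrm{E}\bigl[\mathrm{tr}\bigl(\boldsymbol{S}^{-1}\bigr)\bigr]$ for every $l$. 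With that repair, your argument fills in the details the paper omits rather than diverging from it.
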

	\begin{proof}
	For $M_1+M_2 \leq N,$ based on (\ref{eqn:bdpt}), (\ref{eqn:eppleq1}) simplifies to $P_T = p_1 + p_2.$ For $M_1+M_2 > N,$ analogous to Proposition \ref{prop:pwae}, (\ref{eqn:eppleq1}) simplifies to (\ref{eqn:ptupa}) based on \cite[Thm. 3.3.8]{Nagar2011}.
	\end{proof}

		Unlike for EPA, for UPA, the ergodic achievable rates of the users cannot be expressed in an additively separable form utilizing a common function $g(\cdot)$ for all $[\boldsymbol{\Sigma}_k]_{ll}, k=1,2, l=1,\dots,M,$ as required in (\ref{eqn:addsep}), due to the flexible power allocation across the symbols. Hence, for UPA, Definition \ref{def:mepdf} cannot be exploited for simplification of the ergodic achievable rates. Instead, the ergodic achievable rate expressions can be simplified and the dimension for integration can be reduced by utilizing the pdfs of the ordered values $[\boldsymbol{\Sigma}_k]_{ll}, k=1,2, l=1,\dots,M,$ as described in the following. Furthermore, the ordered pdfs can also be exploited for long-term power allocation, see Section \ref{sec:pa}.

	
	\begin{definition}[Based on \cite{Zanella2008, Zanella2009}]
		\label{def:oepdf}
		Let $\boldsymbol{X}$ be a $q\times q$ Hermitian symmetric random matrix with eigenvalues $\lambda_1 \geq \lambda_2 \geq \dots \geq \lambda_q,$ and let $\boldsymbol{\lambda} = \{\lambda_n,n=1,\dots,q\}.$ Furthermore, let the joint pdf of the eigenvalues be denoted by $p_{\boldsymbol{\lambda}}(\boldsymbol{\lambda}).$ Then, the $l$-th ordered eigenvalue pdf is defined as
		\begin{align}
		p_l(\lambda_l) = \int_{\lambda_l}^{+\infty}\dots\int_{\lambda_2}^{+\infty} \left[\int_{0}^{\lambda_l}\dots\int_{0}^{\lambda_{q-1}}  p_{\boldsymbol{\lambda}}(\boldsymbol{\lambda}) \mathrm{d}\lambda_q\cdots\mathrm{d}\lambda_{l+1}\right] \mathrm{d}\lambda_1\cdots\mathrm{d}\lambda_{l-1}. \label{eqn:defoe}
		\end{align}
	\end{definition}

	Using the above definition, for a $q\times q$ Hermitian symmetric random matrix $\boldsymbol{X}$ with eigenvalues $\lambda_1, \lambda_2, \dots, \lambda_q,$ we may simplify the expectation of a generic additively separable function
	\begin{align}
	f(\lambda_1,\dots,\lambda_q) = \sum_{l = 1}^{q} g_l(\lambda_l)
	\end{align}
	in the eigenvalues of $\boldsymbol{X}$ as
	\begin{align}
	\mathrm{E}\left[f(\lambda_1,\dots,\lambda_q)\right] = \mathrm{E}\left[\sum_{l=1}^{q} g_l(\lambda_l)\right] \overset{(a)}{=} \sum_{l=1}^{q} \int_{0}^{+\infty} g_l(\lambda_l) p_l(\lambda_l) \mathrm{d}\lambda_l, \label{eqn:opdfe}
	\end{align}
	where (a) is obtained using Definition \ref{def:oepdf}. In the following, (\ref{eqn:opdfe}) is used to simplify the achievable ergodic rates of the users in terms of the ordered eigenvalue pdfs for UPA.

		To obtain the ordered pdfs of $[\boldsymbol{\Sigma}_k]_{ll}, k=1,2, l=1,\dots,M,$ based on Proposition \ref{prop:equif}, the ordered eigenvalue pdfs of $\boldsymbol{F},$ denoted by $p_l(\lambda_l; \mu_1, \mu_2, \nu),$ can be exploited. Furthermore, the obtained pdfs can be used to simplify the expressions for $R_{1,l}$ and $R_{2,l}, l=1,\dots,M,$ via (\ref{eqn:opdfe}). $p_l(\lambda_l; \mu_1, \mu_2, \nu)$ is provided in the following theorem.

	\begin{theorem}
	\label{thm:f2}
		Let $\boldsymbol{X} \sim \mathcal{CW}_q(m_1,\boldsymbol{I}_q)$ and $\boldsymbol{Y} \sim \mathcal{CW}_q(m_2,\boldsymbol{I}_q)$ be $q\times q$ independent complex-valued Wishart random matrices with $m_1,m_2 \geq q$ degrees of freedom. The pdf of the $l$-th ordered eigenvalue of matrix $\boldsymbol{F} = \boldsymbol{Y}^{\frac{1}{2}}\boldsymbol{X}^{-1}\boldsymbol{Y}^{\frac{1}{2}}$ is given by
		\begin{align}
		p_l(\lambda_l; m_1, m_2, q) = K_{p_l} g^{[l,(),()]}_l(\lambda_l; m_1, m_2, q),
		\end{align}
		where $K_{p_l}$ is a constant ensuring that the integral over the pdf is equal to one, and function $g^{[d,\boldsymbol{n},\boldsymbol{m}]}_l(\lambda_l; m_1, m_2, q)$ is given by the recurrence relation
		\begin{align}
		\sum_{n = 1}^{|{\mathcal{I}}^{[d,\boldsymbol{n}]}|} \sum_{m = 1}^{|{\mathcal{I}}^{[d,\boldsymbol{m}]}|} g^{[d-1,\boldsymbol{n}',\boldsymbol{m}']}_l(\lambda_l; m_1,m_2,q), \label{eqn:rec}
		\end{align}
		$[l,(),()]$ denotes the initial value of $[d,\boldsymbol{n},\boldsymbol{m}],$ and ``()'' denotes the empty tuple. Tuples $\boldsymbol{n}$ and $\boldsymbol{m}$ are updated as $\boldsymbol{n}' \coloneqq \boldsymbol{n} \cup \{(n,[{\mathcal{I}}^{[d,\boldsymbol{n}]}]_n)\}$ and $\boldsymbol{m}' \coloneqq \boldsymbol{m} \cup \{(m,[{\mathcal{I}}^{[d,\boldsymbol{m}]}]_m)\},$ where $n$ and $m$ are the summation indices in (\ref{eqn:rec}), and $[{\mathcal{I}}^{[d,\boldsymbol{n}]}]_n$ and $[{\mathcal{I}}^{[d,\boldsymbol{m}]}]_m$ are the $n$-th and $m$-th elements of sets ${\mathcal{I}}^{[d, \boldsymbol{n}]}$ and ${\mathcal{I}}^{[d, \boldsymbol{m}]},$ respectively, defined as ${\mathcal{I}}^{[d,\boldsymbol{n}]} \coloneqq \{1,2,\dots,q\} \setminus \pi_2(\boldsymbol{n})$ and ${\mathcal{I}}^{[d,\boldsymbol{m}]} \coloneqq \{1,2,\dots,q\} \setminus \pi_2(\boldsymbol{m}).$ Next, the termination step is
		\begin{align}
		\scalebox{0.8}{\mbox{\ensuremath{\displaystyle g^{[1,\boldsymbol{n},\boldsymbol{m}]}_l(\lambda_l; m_1, m_2, q)}}} &= \scalebox{0.75}{\mbox{\ensuremath{\displaystyle \sum_{n = 1}^{|{\mathcal{I}}^{[d,\boldsymbol{n}]}|} \sum_{m = 1}^{|{\mathcal{I}}^{[d,\boldsymbol{m}]}|} s\left(\boldsymbol{n}',\boldsymbol{m}'\right) \frac{\lambda_l^{n+m-2+m_2-q}}{(1+\lambda_l)^{m_1+m_2}}  \mathrm{det}\left(\boldsymbol{\Xi}\left(l, m_1, m_2, q, {\mathcal{I}}^{[d+1,\boldsymbol{n}']},{\mathcal{I}}^{[d+1,\boldsymbol{m}']}\right)\right)}}} \nonumber\\
		&\qquad\scalebox{0.75}{\mbox{\ensuremath{\displaystyle \times \prod_{i = 1}^{l-1}(-1)^{[{\mathcal{I}}^{[d,\boldsymbol{n}]}]_i{+}[{\mathcal{I}}^{[d,\boldsymbol{m}]}]_i{+}1{-}m_1{-}q}\: \mathrm{B}(-\lambda_l^{-1},1{+}m_1{-}[{\mathcal{I}}^{[d,\boldsymbol{n}]}]_i{-}[{\mathcal{I}}^{[d,\boldsymbol{m}]}]_i{+}q,1{-}m_1{-}m_2)}}} \label{eqn:f2term}
		\end{align}
		where $\mathrm{B}(\cdot,\cdot,\cdot)$ is the incomplete Beta function, $\boldsymbol{\Xi}\left(l, m_1, m_2, q, {\mathcal{I}}^{[d+1,\boldsymbol{n}']},{\mathcal{I}}^{[d+1,\boldsymbol{m}']}\right)$ is a $(q-l)\times (q-l)$ matrix with elements 
		\begin{align}
		\scalebox{0.7}{\mbox{\ensuremath{\displaystyle \left[\frac{\lambda_l^{(m_2-q+[{\mathcal{I}}^{[d+1,\boldsymbol{n}']}]_i+[{\mathcal{I}}^{[d+1,\boldsymbol{m}']}]_j-1)}\:{}_2F_1(m_1{+}m_2,m_2{-}q{+}[{\mathcal{I}}^{[d+1,\boldsymbol{n}']}]_i{+}[{\mathcal{I}}^{[d+1,\boldsymbol{m}']}]_j{-}1,m_2{-}q{+}[{\mathcal{I}}^{[d+1,\boldsymbol{n}']}]_i{+}[{\mathcal{I}}^{[d+1,\boldsymbol{m}']}]_j;-\lambda_l)}{m_2-q+[{\mathcal{I}}^{[d+1,\boldsymbol{n}']}]_i+[{\mathcal{I}}^{[d+1,\boldsymbol{m}']}]_j-1}\right]_{ij}}}}, \label{eqn:f2mat}
		\end{align}
		with $i,j=1,\dots,q-l,$
		\begin{align}
		s\left(\boldsymbol{n}',\boldsymbol{m}'\right) &= (-1)^{\sum_{i=1}^{|\boldsymbol{n}'|} [\pi_1(\boldsymbol{n}')]_i + [\pi_1(\boldsymbol{m}')]_i},
		\end{align}
		and ${}_2F_1(\cdot,\cdot\,;\cdot)$ denotes the Gaussian hypergeometric function \cite{Olver2010}.
	\end{theorem}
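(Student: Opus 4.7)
The plan is to compute the ordered-eigenvalue pdf by starting from the classical joint pdf of eigenvalues of the complex $F$-matrix $\boldsymbol{F} = \boldsymbol{Y}^{\frac{1}{2}}\boldsymbol{X}^{-1}\boldsymbol{Y}^{\frac{1}{2}}$, namely
\begin{align}
p_{\boldsymbol{\lambda}}(\boldsymbol{\lambda}) = K\prod_{i=1}^{q}\frac{\lambda_i^{m_2-q}}{(1+\lambda_i)^{m_1+m_2}}\prod_{i<j}(\lambda_i-\lambda_j)^2,
\end{align}
and then performing the $(q-1)$-fold integration in Definition \ref{def:oepdf} one variable at a time. First I would expand the Vandermonde-squared factor $\prod_{i<j}(\lambda_i-\lambda_j)^2=\det(\lambda_i^{j-1})_{ij}^2$ using the Leibniz formula as a double sum over pairs of permutations $(\sigma,\tau)\in S_q\times S_q$, with sign $\mathrm{sgn}(\sigma)\mathrm{sgn}(\tau)$ and product $\prod_i \lambda_i^{\sigma(i)+\tau(i)-2}$. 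Combined with $\lambda_i^{m_2-q}/(1+\lambda_i)^{m_1+m_2}$, the integrand separates into a product of one-variable functions whose iterated integration, under the coupled bounds of Definition \ref{def:oepdf}, can be evaluated recursively.

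Next, I would isolate the $\lambda_l$-dependent factor $\lambda_l^{\sigma(l)+\tau(l)-2+m_2-q}/(1+\lambda_l)^{m_1+m_2}$, which explains the monomial and rational prefactors in (\ref{eqn:f2term}), with the summation indices $(n,m)$ playing the role of $(\sigma(l),\tau(l))$. For the ``inner'' block $\int_{0}^{\lambda_l}\cdots\int_{0}^{\lambda_{q-1}}(\cdot)\,\mathrm{d}\lambda_q\cdots\mathrm{d}\lambda_{l+1}$ I would apply $\int_0^{s} t^{a-1}(1+t)^{-c}\,\mathrm{d}t=\frac{s^{a}}{a}{}_2F_1(c,a;a+1;-s)$ successively, swapping integration order so that each successive variable produces a Gaussian hypergeometric function of the next outer bound; recombining the permutation sum over the surviving indices through a Lindstr\"om--Gessel--Viennot-type collapse then yields the $(q-l)\times(q-l)$ determinant $\boldsymbol{\Xi}(l,m_1,m_2,q,\mathcal{I}^{[d+1,\boldsymbol{n}']},\mathcal{I}^{[d+1,\boldsymbol{m}']})$ displayed in (\ref{eqn:f2mat}). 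For the ``outer'' block $\int_{\lambda_l}^{\infty}\cdots\int_{\lambda_2}^{\infty}(\cdot)\,\mathrm{d}\lambda_{l-1}\cdots\mathrm{d}\lambda_1$ I would apply the analogous identity $\int_{\lambda}^{\infty} t^{a-1}(1+t)^{-c}\,\mathrm{d}t=\mathrm{B}(-\lambda^{-1},c-a,1-c)$ (in the incomplete-Beta form used in the statement) one variable at a time. This is precisely the content of the recurrence (\ref{eqn:rec}): at each step $d\mapsto d-1$, a single row/column pair is consumed by cofactor expansion of the Leibniz double sum, the chosen indices are appended to $\boldsymbol{n}$ and $\boldsymbol{m}$ via $\boldsymbol{n}'\coloneqq\boldsymbol{n}\cup\{(n,[\mathcal{I}^{[d,\boldsymbol{n}]}]_n)\}$ and $\boldsymbol{m}'\coloneqq\boldsymbol{m}\cup\{(m,[\mathcal{I}^{[d,\boldsymbol{m}]}]_m)\}$, and the sign collected from $\mathrm{sgn}(\sigma)\mathrm{sgn}(\tau)$ becomes $s(\boldsymbol{n}',\boldsymbol{m}')=(-1)^{\sum_i [\pi_1(\boldsymbol{n}')]_i+[\pi_1(\boldsymbol{m}')]_i}$. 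The recursion terminates at $d=1$, where the last outer integral is evaluated in closed form and the remaining determinantal part yields the product $\prod_{i=1}^{l-1}\mathrm{B}(-\lambda_l^{-1},\cdot,\cdot)$ together with $\boldsymbol{\Xi}$ as in (\ref{eqn:f2term}).

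The main obstacle will be the combinatorial bookkeeping: verifying by induction on $d$ that the index sets $\mathcal{I}^{[d,\boldsymbol{n}]}$ and $\mathcal{I}^{[d,\boldsymbol{m}]}$ faithfully track the still-unused row/column choices of the Leibniz sum, that the accumulated signs from successive cofactor expansions collapse exactly to $s(\boldsymbol{n}',\boldsymbol{m}')$, and that the partially-evaluated integrals recombine into the $(q-l)\times(q-l)$ hypergeometric determinant of (\ref{eqn:f2mat}) at termination rather than a more complicated object. Once the recursive structure is established, the constant $K_{p_l}$ is pinned down by combining the joint-pdf normalisation $K$ with $\int_0^{+\infty}p_l(\lambda_l)\,\mathrm{d}\lambda_l=1$, and the support $\lambda_l\in[0,+\infty)$ is inherited directly from that of the joint eigenvalue density.
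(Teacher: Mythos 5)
Your route is essentially the paper's: both start from the joint eigenvalue pdf (41) of the complex $F$-matrix, expand the squared Vandermonde as a signed double sum, and reduce the $(q-1)$-fold integration of Definition \ref{def:oepdf} to one-dimensional integrals evaluating to ${}_2F_1$ functions (inner block) and incomplete Beta functions (outer block), with the index bookkeeping packaged as the recurrence (\ref{eqn:rec}). The only structural difference is that the paper does not re-derive the combinatorial integration identity from scratch; it invokes \cite[Eq.~(50)]{Zanella2009} as a black box, which directly delivers the multiple sum over $(n_1,\dots,n_l)$, $(m_1,\dots,m_l)$, the sign $s(\cdot,\cdot)$, the $(q-l)\times(q-l)$ determinant, and the product of $l-1$ decoupled outer integrals, so that all that remains is to substitute $\varphi(n,m,x)=x^{n-1}x^{m-1}\xi(x)$ with $\xi(x)=x^{m_2-q}(1+x)^{-(m_1+m_2)}$ and compute the special integrals.

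One step of your sketch, as written, would not go through: you propose to evaluate the inner block by integrating ``successively, swapping integration order so that each successive variable produces a Gaussian hypergeometric function of the next outer bound.'' If you integrate $\lambda_q$ up to $\lambda_{q-1}$ you obtain a ${}_2F_1(\cdots;-\lambda_{q-1})$, and the subsequent integral of that against $\lambda_{q-1}^{a-1}(1+\lambda_{q-1})^{-c}$ over $[0,\lambda_{q-2}]$ has no closed form of the type appearing in (\ref{eqn:f2mat}); the entries of $\boldsymbol{\Xi}$ are all functions of $\lambda_l$ alone. The decoupling of the nested bounds into \emph{independent} integrals over $[0,\lambda_l]$ (and, for the outer block, over $[\lambda_l,+\infty)$) must happen \emph{before} any one-dimensional integral is evaluated, by exploiting the symmetry of the permutation-expanded integrand over the ordered simplex (an Andr\'eief/Heine-type identity, rather than a Lindstr\"om--Gessel--Viennot argument). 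This is precisely what \cite[Eq.~(50)]{Zanella2009} encapsulates, and once you reorder your argument accordingly the rest of your plan — the cofactor-style consumption of index pairs, the sign accumulation into $s(\boldsymbol{n}',\boldsymbol{m}')$, and the final normalization of $K_{p_l}$ — matches the paper's proof.
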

	\begin{proof}
		Please refer Appendix \ref{app:f2}.
	\end{proof}
	
	\begin{remark}
		Although the expressions given in Theorem \ref{thm:f2} for $p_l(\lambda_l; m_1, m_2, q)$ are cumbersome, the resulting pdfs are simple polynomial expressions. A few examples are given in Table \ref{tab:poly}.
	\end{remark}
	
	\begin{table}
		\centering
		\caption{Expressions for the pdf in Theorem \ref{thm:f2} for various $m_1, m_2,$ and $q.$}
		\label{tab:poly}
		\begin{tabular}{|l|c||l|c|}
			\hline
			$(m_1,m_2,q)$	& $p_l(\lambda_l;m_1,m_2,q)$ & $(m_1,m_2,q)$ & $p_l(\lambda_l;m_1,m_2,q)$ \\
			\hline
			\hline
			$(3,3,1), l=1$ & $\frac{30 \lambda_1}{(1+\lambda_1)^6}$  & $(4,4,4), l=1$  & 	$\rule{0pt}{12pt}\frac{16 \lambda_1^{15}}{(\lambda_1 + 1)^{17}}$ \\\hline
			$(4,1,1), l=1$ & $\frac{4}{(1+\lambda_1)^5}$ &  $(4,4,4), l=2$  &   $\rule{0pt}{13pt}\frac{16 \lambda_2^8 (100 \lambda_2^4 + 450 \lambda_2^3 + 828 \lambda_2^2 + 700 \lambda_2 + 225)}{(\lambda_2 + 1)^{17}}$ \\\hline
			$(3,3,2), l=1$ & $\frac{12 \lambda_1^5 (5 + 3\lambda_1) }{(1+\lambda_1)^9}$ & $(4,4,4), l=3$  &   $\rule{0pt}{13pt}\frac{16 \lambda_3^3 (225 \lambda_3^4 + 700 \lambda_3^3 + 828 \lambda_3^2 + 450 \lambda_3 + 100)}{(\lambda_3 + 1)^{17}}$  \\\hline
			$(3,3,2), l=2$ & $\frac{12 \lambda_2 (5 + 3\lambda_2) }{(1+\lambda_2)^9}$ & $(4,4,4), l=4$  &	$\rule{0pt}{10pt}\frac{16}{(\lambda_4 + 1)^{17}}$ \\\hline
		\end{tabular}
		\vspace{-0.5cm}
	\end{table}
	
	The ordered eigenvalue pdf and the marginal eigenvalue pdf are related as specified in the following corollary.
	\begin{corollary}
		\label{cor:ff2}
		The marginal eigenvalue pdf given in Theorem \ref{thm:f}, $p_{\lambda(\mathfrak{F})}(\lambda; m_1, m_2, q),$ and the ordered eigenvalue pdf given in Theorem \ref{thm:f2}, $p_l(\lambda_l; m_1, m_2, q),$ are related as follows:
		\begin{align}
			p_{\lambda(\mathfrak{F})}(\lambda; m_1, m_2, q) = \frac{1}{q} \sum_{l=1}^{q} p_l(\lambda; m_1, m_2, q).
		\end{align}
	\end{corollary}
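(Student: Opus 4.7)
The plan is to establish Corollary \ref{cor:ff2} as an immediate consequence of a general identity that holds for any $q\times q$ Hermitian random matrix with an absolutely continuous joint eigenvalue distribution, namely that the marginal (unordered) eigenvalue pdf equals the arithmetic mean of the $q$ ordered eigenvalue pdfs. Once this general fact is in hand, the claim follows at once by specializing $\boldsymbol{X}\to\boldsymbol{F}$ and recognizing that $p_{\lambda(\mathfrak{F})}(\lambda;m_1,m_2,q)$ and $p_l(\lambda;m_1,m_2,q)$, supplied by Theorems \ref{thm:f} and \ref{thm:f2}, are instances of $p_{\lambda(\boldsymbol{X})}$ and $p_l$ from Definitions \ref{def:mepdf} and \ref{def:oepdf} applied to the specific matrix $\boldsymbol{F}$.

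The key step is to sum the expressions in Definition \ref{def:oepdf} over $l=1,\dots,q$ after substituting $\lambda_l = \lambda$. The nested integration limits in the $l$-th term describe the region $\{\lambda_1>\cdots>\lambda_{l-1}>\lambda>\lambda_{l+1}>\cdots>\lambda_q\geq 0\}$. Since the joint pdf $p_{\boldsymbol{\lambda}}$ is symmetric in its arguments (the eigenvalues of a Hermitian matrix form an unordered multiset), the integrand may be relabeled to place $\lambda$ in the first slot and the remaining coordinates $(\mu_1,\dots,\mu_{q-1})$ in any order. As $l$ ranges from $1$ to $q$, the corresponding regions partition the ordered simplex $\{\mu_1>\mu_2>\cdots>\mu_{q-1}\}$ according to the rank at which the fixed value $\lambda$ is inserted into the decreasing sequence. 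Combining the $q$ pieces and then using symmetry of the integrand once more to remove the remaining ordering constraint on $(\mu_1,\dots,\mu_{q-1})$ (at the cost of a factor $1/(q-1)!$ that combines with the implicit $q!$ from the ordered-to-symmetric conversion to yield $q$), one obtains
\begin{align}
\sum_{l=1}^{q} p_l(\lambda) \;=\; q \int_{0}^{+\infty}\!\!\cdots\!\int_{0}^{+\infty} p_{\boldsymbol{\lambda}}(\lambda,\mu_1,\dots,\mu_{q-1})\,\mathrm{d}\mu_{q-1}\cdots\mathrm{d}\mu_1 \;=\; q\,p_{\lambda(\boldsymbol{X})}(\lambda),
\end{align}
where the last equality invokes Definition \ref{def:mepdf}. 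Dividing both sides by $q$ yields the desired identity.

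The only delicate point is the combinatorial bookkeeping relating the conventions used implicitly by Definitions \ref{def:mepdf} and \ref{def:oepdf} for the joint pdf $p_{\boldsymbol{\lambda}}$ (ordered versus symmetric extension), since these two conventions differ by a factor of $q!$ on the ordered simplex; once this is tracked carefully, all combinatorial constants collapse to the single prefactor $q$, so the identity holds with no additional hypothesis on $\boldsymbol{X}$. I expect this normalization check to be the main (minor) obstacle; after it is resolved, the specialization $\boldsymbol{X}=\boldsymbol{F}$ with the densities from Theorems \ref{thm:f} and \ref{thm:f2} completes the proof of Corollary \ref{cor:ff2}.
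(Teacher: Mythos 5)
Your proposal is correct and follows essentially the same route as the paper, whose proof is simply the observation that the identity follows directly from Definitions \ref{def:mepdf} and \ref{def:oepdf}; what you have done is spell out the standard order-statistics argument (partitioning the ordered simplex by the rank at which $\lambda$ is inserted, then using symmetry of the joint eigenvalue pdf to collapse the combinatorial factors to the single prefactor $q$) that the paper leaves implicit. Your attention to the ordered-versus-symmetric normalization of $p_{\boldsymbol{\lambda}}$ is the right place to be careful, and your resolution of it is sound.
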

	\begin{proof}
		The proof follows directly from Definitions \ref{def:mepdf} and \ref{def:oepdf}.
	\end{proof}
	
	Next, we show that based on the ordered eigenvalue pdf for $\lambda_l$ given in Theorem \ref{thm:f2}, the expressions in (\ref{eqn:r}) can be simplified.
	
	\begin{proposition}
	\label{prop:neq}
	For UPA and $M_1+M_2\leq N,$ the ergodic rates of users 1 and 2 for the proposed UA-SD MIMO-NOMA scheme are given by
	\begin{align}
	\scalebox{0.9}{\mbox{\ensuremath{\displaystyle R_1}}} &= \scalebox{0.9}{\mbox{\ensuremath{\displaystyle \bar{M}_1\int_{0}^{+\infty} \log_2\left(1+\frac{p_1\lambda}{\Pi_1\sigma^2}\right) p_{\lambda(\mathfrak{W})}(\lambda; {M_1,\bar{M}_1}) \mathrm{d}\lambda}}}, \label{eqn:or1n1}\\
	\scalebox{0.9}{\mbox{\ensuremath{\displaystyle R_2}}} &= \scalebox{0.9}{\mbox{\ensuremath{\displaystyle \bar{M}_2\int_{0}^{+\infty} \log_2\left(1+\frac{p_2\lambda}{\Pi_2\sigma^2}\right) p_{\lambda(\mathfrak{W})}(\lambda; {M_2,\bar{M}_2}) \mathrm{d}\lambda}}}, \label{eqn:or2n1}
	\end{align}
	and for UPA and $M_1+M_2 > N,$ the ergodic rates are given by
	\begin{align}
	\scalebox{0.9}{\mbox{\ensuremath{\displaystyle R_1}}} &= \scalebox{0.9}{\mbox{\ensuremath{\displaystyle \sum_{l=1}^{M}\int_{0}^{\Pi^{-1}} \log_2\left(1+\frac{1}{\Pi_2}\frac{\frac{\lambda_l p_{1,l}}{1+\lambda_l}}{\sigma^2+\frac{1}{\Pi_2}\frac{\lambda_l p_{2,l}}{1+\lambda_l}}\right) p_l(\lambda_l; {\mu_1,\mu_2,\nu}) \mathrm{d}\lambda_l}}} \nonumber\\
	&\qquad + \scalebox{0.9}{\mbox{\ensuremath{\displaystyle \sum_{l=1}^{M} \int_{\Pi^{-1}}^{+\infty} \log_2\left(1+\frac{1}{\Pi_1}\frac{\frac{p_{1,l}}{1+\lambda_l}}{\sigma^2+\frac{1}{\Pi_1}\frac{p_{2,l}}{1+\lambda_l}}\right) p_l(\lambda_l; {\mu_1,\mu_2,\nu}) \mathrm{d}\lambda_l +  \bar{M}_1 \log_2\left(1+\frac{p_1}{\Pi_1\sigma^2}\right)}}}, \label{eqn:or1n}\\
	\scalebox{0.9}{\mbox{\ensuremath{\displaystyle R_2}}} &= \scalebox{0.9}{\mbox{\ensuremath{\displaystyle \sum_{l=1}^{M} \int_{0}^{+\infty} \log_2\left(1+\frac{1}{\Pi_2}\frac{\lambda_l p_{2,l}}{(1+\lambda_l)\sigma^2}\right) p_l(\lambda_l; {\mu_1,\mu_2,\nu}) \mathrm{d}\lambda_l}}}\nonumber\\
	&\hspace{3cm}\scalebox{0.9}{\mbox{\ensuremath{\displaystyle + \bar{M}_2\int_{0}^{+\infty} \log_2\left(1+\frac{p_2 \lambda}{\Pi_2\sigma^2}\right) p_{\lambda(\mathfrak{W})}(\lambda; {M_2,\bar{M}_2}) \mathrm{d}\lambda}}}, \label{eqn:or2n}
	\end{align}
	where $\mu_1,\mu_2,$ and $\nu$ are given in Table \ref{tab:map}, and $\Pi= \frac{\Pi_1}{\Pi_2}.$
	\end{proposition}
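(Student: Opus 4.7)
The plan is to mirror the proof of Proposition \ref{prop:eq} but with two key substitutions tailored to the symbol-dependent power allocation: (i) replace the marginal eigenvalue framework of Definition \ref{def:mepdf} by the ordered eigenvalue framework of Definition \ref{def:oepdf} wherever symbols $s_l$ with distinct powers $p_{k,l}$ are involved, and (ii) retain Definition \ref{def:mepdf} wherever a common power ($p_1$ or $p_2$) is still shared across multiple symbols. I begin by starting from the general rate expressions in (\ref{eqn:r}) and considering the two antenna regimes separately.

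For the regime $M_1+M_2 \le N$, since $M=0$, the SISO-NOMA terms vanish and only the $\bar{M}_1$ terms in $R_1$ and $\bar{M}_2$ terms in $R_2$ remain. By the UPA convention, these terms carry a common power $p_1$ and $p_2$, respectively, so the summands depend on the eigenvalues through a single function $g(\cdot)$ per user, which is the setting required by (\ref{eqn:addsep})–(\ref{eqn:mpdfe}). Invoking Proposition \ref{prop:d1d2} to identify $([\boldsymbol{D}_k]_{ll})^2$ with Wishart eigenvalues and applying (\ref{eqn:mpdfe}) together with the marginal pdf $p_{\lambda(\mathfrak{W})}(\lambda; M_k, \bar{M}_k)$ from Theorem \ref{thm:e} yields (\ref{eqn:or1n1}) and (\ref{eqn:or2n1}) directly.

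For $M_1+M_2 > N$, I split each user's rate in (\ref{eqn:r}) into the SISO-NOMA block ($l=1,\dots,M$) and the private block (the remaining $\bar{M}_1$ or $\bar{M}_2$ symbols). On the private block, user 1 sees $([\boldsymbol{D}_1]_{ll})^2=1$ by (\ref{eqn:d1}), giving the deterministic $\bar{M}_1\log_2(1+p_1/(\Pi_1\sigma^2))$ term in (\ref{eqn:or1n}); user 2's private block again shares the common power $p_2$ so Definition \ref{def:mepdf} with $p_{\lambda(\mathfrak{W})}(\lambda; M_2, \bar{M}_2)$ produces the $\bar{M}_2$ integral in (\ref{eqn:or2n}). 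For the SISO-NOMA block, I invoke Proposition \ref{prop:equif} to write $([\boldsymbol{\Sigma}_1]_{ll})^2=1/(\lambda_l+1)$ and $([\boldsymbol{\Sigma}_2]_{ll})^2=\lambda_l/(\lambda_l+1)$, substitute these into $R_{1,l}^{(1)}$, $R_{1,l}^{(2)}$, and $R_{2,l}$, and reduce (\ref{eqn:minr11r12}) to the threshold rule (\ref{eqn:r12min}), i.e., $R_{1,l}=R_{1,l}^{(2)}$ if $\lambda_l \le \Pi^{-1}$ and $R_{1,l}=R_{1,l}^{(1)}$ otherwise. Because the powers $p_{1,l}, p_{2,l}$ differ across $l$, the summands take the additively separable form $\sum_l g_l(\lambda_l)$ with $g_l$ depending on $l$, which is precisely the setting of (\ref{eqn:opdfe}). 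Applying (\ref{eqn:opdfe}) with the ordered pdf $p_l(\lambda_l;\mu_1,\mu_2,\nu)$ from Theorem \ref{thm:f2} and splitting the $R_1$ integrals at the threshold $\Pi^{-1}$ delivers (\ref{eqn:or1n}) and (\ref{eqn:or2n}).

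I expect the main obstacle to be bookkeeping: one must verify that the ordered pdf machinery of Definition \ref{def:oepdf} is indeed applicable here, since the threshold rule in (\ref{eqn:r12min}) couples the integrand to the specific ordered eigenvalue $\lambda_l$ rather than to an unordered one; this forbids using the marginal pdf as in Proposition \ref{prop:eq}. Once this indexing is respected and the contributions of the private and SISO-NOMA blocks are kept distinct, the remaining steps are direct substitutions from Propositions \ref{prop:equif} and \ref{prop:d1d2} and from Theorems \ref{thm:e} and \ref{thm:f2}, with no additional analytical work required.
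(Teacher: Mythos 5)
Your proposal is correct and follows essentially the same route as the paper, which likewise obtains the $M_1+M_2\leq N$ case from Proposition \ref{prop:d1d2} and Definition \ref{def:mepdf}, and the $M_1+M_2>N$ case from (\ref{eqn:r12min}), Propositions \ref{prop:equif} and \ref{prop:d1d2}, and Definition \ref{def:oepdf}. One small imprecision: it is the $l$-dependence of the powers $p_{1,l},p_{2,l}$ (as you correctly state earlier), not the threshold rule (\ref{eqn:r12min}) itself, that forces the ordered-eigenvalue pdfs --- Proposition \ref{prop:eq} handles the same threshold with the marginal pdf.
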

	\begin{proof}
	Analogous to the proof for Proposition \ref{prop:eq}, for UPA, (\ref{eqn:r}) can be simplified to (\ref{eqn:or1n1}) and (\ref{eqn:or2n1}) for $M_1+M_2 \leq N$ based on Proposition \ref{prop:d1d2} and Definition \ref{def:mepdf}. For $M_1+M_2 > N,$ (\ref{eqn:or1n}) and (\ref{eqn:or2n}) can be obtained based on (\ref{eqn:r12min}), Propositions \ref{prop:equif} and \ref{prop:d1d2}, and Definition \ref{def:oepdf}.
	\end{proof}

	\begin{remark}
	From Corollary \ref{cor:ff2}, we note that the proposed UA-SD MIMO-NOMA scheme with EPA could also be analyzed using the ordered eigenvalue pdfs. However, the ergodic rate expressions based on the marginal eigenvalue pdf given in (\ref{eqn:r1n}) and (\ref{eqn:r2n}) can be computed more efficiently than the corresponding expressions based on the ordered eigenvalue pdfs given in (\ref{eqn:or1n}) and (\ref{eqn:or2n}).
	\end{remark}

	\begin{remark}
	
		Due to the analytical complexity of the pdfs of the marginal and ordered eigenvalues for general $M_1, M_2,$ and $N,$ closed-form solutions for the ergodic achievable rates in Propositions \ref{prop:eq} and \ref{prop:neq} cannot be obtained. However, for given $M_1, M_2,$ and $N,$ the pdfs in Theorems \ref{thm:f} and \ref{thm:f2} can be simplified to polynomials, see Table \ref{tab:poly}. Hence, evaluating the ergodic achievable rate requires only simple one-dimensional integration. The associated computational complexity is much lower than that of evaluating the ergodic achievable rates via time-consuming Monte-Carlo simulations.
	
	\end{remark}
	\subsection{Power Allocation Algorithm}
	\label{sec:pa}
	In this section, to limit the additional complexity introduced by power allocation, we develop a long-term power allocation algorithm for maximization of the weighted ergodic sum rate, which depends only on the channel statistics.
	
	First, based on (\ref{eqn:or1n1})-(\ref{eqn:or2n}), we formulate optimization problem P1 for maximization of the weighted ergodic sum rate as follows:
	\begin{align}
		\text{ P1: } \max_{p_1,p_2,p_{k,l} \geq 0\,\forall\,k,l} \eta R_1 + (1-\eta) R_2 \quad \text{ s.t. } \quad P_T \leq P_\mathrm{max},
	\end{align}
	where $\eta \in [0,1]$ is a fixed weight which can be chosen to adjust the rates of user 1 and 2 \cite[Sec. 4]{WangGiannakis2011}, and $P_\mathrm{max}$ is the available transmit power budget. For $M_1+M_2 \leq N,$ problem P1 is convex. However, for $M_1+M_2 > N,$ problem P1 is non-convex, due to the coupling between $p_{1,l}$ and $p_{2,l}, l=1,\dots,M,$ in $R_1$ given in (\ref{eqn:or1n}). Hence, efficient convex optimization methods cannot be used for obtaining the global maximum. Instead, for $M_1+M_2 > N,$ we utilize the low-complexity concave-convex procedure (CCP) \cite{Yuille2003}, \cite{Lipp2016} to obtain a suboptimal solution.
	
	To this end, we replace $R_1$ by a concave overestimator, denoted by $\tilde{R}_1,$ obtained via a first-order approximation of the non-concave terms around $p_{2,l}=q_l,$ $l=1\dots,M,$ given (excluding constant terms) by
	\begin{align}
		\scalebox{0.8}{\mbox{\ensuremath{\displaystyle \tilde{R}_1}}} &= \scalebox{0.8}{\mbox{\ensuremath{\displaystyle \sum_{l=1}^{M}\int_{0}^{\Pi^{-1}} \log_2\left(1+\frac{\lambda_l p_{1,l}}{\Pi_2 \sigma^2(1+\lambda_l)} + \frac{\lambda_l p_{2,l}}{\Pi_2 \sigma^2 (1+\lambda_l)}\right) p_l(\lambda_l; {\mu_1,\mu_2,\nu}) \mathrm{d}\lambda_l}}} \nonumber\\ &\quad\scalebox{0.8}{\mbox{\ensuremath{\displaystyle + \sum_{l=1}^{M} \int_{\Pi^{-1}}^{+\infty} \log_2\left(1+ \frac{p_{1,l}}{\Pi_1 \sigma^2 (1+\lambda_l)} + \frac{p_{2,l}}{\Pi_1 \sigma^2(1+\lambda_l)}\right) p_l(\lambda_l; {\mu_1,\mu_2,\nu}) \mathrm{d}\lambda_l +  L_1 + \bar{M}_1 \log_2\left(1+\frac{p_1}{\Pi_1\sigma^2}\right)}}}, \label{eqn:r1a} \\
		\scalebox{0.75}{\mbox{\ensuremath{\displaystyle L_1}}} &= \scalebox{0.75}{\mbox{\ensuremath{\displaystyle -\frac{1}{\log(2)}\sum_{l=1}^{M}\left[\int_{0}^{\Pi^{-1}}\frac{\lambda_l(p_{2,l}-q_l)}{\lambda_l q_{l} + \Pi_2 \sigma^2 (1+\lambda_l)} p_l(\lambda_l; {\mu_1,\mu_2,\nu}) \mathrm{d}\lambda_l + \int_{\Pi^{-1}}^{+\infty} \frac{(p_{2,l}-q_l)}{q_{l} + \Pi_1 \sigma^2 (1+\lambda_l)} p_l(\lambda_l; {\mu_1,\mu_2,\nu}) \mathrm{d}\lambda_l\right].}}}
	\end{align}
	Next, we construct a concave optimization problem P2, based on $\tilde{R}_1,$ as follows:
	\begin{align}
		\text{P2: } \max_{p_1,p_2,p_{k,l} \geq 0\,\forall\,k,l} \eta \tilde{R}_1 + (1-\eta) R_2 \quad \text{ s.t. } \quad \quad P_T \leq P_\mathrm{max},
	\end{align}
	which is solved iteratively. First, we initialize $q_l^{(0)} = 0,$ $l=1,\dots,M.$ Next, in iteration $n=1,2,\dots,$ we solve P2 given $q_l^{(n-1)}$ to obtain the optimal solution $p_1^{(n)},p_2^{(n)},p_{k,l}^{(n)},k=1,2,l=1,\dots,M.$ Then, we update $q_l^{(n)} = p_{2,l}^{(n)},l=1,\dots,M,$ to obtain a tighter concave overestimator in the next iteration. The iterations are continued until convergence, upto a suitable numerical tolerance $\epsilon.$ The CCP is guaranteed to converge to a stationary point of P1 \cite{Yuille2003}, \cite{Lipp2016}. The proposed algorithm is summarized in Algorithm \ref{alg:e1}.

	\begin{figure}
		\begin{algorithm}[H]
			\small
			\begin{algorithmic}[1]
				\STATE {Initialize $q_l^{(0)}=0,$ $p_1^{(0)}=p_2^{(0)}=p_{k,l}^{(0)}=-\infty$ for $k=1,2,l=1,\dots,M,$ numerical tolerance $\epsilon,$ and iteration index $n=0$}
				\REPEAT	
				\STATE {$n \leftarrow n + 1$}
				\STATE {Solve P2 given $q_l^{(n-1)},l=1,\dots,M,$ to obtain the solution $p_1^{(n)},p_2^{(n)},p_{k,l}^{(n)}\,\forall\,k,l$ \label{step:alg1_2}}
				\STATE {Update $q_l^{(n)}=p_{2,l}^{(n)},l=1,\dots,M,$ for the next iteration \label{step:alg1_3}}
				\UNTIL {$|p_1^{(n)} - p_1^{(n-1)}| < \epsilon,$ $|p_2^{(n)} - p_2^{(n-1)}| < \epsilon,$ and $|p_{k,l}^{(n)} - p_{k,l}^{(n-1)}| < \epsilon\,\forall\,k,l$}
				\STATE{Return $p_1^{(n)},p_2^{(n)},p_{k,l}^{(n)},k=1,2,l=1,\dots,M,$ as the power allocation}
			\end{algorithmic}
			\caption{Power Allocation Algorithm for UPA and $M_1+M_2 > N.$}
			\label{alg:e1}
		\end{algorithm}
		\vspace{-1cm}
	\end{figure}

	\begin{remark}
	
		The computational complexity of Algorithm \ref{alg:e1} for a given number of iterations, $N_\mathrm{iter},$ can be obtained as follows. In Step \ref{step:alg1_2}, P2, which involves $2L$ real-valued optimization variables and an integral in the objective function, is solved. Let $S$ be the number of points utilized for evaluating the integral. Then, the computational complexity for solving P2 via an exponential-cone solver with numerical tolerance $\delta$ is given by $\mathcal{O}\mkern-\medmuskip\left(SL\sqrt{L}\log(1/\delta)\right)$ \cite{Serrano}. Furthermore, Step \ref{step:alg1_3} entails a complexity of $\mathcal{O}\mkern-\medmuskip\left(L\right)$ for updating the gradient, leading to a total complexity of Algorithm \ref{alg:e1} of  $\mathcal{O}\mkern-\medmuskip\left(N_\mathrm{iter}\left(L+SL\sqrt{L}\log(1/\delta)\right)\right).$
	
	\end{remark}

	\section{Simulation Results}
	\label{sec:sim}
	In this section, we first verify the pdf expressions in Theorems \ref{thm:f} and \ref{thm:f2}. Then, we compare the ergodic achievable rate regions of the proposed UA-SD MIMO-NOMA for EPA and UPA with those of GSVD-based MIMO-NOMA \cite{Chen2019} and TDMA-based MIMO-OMA.
	
	\subsection{Probability Density Functions}
	Figure \ref{fig:g} compares the marginal eigenvalue pdf, $p_{\lambda(\mathfrak{F})}(\lambda; m_1, m_2, q),$ derived in Theorem \ref{thm:f} for $m_1=5,m_2=4,$ and $q=2,$ with the empirical pdf obtained via Monte Carlo simulation and the approximation based on the asymptotic pdf given in (\ref{eqn:p_asympt}). From the figure, we note that the derived analytical result is in perfect agreement with the numerical simulation. However, for the considered small finite values of $m_1,m_2,$ and $q,$ the approximation based on the asymptotic pdf does not result in an accurate fit.
	
	Analogously, Figure \ref{fig:gl} compares the ordered eigenvalue pdfs, $p_l(\lambda_l; m_1, m_2, n),$ $l=1,\dots,4,$ derived in Theorem \ref{thm:f2} for $m_1=7,m_2=8,$ and $q=4,$ with empirical pdfs obtained via Monte Carlo simulation. We observe also in this case that the derived analytical results are in excellent agreement with the numerical simulation, thereby validating our derived expressions.
	
	\begin{figure}
	\centering
	\begin{minipage}[t]{0.48\textwidth}
		\centering
		\includegraphics[width=0.9\textwidth]{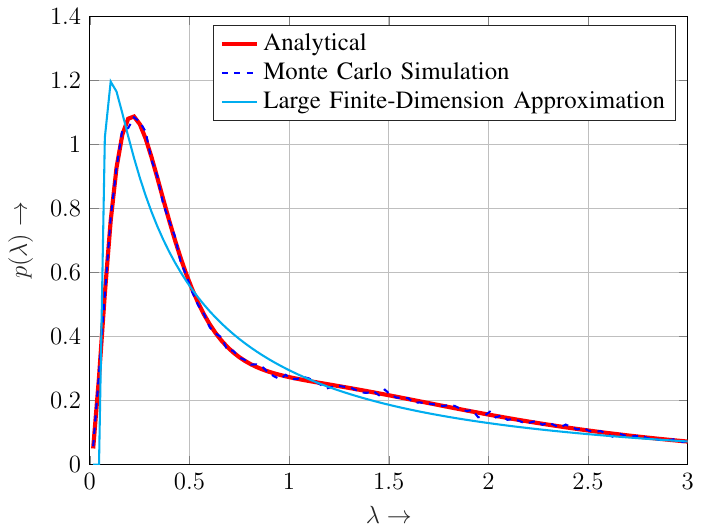}
		\caption{Simulation of the pdf given in Theorem \ref{thm:f} for $m_1=5,m_2=4,$ and $q=2.$}
		\label{fig:g}
	\end{minipage}%
	\hspace{0.04\textwidth}%
	\begin{minipage}[t]{0.48\textwidth}
		\includegraphics[width=0.9\textwidth]{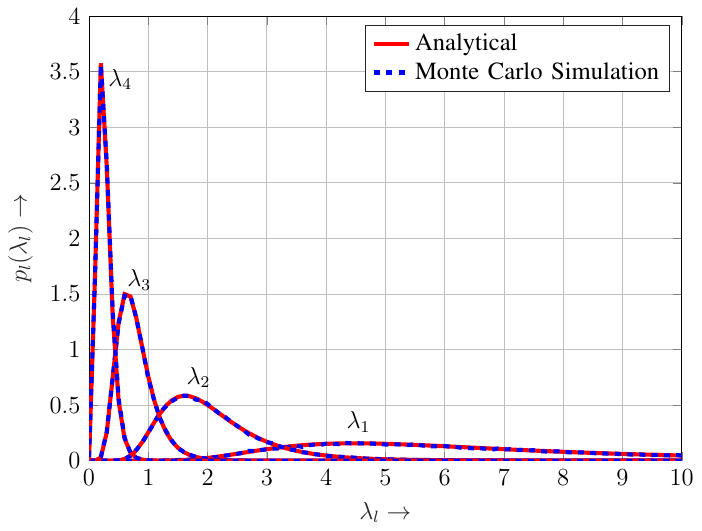}
		\caption{Simulation of the pdf given in Theorem \ref{thm:f2} for $m_1=7,m_2=8,$ and $q=4.$}
		\label{fig:gl}
	\end{minipage}
	\end{figure}
	
	\subsection{Ergodic Rate Regions}
	For evaluation of the ergodic rate regions, we assume that the first and second users are located at distances of $d_1 = 100~\text{m}$ and $d_2 = 10~\text{m}$ from the BS, respectively, as (MIMO-)NOMA is well suited for users with different channel conditions \cite{Saito2013}. The path loss is modeled as $\Pi_k = d_k^2,$ which corresponds to the free-space path-loss exponent of two \cite{Rappaport}. Furthermore, we assume a noise variance of $\sigma^2 = -35~\text{dBm}$ and set $P_T = P_\mathrm{max}$ for EPA. The maximum transmit power, $P_\mathrm{max},$ is chosen in the typical range of 4G systems, i.e., $10 \text{--} 30 \text{ dBm.}$ Moreover, the ergodic rate region for TDMA-based MIMO-OMA is obtained by time sharing of the SU-MIMO rates of the users. Lastly, for the performance results shown below, different values of $M_1, M_2,$ and $N$ are chosen to allow for a comprehensive comparison of UA-SD MIMO-NOMA with TDMA-based MIMO-OMA and GSVD-based MIMO-NOMA.
	
	Figure \ref{fig:335} shows the convex hull of the ergodic rate regions of the proposed UA-SD MIMO-NOMA for EPA and UPA, GSVD-based MIMO-NOMA \cite{Chen2019}, and TDMA-based MIMO-OMA for the case $\bar{M}_2 > 0$ with $M_1 = M_2 = 3, N = 5,$ and $P_\mathrm{max} = 20 \text{ dBm.}$ For EPA, the rate region is obtained by varying $P_1$ and $P_2$ in (\ref{eqn:r1n}) and (\ref{eqn:r2n}). For UPA, the rate region is obtained by solving P1 using Algorithm \ref{alg:e1} for different $\eta \in [0,1].$ For GSVD-based MIMO-NOMA, the rate region is obtained based on the asymptotic RMT expressions from \cite{Chen2019} which are modified based on our finite-size RMT framework.
	
	First, we note that, in Figure \ref{fig:335} (and in all subsequent figures), the ergodic rate regions obtained analytically via our finite-size RMT results and empirically via Monte Carlo simulation are in perfect agreement, thereby confirming the validity of our theoretical results.
		
	Furthermore, from Figure \ref{fig:335}, we observe that the proposed UA-SD MIMO-NOMA outperforms GSVD-based MIMO-NOMA for both EPA and UPA as channel inversion at the BS for $\bar{M}_2=2$ symbols, $s_{2,l},l=4,5,$ of user 2 is avoided. UA-SD MIMO-NOMA also outperforms TDMA-based MIMO-OMA for a wide range of user rates. However, TDMA-based MIMO-OMA is superior for the rate pairs close to the SU-MIMO rates due to the sustained channel inversion in the proposed scheme for the first $M+\bar{M}_1$ symbols, $s_1$ and $s_{1,l},l=2,3.$ Nevertheless, the proposed hybrid scheme, whose ergodic rate region is the convex hull of the union of the ergodic rate regions of UA-SD MIMO-NOMA and TDMA-based MIMO-OMA, as described in Section \ref{sec:hybrid}, outperforms TDMA-based MIMO-OMA for all possible rate pairs. Moreover, as expected, for the proposed UA-SD MIMO-NOMA, UPA outperforms EPA as UPA performs optimal power allocation. Lastly, the ergodic rate region obtained based on the asymptotic pdfs in (\ref{eqn:p_asympt}) and (\ref{eqn:w_asympt}) accurately approximates the ergodic rate region of the proposed UA-SD MIMO-NOMA with EPA.
	
	\begin{figure}
	\begin{minipage}[t]{0.48\textwidth}
		\centering
		\includegraphics[width=0.9\textwidth]{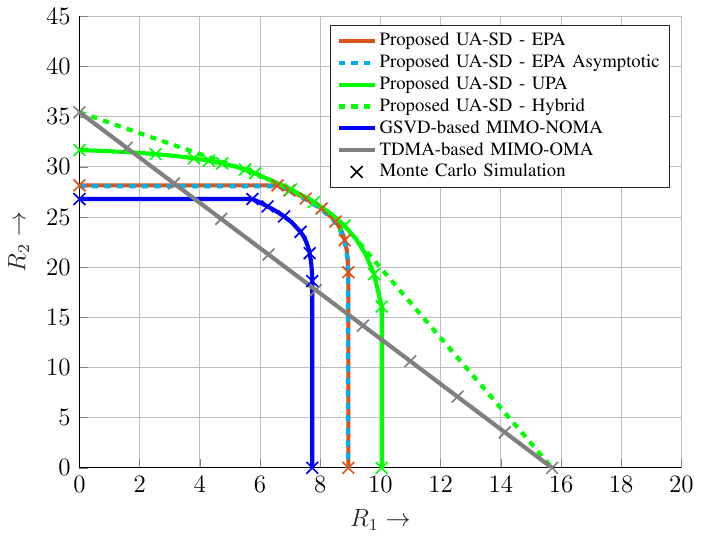}
		\caption{Ergodic rate region for $M_1=3, M_2=3, N=5,$ and $P_\mathrm{max} = 20 \text{ dBm}.$}
		\label{fig:335}
	\end{minipage}%
	\hspace{0.04\textwidth}%
	\begin{minipage}[t]{0.48\textwidth}
		\centering
		\includegraphics[width=0.9\textwidth]{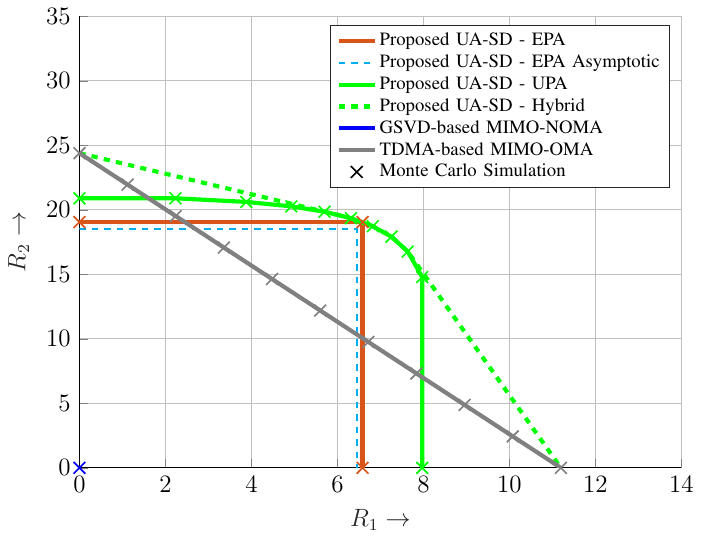}
		\caption{Ergodic rate region for $M_1=2, M_2=2, N=4,$ and $P_\mathrm{max} = 20 \text{ dBm}.$}
		\label{fig:224}
		\vspace{0.5cm}
	\end{minipage}

	\begin{minipage}[t]{0.48\textwidth}
		\centering
		\includegraphics[width=0.9\textwidth]{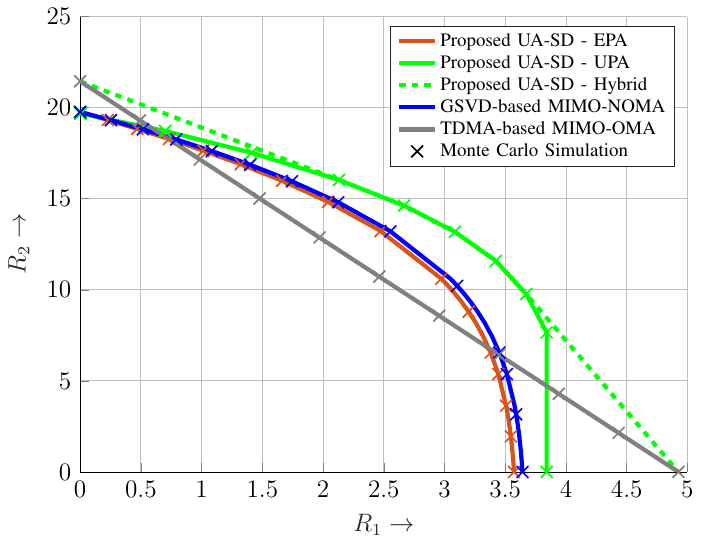}
		\caption{Ergodic rate region for $M_1=3, M_2=3, N=3,$ and $P_\mathrm{max} = 10 \text{ dBm}.$}
		\label{fig:333}
	\end{minipage}%
	\hspace{0.04\textwidth}%
	\begin{minipage}[t]{0.48\textwidth}
		\centering
		\includegraphics[width=0.9\textwidth]{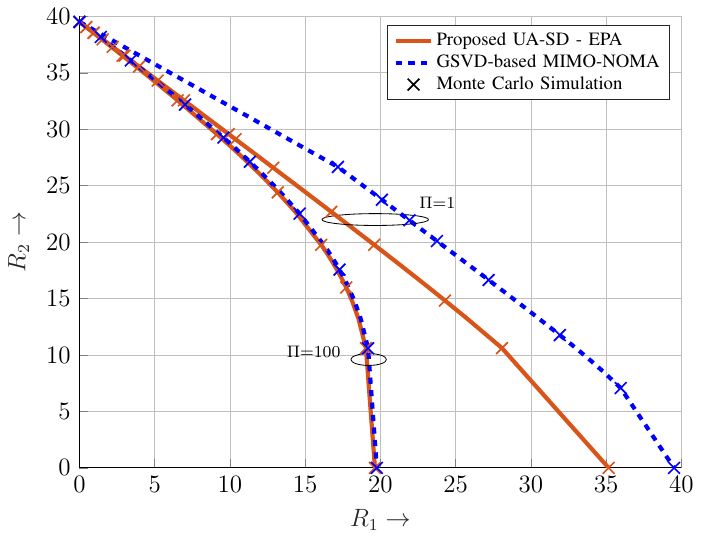}
		\caption{Ergodic rate region for $M_1=3, M_2=3, N=3,$ $P_\mathrm{max} = 30 \text{ dBm},$ and $\Pi = 1 \text{ and } 100.$}
		\label{fig:d333}
		\vspace{0.5cm}
	\end{minipage}

	\begin{minipage}[t]{0.48\textwidth}
		\centering
		\includegraphics[width=0.9\textwidth]{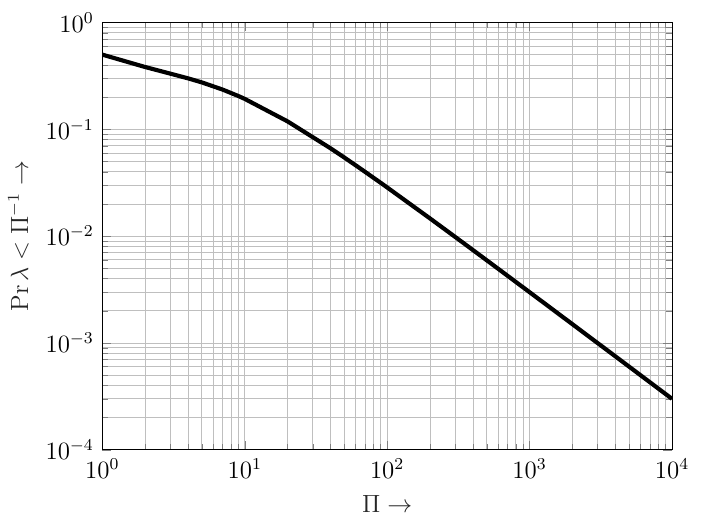}
		\caption{Plot of $\mathrm{Pr}\left\{\lambda < \Pi^{-1}\right\}$ for $M_1=3,$ $M_2=3,$ and $N=3$ as a function of $\Pi.$}
		\label{fig:p_degraded}
	\end{minipage}%
	\hspace{0.04\textwidth}%
	\begin{minipage}[t]{0.48\textwidth}
		\centering
		\includegraphics[width=0.9\textwidth]{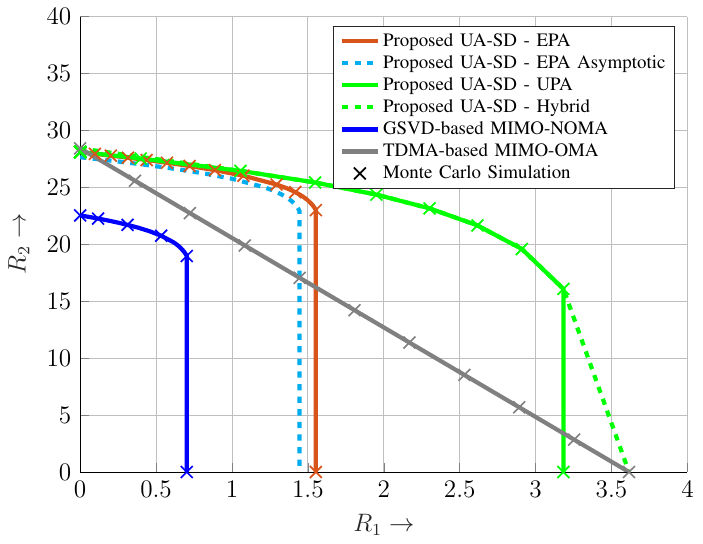}
		\caption{Ergodic rate region for $M_1=1, M_2=4,N=4,$ and $P_\mathrm{max} = 10 \text{ dBm}.$}
		\label{fig:144}
	\end{minipage}
	\end{figure}
	
	Figure \ref{fig:224} shows the convex hull of the ergodic rate regions of the same schemes as considered in Figure \ref{fig:335} for the case $M_1+M_2=N$ with $M_1 = M_2 = 2,N = 4,$ and $P_\mathrm{max} = 20 \text{ dBm.}$ As no symbols $s_l$ are transmitted to both users, for EPA, no rate adjustment between the users is possible. Furthermore, as $M_1+M_2=N,$ the achievable rate for GSVD-based MIMO-NOMA is zero as explained in Section \ref{sec:existing}. On the other hand, the proposed UA-SD MIMO-NOMA, which reduces to BD (i.e., spatial OMA) in this case, significantly outperforms TDMA-based MIMO-OMA for a wide range of user rates, and the proposed hybrid scheme outperforms TDMA-based MIMO-OMA over the entire rate region. Moreover, unlike EPA, UPA allows the adjustment of the user rates. Lastly, for EPA, the ergodic rate region obtained based on the asymptotic pdfs approximates the ergodic rate region of the proposed UA-SD MIMO-NOMA upto a small gap.
	
	Figure \ref{fig:333} shows the ergodic rate region for the case $\bar{M}_2 = 0$ with $M_1 = M_2 = N = 3,$ and $P_\mathrm{max} = 10 \text{ dBm},$ where all spatial streams are shared by the two users via SISO-NOMA. In this case, the proposed UA-SD MIMO-NOMA with UPA outperforms GSVD-based MIMO-NOMA. Furthermore, the proposed scheme with UPA outperforms TDMA-based MIMO-OMA for a wide range of user rates, and the proposed hybrid scheme outperforms TDMA-based MIMO-OMA over the entire rate region. Lastly, for the considered case, GSVD-based MIMO-NOMA yields a marginally larger rate region than UA-SD MIMO-NOMA with EPA, which restricts SIC to user 2 limiting the achievable ergodic rate of user 1, $R_{1,l},l=1,\dots,M,$ to the inferior rate at user 2, $R_{1,l}^{(2)},l=1,\dots,M,$ cf. Section \ref{sec:epa}. In the following, this effect is investigated in detail.

	Figure \ref{fig:d333} compares the rate regions of the proposed UA-SD MIMO-NOMA with EPA, which restricts SIC to user 2, and GSVD-based MIMO-NOMA, which allows SIC at users 1 and 2, cf. Section \ref{sec:decoding}, for $P_\mathrm{max} = 30 \text{ dBm},$ and $\Pi=1$ and $100.$ 	From the figure, we observe that the significant performance loss for $\Pi=1$ reduces to a negligible performance loss for $\Pi=100,$ thereby suggesting that for large $\Pi,$ the potential performance benefit from allowing SIC at users 1 and 2, as in GSVD-based MIMO-NOMA, is insignificant. Furthermore, as $\Pi$ increases, the potential performance gain decreases rapidly as it is proportional to $\mathrm{Pr}\left\{\lambda < \Pi^{-1}\right\},$ cf. Section \ref{sec:epa}, which also falls rapidly as $\Pi$ increases, as shown in Figure \ref{fig:p_degraded}.

	Figure \ref{fig:144} shows the convex hull of the ergodic rate region for the case $M_1 \neq M_2, \bar{M}_2 > 0,$ with $M_1 = 1,$ $M_2 = N = 4,$ and $P_\mathrm{max} = 10 \text{ dBm}.$ In this case, we observe that the proposed UA-SD MIMO-NOMA with EPA and UPA outperform GSVD-based MIMO-NOMA as they avoid channel inversion, cf. Section \ref{sec:gsvd}. In fact, even TDMA-based MIMO-OMA outperforms GSVD-based MIMO-NOMA as, for the considered case, $|M_1+M_2-N| = 1$ is small, leading to a poor performance of GSVD-based MIMO-NOMA, see Proposition \ref{prop:tgsvd}. Furthermore, for the proposed UA-SD MIMO-NOMA, UPA significantly outperforms EPA. Moreover, for EPA, the rate region obtained based on the asymptotic pdf has a small gap to the exact result as $M_1=1$ is small.

	\section{Conclusion}
	\label{sec:con}
	We proposed a UA-SD MIMO-NOMA scheme based on a new matrix decomposition that achieves SD through a combination of precoder design and low-complexity self-interference cancellation at the users, thereby lowering the decoding complexity at the users compared to joint decoding. Furthermore, we derived ergodic rate expressions for the proposed scheme for EPA and UPA based on a finite-size RMT framework. We exploited the derived ergodic rate expressions to develop a long-term power allocation algorithm for the proposed scheme which only depends on the channel statistics. Our performance comparisons based on the ergodic achievable rate regions revealed that the proposed scheme with EPA and UPA outperforms GSVD-based MIMO-NOMA \cite{Chen2019} and TDMA-based MIMO-OMA for most user rate combinations by avoiding channel inversion at the transmitter. Furthermore, a hybrid scheme employing time sharing enhanced the performance even further. Lastly, as in SISO-NOMA \cite{Saito2013}, the benefits of the proposed scheme are fully exploited when the ratio of the path loss coefficients of the users is large.
		
	Finally, we note that the pdfs derived in Theorems \ref{thm:f}, \ref{thm:e}, and \ref{thm:f2} can be exploited for the analysis of the diversity order, outage capacity, bit error rates, and outage probabilities of UA-SD MIMO-NOMA, which constitutes an interesting topic for future research.

	\begin{appendices}
	\renewcommand{\thesection}{\Alph{section}}
	\renewcommand{\thesubsection}{\thesection.\arabic{subsection}}
	\renewcommand{\thesectiondis}[2]{\Alph{section}:}
	\renewcommand{\thesubsectiondis}{\thesection.\arabic{subsection}:}
	\section{Proofs}
	\label{app:proofs}
	
	\subsection{Proof of Proposition \ref{prop:tgsvd}}
	\label{app:tgsvd}
	For $M_1 + M_2 > N,$ from (\ref{eqn:gsvd}), noting that, in this case, $\boldsymbol{Z}$ is invertible, we have
	\begin{align}
		\scalebox{1}{\mbox{\ensuremath{\displaystyle \boldsymbol{H}_1^\mathrm{H} \boldsymbol{H}_1 + \boldsymbol{H}_2^\mathrm{H} \boldsymbol{H}_2 = (\boldsymbol{Z}^{-1})^\mathrm{H} \left(\boldsymbol{C}^\mathrm{H} \boldsymbol{C} + \boldsymbol{S}^\mathrm{H} \boldsymbol{S}\right) \boldsymbol{Z}^{-1} = (\boldsymbol{Z}^{-1})^\mathrm{H} \boldsymbol{Z}^{-1} = (\boldsymbol{Z} \boldsymbol{Z}^\mathrm{H})^{-1}.}}}
	\end{align}
	Hence, (\ref{eqn:eppleq1}) can be simplified to
	\begin{align}
		\scalebox{0.8}{\mbox{\ensuremath{\displaystyle P_T}}} &\overset{(a)}{=} \scalebox{0.8}{\mbox{\ensuremath{\displaystyle P\mathrm{tr}\left(\mathrm{E}\left[\boldsymbol{Z} \boldsymbol{Z}^\mathrm{H}\right]\right) = P\mathrm{tr}\left(\mathrm{E}\left[\underbrace{\left(\boldsymbol{H}_1^\mathrm{H} \boldsymbol{H}_1 + \boldsymbol{H}_2^\mathrm{H} \boldsymbol{H}_2\right)^{-1}}_{\coloneqq \hat{\boldsymbol{W}}}\right]\right)}}} = \scalebox{0.8}{\mbox{\ensuremath{\displaystyle P\mathrm{tr}\left(\mathrm{E}\left[\hat{\boldsymbol{W}}\right]\right)}}},
	\end{align}
	where (a) is obtained by setting $\boldsymbol{P} = \boldsymbol{Z}$ in (\ref{eqn:eppleq1}), noting that $p_{1,l}+p_{2,l}=P,l=1,\dots,L,$ and exploiting $\mathrm{E}\left[\mathrm{tr}\left(\boldsymbol{Z}^\mathrm{H} \boldsymbol{Z}\right)\right] = \mathrm{tr}\left(\mathrm{E}\left[\boldsymbol{Z} \boldsymbol{Z}^\mathrm{H}\right]\right).$ Matrix $\hat{\boldsymbol{W}} \sim \mathcal{CW}_{N}(M_1+M_2,\boldsymbol{I}_{N})$ \cite[Thm. 3.3.8]{Nagar2011}. By applying \cite[Lemma 3.2 (ii)]{Nagar2011}, we obtain $P_T = \frac{PL}{N-(M_1+M_2)}.$ Proceeding analogously for $M_1+M_2\leq N$ and combining the results, we obtain the expression in (\ref{eqn:pgsvd}). \qed

	\subsection{Proof of Theorem \ref{thm:mgsvd}}
	\label{app:mgsvd}
	For $M_1+M_2 \leq N,$ matrices $\boldsymbol{Q}_1, \boldsymbol{Q}_2,$ and $\boldsymbol{Z}$ are obtained using BD as described in Section \ref{sec:bd}, \cite[Sec. III]{Spencer2004}.
	
	For $M_1+M_2 > N,$ we have $L=N.$ Let $\boldsymbol{K} \in \mathbb{C}^{N\times M}$ denote a matrix containing the basis vectors of the $M$ dimensional space $\mathrm{null}\left(\begin{bmatrix}\bar{\boldsymbol{H}}_1 &\bar{\boldsymbol{H}}_2\end{bmatrix}^\mathrm{H}\right) \cap \mathrm{col}\big(\boldsymbol{H}_1^\mathrm{H}\big) \cap \mathrm{col}\big(\boldsymbol{H}_2^\mathrm{H}\big)$, which exists when $M > 0.$ For the first user, matrix $\tilde{\boldsymbol{H}}_1 = \boldsymbol{H}_1\begin{bmatrix}\boldsymbol{K} & \bar{\boldsymbol{H}}_2\end{bmatrix}$ is forced to the identity matrix by using SVD. Let $\tilde{\boldsymbol{H}}_1 = \hat{\boldsymbol{U}}_1\boldsymbol{\Sigma}_1\hat{\boldsymbol{V}}_1^\mathrm{H},$ where $\hat{\boldsymbol{U}}_1 \in \mathbb{C}^{M_1\times M_1}$ and $\hat{\boldsymbol{V}}_1 \in \mathbb{C}^{M_1\times M_1}$ are unitary matrices, and $\boldsymbol{\Sigma}_1 \in \mathbb{R}^{M_1\times M_1}$ is a diagonal matrix which has the singular values of $\tilde{\boldsymbol{H}}_1$ on its main diagonal. Note that $\hat{\boldsymbol{U}}_1^\mathrm{H} \tilde{\boldsymbol{H}}_1 (\hat{\boldsymbol{V}}_1\boldsymbol{\Sigma}_1^+) = \boldsymbol{I}_{M_1}.$
	
	Next, for the second user, matrix $\hat{\boldsymbol{H}}_2 = \boldsymbol{H}_2\bar{\boldsymbol{H}}_1$ is diagonalized using SVD. Let	$\hat{\boldsymbol{H}}_2 = \hat{\boldsymbol{U}}_2\boldsymbol{\Sigma}_2\hat{\boldsymbol{V}}_2^\mathrm{H},$ where $\hat{\boldsymbol{U}}_2 \in \mathbb{C}^{M_2\times M_2}$ and $\hat{\boldsymbol{V}}_2 \in \mathbb{C}^{\bar{M}_2\times \bar{M}_2}$ are unitary matrices, and $\boldsymbol{\Sigma}_2 \in \mathbb{R}^{M_2\times \bar{M}_2}$ is a diagonal matrix which has the singular values of $\hat{\boldsymbol{H}}_2$ on its main diagonal. As a result, $\hat{\boldsymbol{U}}_2^\mathrm{H}\hat{\boldsymbol{H}}_2\hat{\boldsymbol{V}}_2 = \boldsymbol{\Sigma}_2$ is a diagonal matrix.
	
	Furthermore, for the second user, matrix 
	\begin{align}
		\tilde{\boldsymbol{H}}_2 = \hat{\boldsymbol{U}}_2^\mathrm{H} \left(\boldsymbol{H}_2 \begin{bmatrix}\boldsymbol{K} & \bar{\boldsymbol{H}}_2\end{bmatrix}\right) (\hat{\boldsymbol{V}}_1\boldsymbol{\Sigma}_1^+) \label{eqn:th2}
	\end{align}
	is diagonalized in two steps. First, QR decomposition is used to zero-out the last $\bar{M}_1$ columns of $\tilde{\boldsymbol{H}}_2.$ Next, SVD is used to diagonalize the remaining columns.
	
	Let, by QR decomposition, $\tilde{\boldsymbol{H}}_2^\mathrm{H}  = \boldsymbol{Q}\boldsymbol{R},$ where $\boldsymbol{Q} \in \mathbb{C}^{M_1\times M_1}$ is a unitary matrix, and $\boldsymbol{R} \in \mathbb{C}^{M_1\times M_2}$ is a rank $M$ upper triangular matrix. If $M$ is zero, then $\boldsymbol{Q} = \boldsymbol{I}_{M_1}$ and $\boldsymbol{R} = \boldsymbol{0}.$ In either case, $\tilde{\boldsymbol{H}}_2\boldsymbol{Q}$ is a matrix with the last $\bar{M}_1$ columns equal to zero.
	
	Lastly, the $(M_1+M_2-N)\times M$ matrix obtained by taking the last $M_1+M_2-N$ rows and the first $M$ columns of $\tilde{\boldsymbol{H}}_2\boldsymbol{Q},$ denoted by $\boldsymbol{B}_3,$ is diagonalized applying again SVD. Let
	\begin{align}
		\boldsymbol{B}_3 = \hat{\boldsymbol{U}}_3\boldsymbol{\Sigma}\hat{\boldsymbol{V}}_3^\mathrm{H}, \label{eqn:b3}
	\end{align}
	where $\hat{\boldsymbol{U}}_3 \in \mathbb{C}^{M\times M}$ and $\hat{\boldsymbol{V}}_3 \in \mathbb{C}^{M\times M}$ are unitary matrices, and $\boldsymbol{\Sigma} \in \mathbb{R}^{M\times M}$ is a diagonal matrix which has the singular values of $\boldsymbol{B}_3$ on its main diagonal. The diagonal entries of $\boldsymbol{\Sigma}$ are the $M$ GSVs of $\boldsymbol{H}_2$ and $\boldsymbol{H}_1$ as shown in Appendix \ref{app:equif}.
	
	Using the intermediate results from above, we obtain
	\begin{align}
		\scalebox{0.8}{\mbox{\ensuremath{\displaystyle \boldsymbol{Z}}}} &= \scalebox{0.8}{\mbox{\ensuremath{\displaystyle \begin{bmatrix} \begin{bmatrix}\boldsymbol{K} & \bar{\boldsymbol{H}}_2\end{bmatrix}\hat{\boldsymbol{V}}_1\boldsymbol{\Sigma}_1^+\boldsymbol{Q}\begin{bmatrix}\hat{\boldsymbol{V}}_3 & \boldsymbol{0} \\ \boldsymbol{0} & \boldsymbol{I}_{\bar{M}_1}\end{bmatrix}\begin{bmatrix} 
		\left(\boldsymbol{I}_{M} + \boldsymbol{\Sigma}\right)^{-\frac{1}{2}} & \boldsymbol{0}\\ \boldsymbol{0} & \boldsymbol{I}_{\bar{M}_1}
		\end{bmatrix} & \frac{1}{\sqrt{\mathstrut\bar{M}_2}}\bar{\boldsymbol{H}}_1\hat{\boldsymbol{V}}_2\end{bmatrix},}}} \label{eqn:x}\\
		\scalebox{1}{\mbox{\ensuremath{\displaystyle \boldsymbol{Q}_1}}} &\scalebox{1}{\mbox{\ensuremath{\displaystyle = \begin{bmatrix}\hat{\boldsymbol{V}}_3^\mathrm{H} & \boldsymbol{0} \\ \boldsymbol{0} & \boldsymbol{I}_{\bar{M}_1}\end{bmatrix} \boldsymbol{Q}^\mathrm{H} \hat{\boldsymbol{U}}_1^\mathrm{H},}}} \qquad 
		\boldsymbol{Q}_2 = \begin{bmatrix}\boldsymbol{I}_{\bar{M}_2} & \boldsymbol{0}\\ \boldsymbol{0} & \hat{\boldsymbol{U}}_3^\mathrm{H}\end{bmatrix}\hat{\boldsymbol{U}}_2^\mathrm{H}.
	\end{align}
	Substituting the above matrices into $\boldsymbol{Q}_1\boldsymbol{H}_1\boldsymbol{Z}$ and $\boldsymbol{Q}_2\boldsymbol{H}_2\boldsymbol{Z},$ we obtain
	\begin{align}
		\scalebox{0.8}{\mbox{\ensuremath{\displaystyle \boldsymbol{Q}_1\boldsymbol{H}_1\boldsymbol{Z} = \begin{bmatrix}
		\underbrace{\left(\boldsymbol{I}_{M} + \boldsymbol{\Sigma}\right)^{-\frac{1}{2}}}_{\coloneqq\boldsymbol{\Sigma}_1} & \boldsymbol{0} & \boldsymbol{0} \\
		\boldsymbol{0} & \underbrace{\boldsymbol{I}_{\bar{M}_1}}_{\coloneqq\boldsymbol{D}_1} & \boldsymbol{0}
		\end{bmatrix}, \qquad \boldsymbol{Q}_2\boldsymbol{H}_2\boldsymbol{Z} = \begin{bmatrix}
		\underbrace{\boldsymbol{A}_3 \hat{\boldsymbol{V}}_3^\mathrm{H}\left(\boldsymbol{I}_{M} + \boldsymbol{\Sigma}\right)^{-\frac{1}{2}}}_{\coloneqq\boldsymbol{T}} & \boldsymbol{0} & \underbrace{\frac{1}{\sqrt{\mathstrut\bar{M}_2}}\hat{\boldsymbol{\Sigma}}_2}_{\coloneqq\boldsymbol{D}_2} \\
		\underbrace{\boldsymbol{\Sigma}\left(\boldsymbol{I}_{M} + \boldsymbol{\Sigma}\right)^{-\frac{1}{2}}}_{\coloneqq\boldsymbol{\Sigma}_2} & \boldsymbol{0} & \boldsymbol{0}
		\end{bmatrix},}}} \label{eqn:qhz}
	\end{align}
	where $\boldsymbol{A}_3 \in  \mathbb{C}^{\bar{M}_2\times M}$ contains the first $\bar{M}_2$ rows and the first $M$ columns of $\tilde{\boldsymbol{H}}_2\boldsymbol{Q},$ and $\hat{\boldsymbol{\Sigma}}_2 \in \mathbb{R}^{\bar{M}_2\times \bar{M}_2}$ contains the first $\bar{M}_2$ rows of $\boldsymbol{\Sigma}_2,$ which leads to (\ref{eqn:mgsvd}). \qed

	\subsection{Proof of Proposition \ref{prop:equif}}
	\label{app:equif}
	In the following, we consider the case $M_1,M_2 \geq N.$ The proofs for the remaining cases specified in Table \ref{tab:map} follow analogously.
	
	From Appendix \ref{app:mgsvd}, we note that $\boldsymbol{\Sigma}$ contains the singular values of matrix $\boldsymbol{B}_3$ in (\ref{eqn:b3}). For $M_1,M_2 \geq N,$ we have $\bar{\boldsymbol{H}}_1 = \bar{\boldsymbol{H}}_2 = \{\}.$ As the non-zero singular values of a matrix are unaffected by multiplication with unitary matrices and the conjugate transpose operation, the non-zero singular values of $\boldsymbol{B}_3,$ denoted by $\sigma\left(\boldsymbol{B}_3\right),$ can be simplified to
	\begin{align}
		\sigma\left(\boldsymbol{B}_3\right) = \sigma\left(\boldsymbol{Q}\boldsymbol{R}\right) = \sigma\left(\tilde{\boldsymbol{H}}_2\right) \overset{(a)}{=} \sigma\left(\boldsymbol{H}_2\boldsymbol{K} \hat{\boldsymbol{V}}_1\boldsymbol{\Sigma}_1^+\right) \overset{(b)}{=} \sigma\left(\boldsymbol{H}_2\boldsymbol{H}_1^+\right),
	\end{align}
	where (a) is obtained from (\ref{eqn:th2}) and (b) follows by noting that $\hat{\boldsymbol{V}}_1\boldsymbol{\Sigma}_1^+\hat{\boldsymbol{U}}_1^\mathrm{H} = \boldsymbol{H}_1^+$ and $\boldsymbol{K} = \boldsymbol{I}_N.$ Hence, the non-zero singular values of $\boldsymbol{B}_3$ are the same as those of matrix $\boldsymbol{H}_2 \boldsymbol{H}_1^+,$ which are the GSVs of $\boldsymbol{H}_2$ and $\boldsymbol{H}_1$ as they are solutions $\mu$ to $\mathrm{det}\left(\mu^2 \boldsymbol{I}_N - \boldsymbol{H}_2 (\boldsymbol{H}_1^H \boldsymbol{H}_1)^{-1} \boldsymbol{H}_2\right) = 0,$ thereby proving the assertion in Appendix \ref{app:mgsvd}. Furthermore, the squares of the singular values of $\boldsymbol{B}_3$ have the same distribution as the eigenvalues of the F-distributed matrix \cite{Perlman1977, Forrester2014} $\boldsymbol{F} = (\boldsymbol{W}_2)^{\frac{1}{2}} \boldsymbol{W}_1^{-1} (\boldsymbol{W}_2)^{\frac{1}{2}},$ where $\boldsymbol{W}_2 \sim \mathcal{CW}_{N}(M_2,\boldsymbol{I}_{M})$ and $\boldsymbol{W}_1 \sim \mathcal{CW}_{N}(M_1, \boldsymbol{I}_{M})$ are independent Wishart distributed matrices \cite[Thm. 3.2.4, Thm. 3.3.10, and Thm. 3.4.2]{Gupta1999}, \cite[Sec. 2]{Paige1981}, which completes the proof for the case $M_1,M_2 \geq N.$
	
	Lastly, as seen from (\ref{eqn:qhz}), matrices $\boldsymbol{\Sigma}_1^2$ and $\boldsymbol{\Sigma}_2^2$ are constructed as
	$\boldsymbol{\Sigma}_1^2 = (\boldsymbol{I}_M + \boldsymbol{\Sigma})^{-1}$ and $\boldsymbol{\Sigma}_2^2 = \boldsymbol{\Sigma}(\boldsymbol{I}_M + \boldsymbol{\Sigma})^{-1},$ which yields (\ref{eqn:equif}).\qed

	\subsection{Proof of Theorem \ref{thm:f}}
	\label{app:f}
	The density of the matrix-variate F-distribution is given by \cite{Gupta1999, Perlman1977}
	\begin{equation}
	K_F \frac{\mathrm{det}\left(\boldsymbol{F}\right)^{m_2-q}}{\mathrm{det}\left(\boldsymbol{I}_q + \boldsymbol{F}\right)^{m_1+m_2}},
	\end{equation}
	where $K_F$ is a proportionality constant that ensures that the pdf integrates to one. By performing a change of variables to the eigenvalues and eigenvectors of $\boldsymbol{F}$, and integrating over the eigenvectors \cite[Proposition 1.3.4]{Forrester2010}, we obtain the joint distribution of the real-valued eigenvalues, analogous to the results in \cite{Perlman1977}, \cite[Introduction]{Forrester2014}, as  follows:
	\begin{align}
	K_{\boldsymbol{\lambda}}\prod_{l = 1}^{q} \frac{\lambda_l^{m_2-q}}{(1+\lambda_l)^{m_1+m_2}} \prod_{1 \leq j < k \leq q} (\lambda_j - \lambda_k)^2, \label{eqn:joint}
	\end{align}
	where $\lambda_1 \geq \lambda_2 \geq \dots \geq \lambda_q$ denote the eigenvalues, and $K_{\boldsymbol{\lambda}}$ is a constant ensuring that the integral over the joint eigenvalue pdf is equal to one. Next, in order to obtain the marginal eigenvalue pdf, using Definition \ref{def:mepdf}, we integrate out the $q-1$ largest eigenvalues. The $q-1$ dimensional integration is performed using $\phi_i(\lambda_j) = \lambda_j^{i-1},$ $\psi_i(\lambda_j) = \lambda_j^{i-1},$ and $\xi(\lambda) = \dfrac{\lambda^{m_2-q}}{(1+\lambda)^{m_1+m_2}}$ in \cite[Corollary 1, Theorem 1]{Zanella2009} to obtain the expression in (\ref{eqn:f}). \qed

	\subsection{Proof of Theorem \ref{thm:f2}}
	\label{app:f2}
	The joint eigenvalue pdf of an F-distributed matrix $\boldsymbol{F}$ is given in (\ref{eqn:joint}) in Appendix \ref{app:f}. Next, starting from Definition \ref{def:oepdf}, the $l$-th marginal eigenvalue pdf is derived as follows:
	\begin{align}
	\scalebox{0.8}{\mbox{\ensuremath{\displaystyle p_l(\lambda_l; m_1, m_2, q)}}} &\scalebox{0.8}{\mbox{\ensuremath{\displaystyle = 
		\int_{\lambda_l}^{+\infty}\dots\int_{\lambda_2}^{+\infty} \left[\int_{0}^{\lambda_l}\dots\int_{0}^{\lambda_{q-1}}  p_{\boldsymbol{\lambda}}(\boldsymbol{\lambda}) \mathrm{d}\lambda_q\cdots\mathrm{d}\lambda_{l+1}\right] \mathrm{d}\lambda_1\cdots\mathrm{d}\lambda_{l-1}}}} \nonumber \\
	&\scalebox{0.8}{\mbox{\ensuremath{\displaystyle \overset{(a)}{=} K_{p_l} \sum_{n_1=1}^{q}\sum_{n_2\neq n_1=1}^{q}\dots\sum_{n_l\neq n_1,\dots,n_{l-1}=1}^{q} \sum_{m_1=1}^{q}\sum_{m_2\neq m_1=1}^{q}\dots\sum_{m_l\neq m_1,\dots,m_{l-1}=1}^{q}  s((n_1,\dots,n_l),(m_1,\dots,m_l))}}} \nonumber\\
	& \scalebox{0.8}{\mbox{\ensuremath{\displaystyle \quad\times \mathrm{det}\left(\boldsymbol{\Xi}\left(l, m_1, m_2, q, {\mathcal{I}}^{[l,(n_1,\dots,n_l)]},{\mathcal{I}}^{[l,(m_1,\dots,m_l)]}\right)\right) \varphi(n_l,m_l,\lambda_l)\prod_{i = 1}^{l-1} \int_{\lambda_l}^{+\infty} \varphi(n_i,m_i,\lambda) \mathrm{d} \lambda}}}, \label{eqn:pr1} 
	\end{align}
	where $K_{p_l}$ is a constant ensuring that the integral over the pdf equals one, and $(a)$ is obtained using \cite[Eq. (50)]{Zanella2009}, which can be used for computing the integral in (\ref{eqn:defoe}). Based on (\ref{eqn:joint}), the function $\varphi(n,m,x)$ in \cite[Eq. (50)]{Zanella2009} is written as $\varphi(n,m,x) = \phi_n(x)\psi_m(x)\xi(x),$ where $\phi_n(x) = x^{n-1},$ $\psi_m(x) = x^{m-1},$ and $\xi(x) = \dfrac{x^{m_2-q}}{(1+x)^{m_1+m_2}}.$ Furthermore, the multiple summations in (\ref{eqn:pr1}) can be equivalently reformulated in terms of the recurrence relation in (\ref{eqn:rec}). Lastly, special integrals are computed using the definition of the incomplete Beta function and the hypergeometric function from \cite{Olver2010} to obtain the expressions in Theorem \ref{thm:f2}. \qed

	\end{appendices}
	\bibliographystyle{IEEEtran}
	\bibliography{references}
	
\end{document}